\PassOptionsToPackage{colorlinks=true,linkcolor={blue!66!black},citecolor={blue!66!black},urlcolor=magenta}{hyperref}
\documentclass[numsec,webpdf,modern,medium,namedate]{oup-authoring-template}

\newif\ifsubmissionlayout
\submissionlayoutfalse

\onecolumn
\graphicspath{{figures/}}

\usepackage{amsfonts}
\usepackage{mathtools}
\usepackage{bm}
\usepackage{booktabs}
\usepackage{graphicx}
\usepackage{multirow}
\usepackage{tabularx}
\usepackage{longtable}
\usepackage{placeins}
\usepackage{tikz-cd}
\ifsubmissionlayout
  \usepackage[doublespacing]{setspace}
  \usepackage[fontsize=12pt]{fontsize}
  \usepackage[nomarkers,tablesonly]{endfloat}
\else
  \usepackage{setspace}
  \usepackage{needspace}
  \usepackage[final]{microtype}
\fi

\makeatletter
\AtBeginDocument{%
  \renewenvironment{proof}[1][\proofname]{\par\removelastskip%
    \pushQED{\ensuremath{\qed}}%
    \normalfont \topsep7.5\p@\@plus7.5\p@\relax%
    \trivlist%
    \item[\hskip\labelsep\bfseries\itshape #1\@addpunct{.}\enspace]\ignorespaces%
  }{%
    \popQED\endtrivlist\@endpefalse\par\smallskip%
  }%
}
\makeatother

\theoremstyle{thmstyleone}
\newtheorem{theorem}{Theorem}
\newtheorem{lemma}{Lemma}
\newtheorem{proposition}{Proposition}
\newtheorem{corollary}{Corollary}
\theoremstyle{thmstylethree}

\theoremstyle{thmstyletwo}
\newtheorem{remark}{Remark}

\newcolumntype{L}[1]{>{\raggedright\arraybackslash}p{#1}}
\newcolumntype{Y}{>{\raggedright\arraybackslash}X}

\algrenewcommand\algorithmicrequire{\textbf{Input:}}
\algrenewcommand\algorithmicensure{\textbf{Output:}}
\algtext*{EndIf}
\algtext*{EndFor}

\newcommand{\R}{\mathbb{R}}

\newcommand{\KL}{\operatorname{KL}}

\newcommand{\VR}{\operatorname{VR}}

\newcommand{\parlabel}[1]{\par\noindent\textbf{\textit{#1}}\quad}

\newcommand{\norm}[1]{\left\lVert #1 \right\rVert}

\newcommand{\pfstep}[1]{\medskip\noindent\textbf{\textit{#1.}}\quad}
\newcommand{\alttext}[1]{}

\begin{document}

\journaltitle{}
\DOI{}
\hidepubinfo
\copyrightyear{XXXX}
\pubyear{XXXX}
\access{}
\appnotes{}

\firstpage{1}

% Keep the running head within 50 characters.
\title[Persistent Laplacian Change-Point Detection]{Online Change-Point Detection with Persistent Laplacian Features}

\author[1]{Shiheng Nie}
\author[1,$\ast$]{Yunguang Yue}

\authormark{Nie and Yue}

\address[1]{\orgdiv{College of Science}, \orgname{Shihezi University}, \orgaddress{\street{Shihezi}, \postcode{832003}, \state{Xinjiang}, \country{China}}}

\corresp[$\ast$]{Address for correspondence. Yunguang Yue, College of Science, Shihezi University, Shihezi, 832003, Xinjiang, China. Email: \href{mailto:guangyy@shzu.edu.cn}{guangyy@shzu.edu.cn}}

% Fill these only when publisher metadata is needed.
% \received{Date}{0}{Year}
% \revised{Date}{0}{Year}
% \accepted{Date}{0}{Year}

% The submission abstract should be a single paragraph of at most 200 words,
% without citations or unexplained abbreviations.
\abstract{
Online change-point detection in high-dimensional nonlinear time series faces
two challenges. The underlying distributions are difficult to model, and
state changes are difficult to characterize. We propose persistent Laplacian
cumulative sum (PL-CUSUM) to address these challenges. PL-CUSUM maps
delay-embedded sliding windows to point clouds. It extracts persistent Betti
vectors and the positive spectra of persistent Laplacians from
the same Vietoris--Rips filtration. A ridge-whitened projection converts these
features into a scalar score. Page's CUSUM recursion then accumulates this score
over time. The positive spectra capture within-scale connectivity and geometric
information that persistent Betti vectors do not record. Under a finite-support
local model, we prove that the oracle upper bound on detection delay and a local
minimax lower bound have the same order. This common order is the logarithm of
the average run length constraint divided by the squared ridge-whitened
separation. We also establish finite-horizon false-alarm and expected-delay
bounds for plug-in whitened scores under weak dependence within each state.
The method has two phases.
Phase~I estimates the projection parameters, selects the feature configuration,
and calibrates the control limit. Phase~II updates the resulting CUSUM statistic
online. Experiments on simulated and real monitoring data show stable
false-alarm control and competitive detection performance.
}
\keywords{online change-point detection; persistent Laplacian; CUSUM; topological data analysis; false-alarm control}

\maketitle

\section{Introduction}
\label{sec:intro}

Online change-point detection seeks to minimize detection delay subject to a
prescribed false-alarm constraint. The average run length (ARL) is the standard
measure of false-alarm control. Lorden's criterion considers the worst-case
conditional delay, whereas Pollak's criterion considers the worst conditional
average delay~\citep{lorden1971,pollak1985}. These criteria give two standard
minimax formulations subject to an ARL constraint. When the pre- and
post-change distributions are known, Page's cumulative sum (CUSUM) procedure
recursively accumulates log-likelihood-ratio increments~\citep{page1954}. The
likelihood-ratio CUSUM is optimal under Lorden's
criterion~\citep{moustakides1986}. When the distributions are not fully known,
however, the log-likelihood ratio cannot be computed directly, and this
optimality result no longer applies. Modern monitoring data often exhibit
cross-channel dependence, serial dependence, and nonlinear
coupling~\citep{colosimo2024,qiu2022}. Specifying a parametric joint distribution
for such data is difficult and prone to misspecification. This difficulty
motivates nonparametric extensions of classical CUSUM procedures.

Existing methods address this problem in different ways. Window-limited CUSUM
updates its statistic using information from the most recent fixed-length
window~\citep{xie2023window_limited_cusum}. Online kernel CUSUM constructs
monitoring scores from kernel statistics~\citep{wei2022online_kernel_cusum}.
Graph-based methods form detection statistics from adjacency relations among
observations~\citep{chen2022graph}. These methods represent changes through a
window statistic, a kernel function, or a graph structure. In nonlinear
dynamical systems, however, the differences between the pre- and post-change
regimes can be subtle and complex. A feature that summarizes only one type of
discrepancy may miss changes in both the global and local geometry of the
observations within a window.

Topological data analysis (TDA) provides a multiscale representation of such
structural changes. For each sliding window, delay embedding maps the local
observations to a point cloud~\citep{perea2015}. As the scale increases, the
Vietoris--Rips filtration gradually connects nearby points. It represents the
geometric relations in the point cloud as a nested family of simplicial
complexes. Persistent homology tracks connected components, loops, and
higher-dimensional holes across this family. It yields persistent Betti numbers
and persistence diagrams. The persistent Laplacian (PL) further associates
spectral operators with the same filtration~\citep{wang2020psg}. These operators
capture within-scale connectivity and geometric information beyond homological
counts~\citep{memoli2022}.

We propose PL-CUSUM, a CUSUM procedure based on persistent Laplacian spectral
features, for online change-point detection in high-dimensional nonlinear time
series. Existing topological change-point methods typically construct detection
statistics from barcodes, persistence diagrams, or their discretized summaries.
In contrast, PL-CUSUM uses persistent Betti vectors and the positive spectra of
persistent Laplacians to construct monitoring scores within a common sequential
detection framework. PL-CUSUM separates filtration scale from monitoring time.
The PL features summarize how the structure changes with filtration scale within
each window. The CUSUM recursion accumulates detection evidence across the
sequence of windows. The tree-family construction gives point clouds with the
same persistent Betti vector but different positive PL spectra. Thus, the
positive spectrum can contain state information not captured by the persistent
Betti vector. We study two standard performance measures. The average
run length measures false-alarm control before the change, whereas Lorden's
criterion measures the worst-case detection delay after the change. Under a
finite-support local model, we derive a delay upper bound for an oracle CUSUM
based on log-likelihood-ratio increments. We also establish a local minimax
lower bound. Both bounds are of order $\log\mathcal A/\gamma_\rho^2$. Here,
$\mathcal A$ is the average run length constraint, and $\gamma_\rho$ is the
ridge-whitened separation. We also derive finite-horizon false-alarm and
expected-delay bounds for plug-in whitened scores under weak dependence within
each state. Following
\citet{qiu2022}, PL-CUSUM uses a two-phase procedure. Phase~I selects the feature
configuration, estimates the projection parameters, and calibrates the control
limit. Phase~II applies the resulting CUSUM recursion online. The synthetic
experiments examine spectral separation, oracle delay scaling, and the effect
of the Phase~I sample size. Experiments on SWaT and Electric Motor Vibrations
evaluate false-alarm control and detection performance for a high-dimensional
process and mechanical vibration signals.

\section{Related Work}
\label{sec:related_work}

\parlabel{Topology in change-point detection.}
Existing studies can be divided into two groups based on how they use temporal
information. The first group maps each window or observation to a topological
summary and then studies changes in the resulting sequence.
\citet{gideakatz2018landscapes} computed persistence landscapes for sliding
windows of financial time series to describe structural changes around market
crashes. \citet{islambekov2019} used Betti-number sequences in a nonparametric
change-point framework. \citet{zheng2023percept} proposed PERCEPT, which
represents persistence diagrams by persistence histograms and constructs an
online detection statistic. \citet{thomas2025} extracted topological covariates
from image sequences and used Bayesian logistic regression to infer the change
point and regression coefficients jointly. The second group keeps temporal
relations in the graph or filtration. \citet{stolz2017functional} constructed
time-varying functional networks from coupled time series and used persistent
homology to describe changes in synchronization structure.
\citet{myers2019persistent} applied persistent homology to graph representations
of time series. Their ordinal partition network encodes transitions between
ordinal states and helps distinguish dynamical states. \citet{tan2023attractor}
constructed an attractor network from delay-embedded trajectories and used
unusual Markov transitions to form an online change-point score.
\citet{myers2023zigzag} used zigzag persistence to link adjacent temporal-network
snapshots and track changes in network shape. These studies mainly use
Betti-number sequences, persistence diagrams, persistence histograms,
persistence landscapes, or related persistent-homology summaries. The role of
persistent Laplacian spectra in online change-point detection has not been
studied systematically.

\parlabel{Theory and applications of persistent Laplacians.}
Research on persistent Laplacians now covers algebraic constructions,
algorithms, stability, data representation, and machine-learning applications.
\citet{memoli2022} systematically studied their properties, algorithms, and
stability. \citet{liu2024_algebraic_stability} later established an algebraic
stability theorem for persistent Laplacians. \citet{davies2023} evaluated
persistent Laplacian spectra for data embedding, classification, and regression.
\citet{cottrell2024plpca} proposed persistent Laplacian-enhanced principal
component analysis (PLPCA) and applied it to microarray data.
\citet{cottrell2024knn_tpca} proposed k-nearest-neighbors-induced topological PCA
(kNN-tPCA) for dimension reduction and feature selection in single-cell RNA
sequencing data. They evaluated the resulting representations in clustering and
classification tasks. However, the use of PL spectral features in sequential
procedures for online change-point detection and related time-series monitoring
remains largely unexplored.

\parlabel{Statistical foundations for topological summaries.}
Existing studies provide several foundations for incorporating topological
summaries into statistical models. These foundations include stability theory,
statistical inference, and vector and kernel representations. For stability,
\citet{cohensteiner2007} established the bottleneck stability theorem for
persistence diagrams. For statistical inference,
\citet{chazal2014_convergence} studied bootstrap inference for persistence
diagrams and persistence landscapes. \citet{fasy2014} constructed confidence
sets for persistence diagrams. \citet{turner2014} studied Fr\'echet means of
distributions of persistence diagrams, including their computation and
convergence. For representations, \citet{bubenik2015} introduced persistence
landscapes and proved a strong law of large numbers and a central limit theorem.
\citet{patrangenaru2019object} used persistence landscapes to vectorize object
data for statistical analysis. \citet{adams2017persistence_images} proposed
persistence images as a stable vector representation. \citet{carriere2017}
constructed the sliced Wasserstein kernel for persistence diagrams. However,
existing work has not systematically studied sequential detection procedures
based on topological features, including persistent-homology summaries and
persistent Laplacian spectra, under false-alarm constraints and detection-delay
criteria.

\section{Method}
\label{sec:method}

\subsection{Online Monitoring and Point-Cloud Representation}
\label{sec:problem_setup}

Let $(x_t)_{t\ge1}$ be a stream of multivariate observations, with
$x_t\in\mathbb R^d$. Let $w$, $h_{\rm stride}$, and $L$ be positive integers
representing the window length, stride, and delay-embedding dimension,
respectively. Assume that $1\le L<w$. Each delay-coordinate vector contains
$L$ consecutive observations.

For $n\ge1$, when $x_{(n-1)h_{\rm stride}+w}$ arrives, the $n$th monitoring
window is
\[
    W_n
    =
    \bigl(
    x_{(n-1)h_{\rm stride}+1},
    \ldots,
    x_{(n-1)h_{\rm stride}+w}
    \bigr).
\]
Within $W_n$, define the delay-coordinate vector
\[
    z_{n,j}
    =
    \bigl(
    x_{(n-1)h_{\rm stride}+j}^\top,
    \ldots,
    x_{(n-1)h_{\rm stride}+j-L+1}^\top
    \bigr)^\top
    \in\mathbb R^{dL},
    \qquad j=L,\ldots,w.
\]
Set $m=w-L+1$. These vectors form the point cloud for the $n$th monitoring
window:
\[
    Z_n
    =
    (z_{n,L},z_{n,L+1},\ldots,z_{n,w})
    \in(\mathbb R^{dL})^m.
\]
As observations arrive, the monitoring window is updated every
$h_{\rm stride}$ observations. This produces the point-cloud sequence
$(Z_n)_{n\ge1}$, which serves as input to the online monitoring procedure.

Let $Q_0$ and $Q_1$ denote the pre- and post-change marginal distributions of
the window point clouds, with $Q_0\ne Q_1$. We define $\nu$ as the change point
at the window level: $Z_n$ has marginal distribution $Q_0$ for $n<\nu$ and
$Q_1$ for $n\ge\nu$.

For each state $k\in\{0,1\}$, assume that the raw observation process
$(x_t)_{t\ge1}$ is strictly stationary and geometrically $\beta$-mixing:
\[
    \beta_{x,k}(r)\le C_k\exp(-c_kr),\qquad r\ge1,
\]
where $C_k,c_k>0$, and $\beta_{x,k}(r)$ measures the dependence between the
past and the future separated by $r$ time points~\citep{bradley2005}.

Proposition~\ref{prop:overlap_decimation_main} in the Appendix shows that the
point-cloud sequence remains strictly stationary and that its $\beta$-mixing
coefficients satisfy
$\beta_{Z,k}(r)\le\beta_{x,k}((r h_{\rm stride}-w+1)_+)$.
Given independent Phase~I data, the point-cloud features and the PL-CUSUM
one-step increments satisfy the same mixing bound. Thus, overlapping windows
only change the lag at which dependence begins to decay. If the raw
observations are independent, let $g=\lceil w/h_{\rm stride}\rceil$. The point
clouds, feature vectors, and one-step increments are then all
$(g-1)$-dependent sequences.

Control-limit calibration uses a moving-block bootstrap (MBB) to resample
state~0 score paths and preserve local dependence in the calibration
sequence~\citep{kunsch1989}. The block lengths and data partitions are given
in Appendix~\ref{app:protocol_details}. The independent and identically
distributed (i.i.d.) finite-support model in
Section~\ref{sec:finite_support_observation} provides the oracle local
benchmark. The plug-in analysis in Section~\ref{sec:plugin_guarantee} allows
the statewise weak dependence described above.

Let $\mathscr F_n^Z=\sigma(Z_1,\ldots,Z_n)$ be the $\sigma$-algebra generated
by the first $n$ window point clouds. The alarm time $T$ is a stopping time with respect to
$\{\mathscr F_n^Z\}_{n\ge1}$. Let $\mathbb P_\infty$ and $\mathbb E_\infty$
denote probability and expectation under the no-change regime. Let
$\mathbb P_\nu$ and $\mathbb E_\nu$ denote probability and expectation when
the change occurs at the $\nu$th window.

False-alarm risk is measured by the average run length
$\mathrm{ARL}_0(T)=\mathbb E_\infty T$ or the finite-horizon false-alarm
probability $\mathrm{FAR}_N(T)=\mathbb P_\infty(T\le N)$. Post-change delay is
measured by Lorden's worst-case criterion:
\[
    \bar d(T)
    =
    \sup_{\nu\ge1}
    \operatorname*{ess\,sup}
    \mathbb E_\nu
    \left[
        (T-\nu+1)_+
        \,\middle|\,
        \mathscr F_{\nu-1}^Z
    \right].
\]

\subsection{PL Feature Construction for Window Point Clouds}
\label{sec:pl_spectral_features}

This subsection constructs two types of features from each window point cloud:
persistent Betti vectors and positive-spectrum vectors of persistent
combinatorial Laplacians. For background on persistent homology, see
\citet{edelsbrunner2010}. For definitions and properties of persistent
Laplacians, see \citet{wang2020psg,memoli2022}.

For any scale $\varepsilon>0$, define the Vietoris--Rips complex of $Z_n$ on
the vertex set $\{L,\ldots,w\}$ by
\[
    \VR_\varepsilon(Z_n)
    =
    \left\{
        \sigma\subseteq\{L,\ldots,w\}:
        \|z_{n,i}-z_{n,j}\|_2\le\varepsilon
        \text{ for all }i,j\in\sigma
    \right\}.
\]
For a finite point cloud, the Vietoris--Rips complex changes only when
$\varepsilon$ crosses a pairwise distance. Thus, finitely many critical scales
determine the full filtration~\citep{edelsbrunner2010}. We use an increasing
scale grid $0<\varepsilon_1<\cdots<\varepsilon_K$.
For $1\le a\le b\le K$, call $(a,b)$ a scale pair associated with
$(\varepsilon_a,\varepsilon_b)$. Since
$\VR_{\varepsilon_a}(Z_n)\subseteq\VR_{\varepsilon_b}(Z_n)$, the inclusion
induces a linear map on the $q$th homology groups with real
coefficients. For each integer $q\ge0$, define the $q$th persistent Betti
number by
\[
    \beta_q^{a,b}(Z_n)
    =
    \dim\operatorname{im}
    \left\{
        H_q\bigl(\VR_{\varepsilon_a}(Z_n);\mathbb R\bigr)
        \longrightarrow
        H_q\bigl(\VR_{\varepsilon_b}(Z_n);\mathbb R\bigr)
    \right\}.
\]
A Vietoris--Rips complex on $m$ vertices has dimension at most $m-1$.
Therefore, $\beta_q^{a,b}(Z_n)=0$ for $q\ge m$. Choose a homological
truncation level $q_{\max}\in\{0,\ldots,m-1\}$ and a set of scale pairs
$\mathcal S\subseteq\{(a,b):1\le a\le b\le K\}$.
For each $(a,b)\in\mathcal S$, define the persistent Betti vector by
\[
    B^{a,b}(Z_n)
    =
    \bigl[\beta_q^{a,b}(Z_n)\bigr]_{q=0}^{q_{\max}}
    \in\mathbb R^{q_{\max}+1}.
\]
We next construct the positive spectra of persistent combinatorial Laplacians
on the same Vietoris--Rips filtration. For integers $q\ge0$ and
$r=1,\ldots,K$, let
\[
    C_q^r
    =
    C_q\bigl(\VR_{\varepsilon_r}(Z_n);\mathbb R\bigr),
    \qquad
    \partial_q^r:C_q^r\to C_{q-1}^r
\]
denote the real $q$-chain group and boundary operator of
$\VR_{\varepsilon_r}(Z_n)$, respectively. Set $C_{-1}^r=\{0\}$ and
$\partial_0^r=0$. Equip each chain group with the inner product that makes
the oriented simplex basis orthonormal.

For $1\le a\le b\le K$, define
\[
\begin{aligned}
    \mathcal C_q^{a,b}
    &:=
    \{c\in C_q^b:\partial_q^b c\in C_{q-1}^a\},\\
    \delta_q^{a,b}
    &:=
    \left.\partial_q^b\right|_{\mathcal C_q^{a,b}}
    :
    \mathcal C_q^{a,b}\to C_{q-1}^a .
\end{aligned}
\]
The $q$th persistent combinatorial Laplacian of $Z_n$ at the scale pair
$(a,b)$ is the following operator on $C_q^a$:
\[
    \mathcal L_q^{a,b}(Z_n)
    =
    \delta_{q+1}^{a,b}(\delta_{q+1}^{a,b})^*
    +
    (\partial_q^a)^*\partial_q^a.
\]
Here, $*$ denotes the adjoint with respect to the inner products defined
above. If $a=b$, then $\mathcal L_q^{a,a}(Z_n)$ is the standard $q$th
combinatorial Laplacian of $\VR_{\varepsilon_a}(Z_n)$.

If $q=0$, all complexes in the Vietoris--Rips filtration have the same vertex
set. Therefore, $\mathcal L_0^{a,b}(Z_n)$ is the graph Laplacian of
$\VR_{\varepsilon_b}(Z_n)$, and
$\beta_0^{a,b}(Z_n)=\beta_0\bigl(\VR_{\varepsilon_b}(Z_n)\bigr)$.

The persistent combinatorial Laplacian satisfies
$\dim\ker\mathcal L_q^{a,b}(Z_n)=\beta_q^{a,b}(Z_n)$~\citep{memoli2022}.
Thus, the multiplicity of its zero eigenvalue equals the corresponding
persistent Betti number.

Arrange the positive eigenvalues of $\mathcal L_q^{a,b}(Z_n)$ in decreasing
order. Retain the first $J$ eigenvalues and pad the vector with zeros if fewer
than $J$ are available. Denote the resulting vector by
$\lambda_q^{a,b}(Z_n)\in\mathbb R^J$.
If $q\ge m$, then $C_q^a=\{0\}$ and
$\lambda_q^{a,b}(Z_n)=0_J$, where $0_J$ is the zero vector in
$\mathbb R^J$.

For each scale pair $(a,b)$, define the positive-spectrum vector by
\[
    \Lambda^{a,b}(Z_n)
    =
    \bigl[\lambda_q^{a,b}(Z_n)\bigr]_{q=0}^{q_{\max}}
    \in\mathbb R^{J(q_{\max}+1)}.
\]
The corresponding full PL feature is
\[
    \Phi^{a,b}(Z_n)
    =
    \bigl(B^{a,b}(Z_n),\Lambda^{a,b}(Z_n)\bigr)
    \in\mathbb R^{(J+1)(q_{\max}+1)}.
\]
\subsection{Whitened Score and CUSUM Statistic}
\label{sec:score_statistic}

Let $F:(\mathbb R^{dL})^m\to\mathbb R^p$ be a Borel-measurable
finite-dimensional feature map on point clouds. In PL-CUSUM, $F$ can be
$\Phi^{a,b}$, $B^{a,b}$, or $\Lambda^{a,b}$. These
choices correspond to the full PL feature, the persistent Betti vector, and
the PL positive-spectrum vector, respectively. For any such map, set
$X_n=F(Z_n)\in\mathbb R^p$.

With the point-cloud size, scale grid, scale-pair set, homological truncation
level, and spectral truncation level fixed,
Proposition~\ref{prop:finite_image_pl_main} shows that the distribution of
$X_n$ has finite support. The local model in
Section~\ref{sec:theory_interface} describes changes in the probabilities of
these feature values. The raw observations $x_t$ and the window point clouds
$Z_n$ may still take values in continuous spaces.

\noindent\textbf{Ridge-whitened separation.}
\label{def:separation}
Under state $k\in\{0,1\}$, let
\[
    \mu_k=\mathbb E_kX_n,
    \qquad
    \Sigma_k=\operatorname{Var}_k(X_n),
    \qquad
    \Sigma=\frac{\Sigma_0+\Sigma_1}{2}.
\]
To measure the difference between the state means on the pooled covariance
scale, let $\rho>0$ be a ridge parameter and define
\begin{equation}
\label{eq:ridge_whitened_separation}
    W_\rho=(\Sigma+\rho I_p)^{-1/2},
    \qquad
    \gamma_\rho
    =
    \bigl\|W_\rho(\mu_1-\mu_0)\bigr\|_2.
\end{equation}
Here, $I_p$ is the $p$-dimensional identity matrix. If $\rho=0$ and $\Sigma$
is positive definite, then
$\gamma_0^2=(\mu_1-\mu_0)^\top\Sigma^{-1}(\mu_1-\mu_0)$. Thus, $\gamma_0$ is
the Mahalanobis distance based on the pooled covariance
matrix~\citep{mahalanobis1936}.

When $\gamma_\rho>0$, define
$u=W_\rho(\mu_1-\mu_0)/\gamma_\rho$.
When $\gamma_\rho=0$, set $u=0$. The oracle scalar score and its CUSUM
reference value are
\[
    s_n=u^\top W_\rho X_n,
    \qquad
    c
    =
    \frac12u^\top W_\rho(\mu_0+\mu_1)
    =\frac{\mathbb E_0s_n+\mathbb E_1s_n}{2}.
\]
It follows that
\[
    \mathbb E_0(s_n-c)
    =
    -\frac{\gamma_\rho}{2},
    \qquad
    \mathbb E_1(s_n-c)
    =
    \frac{\gamma_\rho}{2}.
\]
Thus, the reference value gives equal drift magnitudes under the two states.
Among all $r\in[\mathbb E_0s_n,\mathbb E_1s_n]$, the midpoint $c$ maximizes
$\min\{r-\mathbb E_0s_n,\mathbb E_1s_n-r\}$.

In Phase~I, let
$X_{k1},\ldots,X_{kN_{\rm tr}}$ be the $N_{\rm tr}$ training feature vectors
from state $k$. Compute
\[
    \widehat\mu_k
    =
    \frac{1}{N_{\rm tr}}\sum_{j=1}^{N_{\rm tr}}X_{kj},
    \qquad
    \widehat\Sigma_k
    =
    \frac{1}{N_{\rm tr}-1}\sum_{j=1}^{N_{\rm tr}}
    (X_{kj}-\widehat\mu_k)(X_{kj}-\widehat\mu_k)^\top.
\]
and set $\widehat\Sigma=(\widehat\Sigma_0+\widehat\Sigma_1)/2$.
Before whitening, pool the training vectors from both states and remove
coordinates whose standard deviation is at most $10^{-10}$. If this rule
would remove every coordinate, retain all coordinates. After this step, $p$
denotes the number of retained coordinates.

For computation, let $\rho_{\rm mult}>0$ be a ridge multiplier and set the
ridge floor to $\rho_{\min}=10^{-10}$. Using the average coordinate variance
as the covariance scale, define the population and sample ridge values by
\[
\begin{aligned}
    \rho
    & =
    \rho_{\rm mult}
    \max\left\{\frac{\operatorname{tr}(\Sigma)}{p},\rho_{\min}\right\},
    &\qquad
    \widehat\rho
    & =
    \rho_{\rm mult}
    \max\left\{\frac{\operatorname{tr}(\widehat\Sigma)}{p},\rho_{\min}\right\},\\
    \widehat W
    & =
    (\widehat\Sigma+\widehat\rho I_p)^{-1/2},
    &\qquad
    \widehat\gamma
    & =
    \bigl\|\widehat W(\widehat\mu_1-\widehat\mu_0)\bigr\|_2.
\end{aligned}
\]
The maximum with $\rho_{\min}$ keeps the ridge value positive when the average
coordinate variance is zero or close to zero.

If $\widehat\gamma>0$, set
$\widehat u=\widehat W(\widehat\mu_1-\widehat\mu_0)/\widehat\gamma$.
Otherwise, set $\widehat u=0$. The Phase~II plug-in score and CUSUM reference
value are
\[
    \widehat s_n
    =
    \widehat u^\top\widehat W X_n,
    \qquad
    \widehat c
    =
    \frac12
    \widehat u^\top\widehat W
    (\widehat\mu_0+\widehat\mu_1).
\]

The online statistic and stopping time are
\[
    S_0=0,
    \qquad
    S_n
    =
    \max\{0,S_{n-1}+\widehat s_n-\widehat c\},
    \qquad
    T_\eta
    =
    \inf\{n\ge1:S_n\ge\eta\}.
\]
We use the convention $\inf\varnothing=\infty$. Control-limit calibration is
described in Section~\ref{sec:training_calibration_eval}. The false-alarm and
delay bounds for finite training samples are given in
Section~\ref{sec:plugin_guarantee}.

\subsection{Training--Calibration--Evaluation Workflow and Algorithms}
\label{sec:training_calibration_eval}

PL-CUSUM uses a two-phase workflow. Training, calibration, selection, and
evaluation use disjoint data partitions. For continuous sequences, guard
intervals separate adjacent partitions.

\noindent\textbf{Phase~I: training, calibration, and selection.}
\emph{Training.}
Each prespecified configuration
consists of a feature input, delay-embedding dimension $L$, spectral truncation
level $J$, ridge multiplier $\rho_{\rm mult}$, and scale-pair set $\mathcal S$.
The feature input can be the full PL feature, the PL positive-spectrum vector,
or the persistent Betti vector. For each configuration, the training segment
estimates $\widehat\gamma^{a,b}$ for every $(a,b)\in\mathcal S$ as described in
Section~\ref{sec:score_statistic}. It selects the scale pair with the largest
$\widehat\gamma^{a,b}$ and computes the corresponding $\widehat W$,
$\widehat u$, and $\widehat c$.

\noindent\emph{Calibration.}
The calibration segment uses only state-0 data. For the scale pair selected
during training, it computes the scalar scores. It then uses the MBB to
generate $M_{\rm cal}$ state-0 score paths of length
$N$~\citep{kunsch1989}. Each path starts from $S_0=0$ and follows the same
CUSUM recursion. For path $m$, let $M^{(m)}=\max_{t\le N}S_t^{(m)}$.
The empirical finite-horizon false-alarm probability at a control limit
$\eta$ is
\[
    \widehat{\mathrm{FAR}}_N(\eta)
    =
    \frac{1}{M_{\rm cal}}
    \sum_{m=1}^{M_{\rm cal}}
    \mathbf 1\{M^{(m)}\ge\eta\}.
\]
The path maxima determine the possible control-limit values. We choose
$\widehat\eta$ as the smallest value that satisfies
$\widehat{\mathrm{FAR}}_N(\widehat\eta)\le\alpha$.
If no control limit satisfies this condition, the configuration is marked as
failing the false-alarm budget.

\noindent\emph{Selection.}
Each sequence in the selection segment consists of a state-0 prefix followed
by a post-change state-1 segment. A configuration is eligible only if it
passes calibration and keeps the pre-change false-alarm proportion within its
budget. Among the eligible configurations, we first select the one with the
largest proportion of sequences that have no pre-change alarm and a
post-change detection. Ties are resolved by the smaller median delay. If a tie
remains, we select the configuration with the larger post-change detection
proportion. The estimation error caused by finite training samples is analysed
in Section~\ref{sec:plugin_guarantee}.

\noindent\textbf{Phase~II: online monitoring.}
Phase~II uses the feature map, projection parameters, and control limit
determined in Phase~I. It performs online monitoring using the recursion in
Section~\ref{sec:score_statistic}. Algorithm~\ref{alg:pl_cusum} presents the
training, calibration, and online recursion for a given feature input and
parameter combination. The selection step is performed outside the algorithm.
The experimental metrics and baseline calibration procedures are described in
Section~5.

\begin{algorithm}[H]
\small
\caption{PL-CUSUM: training, calibration, and online recursion for a given configuration}
\label{alg:pl_cusum}
\begin{algorithmic}[1]
\Require Training point clouds
$\mathcal Z_r^{\rm tr}=\{Z_{rj}^{\rm tr}\}_{j=1}^{N_{\rm tr}}$ for states
$r\in\{0,1\}$, state-0 calibration point clouds $\mathcal Z_{\rm cal}$, and
the Phase~II point-cloud stream $(Z_n)_{n\ge1}$
\Statex Feature family $\{F^{a,b}:(a,b)\in\mathcal S\}$ and parameters
$\rho_{\rm mult}$, $\alpha$, $N$, and $M_{\rm cal}$
\Ensure Selected $F$, $(\widehat W,\widehat u,\widehat c)$,
$\widehat\eta$, and stopping time $T$, or a flag indicating that the
false-alarm budget is not met

\Statex \textbf{Phase~I: training and scale selection}
\ForAll{$(a,b)\in\mathcal S$}
    \State Compute $X_{rj}^{a,b}=F^{a,b}(Z_{rj}^{\rm tr})$ for
    $r\in\{0,1\}$
    \State Remove coordinates whose standard deviation is at most $10^{-10}$
    across the pooled training vectors
    \State Compute $\widehat\mu_r^{a,b}$, $\widehat\Sigma^{a,b}$,
    $\widehat W^{a,b}$, $\widehat u^{a,b}$, $\widehat c^{a,b}$, and
    $\widehat\gamma^{a,b}$
\EndFor
\State $(a^*,b^*)\gets
\arg\max_{(a,b)\in\mathcal S}\widehat\gamma^{a,b}$
\State $F\gets F^{a^*,b^*}$ and
$(\widehat W,\widehat u,\widehat c)\gets
(\widehat W^{a^*,b^*},\widehat u^{a^*,b^*},\widehat c^{a^*,b^*})$

\Statex \textbf{Phase~I: control-limit calibration}
\State Compute
$\widehat s_i^{\rm cal}=\widehat u^\top\widehat W F(Z_i^{\rm cal})$
\For{$m=1,\ldots,M_{\rm cal}$}
    \State Block-resample the calibration scores to obtain a length-$N$ path
    $(\widehat s_1^{(m)},\ldots,\widehat s_N^{(m)})$
    \State Set $S_0^{(m)}\gets0$ and, for $t=1,\ldots,N$, update
    $S_t^{(m)}\gets
    \max\{0,S_{t-1}^{(m)}+\widehat s_t^{(m)}-\widehat c\}$
    \State $M^{(m)}\gets\max_{t\le N}S_t^{(m)}$
\EndFor
\State Compute $\widehat{\mathrm{FAR}}_N(\eta)$ for each control limit
determined by $\{M^{(m)}\}_{m=1}^{M_{\rm cal}}$
\If{no control limit satisfies
$\widehat{\mathrm{FAR}}_N(\eta)\le\alpha$}
    \State Mark the configuration as failing the false-alarm budget and \Return
\EndIf
\State $\widehat\eta\gets
\min\{\eta:\widehat{\mathrm{FAR}}_N(\eta)\le\alpha\}$

\Statex \textbf{Phase~II: online recursion}
\State $n\gets0$ and $S_0\gets0$
\While{\textnormal{true}}
    \State $n\gets n+1$ and read the $n$th window point cloud $Z_n$
    \State $X_n\gets F(Z_n)$ and
    $\widehat s_n\gets\widehat u^\top\widehat W X_n$
    \State $S_n\gets\max\{0,S_{n-1}+\widehat s_n-\widehat c\}$
    \If{$S_n\ge\widehat\eta$}
        \State $T\gets n$, raise an alarm, and \Return $T$
    \EndIf
\EndWhile
\end{algorithmic}
\end{algorithm}

\FloatBarrier

\section{Theory}
\label{sec:theory_interface}

\subsection{Local Model and Separation}
\label{sec:finite_support_observation}

Let $X_t=F(Z_t)\in\{x_1,\ldots,x_M\}\subset\mathbb R^p$ be the feature
vector at monitoring time $t$. If the change occurs at $\nu$,
then $X_t$ is independently distributed according to $P_0$ for $t<\nu$ and
according to $P_\theta$ for $t\ge\nu$. For $i=1,\ldots,M$, let $P_0(i)$ be
the probability mass assigned to the support point $x_i$, and assume that
$P_0(i)>0$. Define $\mathscr F_t^X=\sigma(X_1,\ldots,X_t)$. All alarm times below are stopping times with respect to
$\{\mathscr F_t^X\}_{t\ge1}$. The ARL and Lorden delay are defined as in
Section~\ref{sec:problem_setup}.

We represent the post-change distribution by an exponential tilting family
with baseline distribution $P_0$. For $\theta\in\mathbb R^p$, define
\[
    \psi_x(\theta)
    =
    \log\left\{
        \sum_{i=1}^M P_0(i)\exp(\theta^\top x_i)
    \right\},
    \qquad
    P_\theta(i)
    =
    P_0(i)\exp\{\theta^\top x_i-\psi_x(\theta)\},
    \qquad i=1,\ldots,M.
\]
By construction, $P_\theta$ is absolutely continuous with respect to $P_0$.
Its mean and covariance matrix are
\[
    \mu_\theta^x
    =
    \sum_{i=1}^M P_\theta(i)x_i,
    \qquad
    \Sigma_\theta^x
    =
    \sum_{i=1}^M
    P_\theta(i)
    (x_i-\mu_\theta^x)(x_i-\mu_\theta^x)^\top.
\]
Following equation~\eqref{eq:ridge_whitened_separation}, define the
ridge-whitened separation between $P_0$ and $P_\theta$ by
\[
    \gamma_\rho(\theta)
    =
    \left\|
        \left(
            \frac{\Sigma_0^x+\Sigma_\theta^x}{2}
            +\rho I_p
        \right)^{-1/2}
        (\mu_\theta^x-\mu_0^x)
    \right\|_2.
\]
Set $G_\rho=\Sigma_0^x(\Sigma_0^x+\rho I_p)^{-1}\Sigma_0^x$.
Let $h$ be a local direction satisfying $h^\top G_\rho h>0$, and set
$\theta_n=h/\sqrt n$. The corresponding local quantities are
\[
    \gamma_\rho^2(\theta_n)
    =
    n^{-1}h^\top G_\rho h+o(n^{-1}),
    \qquad
    I_{h,n}^x
    =
    \KL(P_{\theta_n}\|P_0),
    \qquad
    \kappa_\rho(h)
    =
    \frac{h^\top\Sigma_0^x h}
    {h^\top G_\rho h}.
\]
When $F=\Phi$, denote the corresponding quantities by
$\psi_\Phi$, $\mu_\theta^\Phi$, $\Sigma_\theta^\Phi$,
$\gamma_{\Phi,\rho}$, $I_{h,n}^\Phi$, and
$\kappa_{\Phi,\rho}(h)$. To express the log-likelihood ratio in the next
subsection, let $U_t\in\{1,\ldots,M\}$ be the support-point index of $X_t$.
Thus, $X_t=x_{U_t}$.

\subsection{Oracle Detection Bounds}
\label{sec:local_bounds_main}

Define the log-likelihood-ratio (LLR) increment by
$\ell_{h,n}(i)=\log\{P_{\theta_n}(i)/P_0(i)\}$ for $i=1,\ldots,M$.
The corresponding oracle LLR-CUSUM recursion and stopping time are
\[
    R_0=0,
    \qquad
    R_m
    =
    \max\{0,R_{m-1}+\ell_{h,n}(U_m)\},
    \qquad
    \tau_\eta
    =
    \inf\{m:R_m\ge\eta\}.
\]

Let $\mathcal C_{\mathcal A}$ be the class of
$\{\mathscr F_t^X\}_{t\ge1}$-stopping times that satisfy
$\mathbb E_\infty\tau\ge\mathcal A$. Define Lorden's delay and the minimax
risk by
\[
\begin{aligned}
    \bar d_{\theta_n}(\tau)
    &=
    \sup_{\nu\ge1}
    \operatorname*{ess\,sup}
    \mathbb E_{\nu,\theta_n}
    \left[
        (\tau-\nu+1)_+
        \,\middle|\,
        \mathscr F_{\nu-1}^X
    \right],\\
    R_{n,\mathcal A}(h)
    &=
    \inf_{\tau\in\mathcal C_{\mathcal A}}
    \bar d_{\theta_n}(\tau).
\end{aligned}
\]

The theorem below uses the full PL feature $F=\Phi$. Let $\rho>0$, and let
$h$ be a local direction satisfying $h^\top G_\rho h>0$. Set
$\theta_{n_k}=h/\sqrt{n_k}$ and write
\[
    \gamma_k
    =
    \gamma_{\Phi,\rho}(\theta_{n_k}),
    \qquad
    C_{\Phi,\rho}
    =
    1+\frac{\rho}{\lambda_{\min}^+(\Sigma_0^\Phi)},
    \qquad
    \eta_k
    =
    \log\mathcal A_k.
\]
Here, $\lambda_{\min}^+(\Sigma_0^\Phi)$ is the smallest positive eigenvalue
of $\Sigma_0^\Phi$.

\begin{theorem}
\label{thm:ph_pl_local_minimax_interface}
Suppose that $\mathcal A_k\to\infty$, $n_k\to\infty$, and
\[
    \frac{\log\mathcal A_k}{I_{h,n_k}^{\Phi}}
    =o(\mathcal A_k),
    \qquad
    \log n_k=o(\log\mathcal A_k).
\]
Then the oracle LLR-CUSUM stopping time $\tau_{\eta_k}$ and the local
minimax risk satisfy
\[
\begin{aligned}
    \mathrm{ARL}_0(\tau_{\eta_k})
    &\ge
    \mathcal A_k,\\
    \bar d_{\theta_{n_k}}(\tau_{\eta_k})
    &\le
    \frac{2\log\mathcal A_k}{\gamma_k^2}\{1+o(1)\},\\
    R_{n_k,\mathcal A_k}(h)
    &\ge
    \frac{2}{\kappa_{\Phi,\rho}(h)}
    \frac{\log\mathcal A_k}{\gamma_k^2}\{1+o(1)\}\\
    &\ge
    \frac{2}{C_{\Phi,\rho}}
    \frac{\log\mathcal A_k}{\gamma_k^2}\{1+o(1)\}.
\end{aligned}
\]
\end{theorem}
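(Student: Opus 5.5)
The plan is to reduce all three bounds to two classical facts for i.i.d. likelihood-ratio CUSUM, and then translate the Kullback--Leibler number into the ridge-whitened separation with a local expansion.

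\textbf{Step 1 (ARL).} For the oracle LLR-CUSUM with threshold $\eta_k=\log\mathcal A_k$, the statistic $e^{R_m}$ is dominated by the Shiryaev--Roberts statistic. That statistic satisfies $\mathbb E_\infty(\mathrm{SR}_m-m)=0$ because $\mathbb E_0 e^{\ell_{h,n}(U)}=1$. Optional stopping then gives $\mathbb E_\infty\tau_{\eta}\ge e^{\eta}=\mathcal A_k$. This is Lorden's standard argument and needs only absolute continuity, which holds since $P_0(i)>0$.

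\textbf{Step 2 (upper bound on delay).} The worst case in Lorden's criterion for CUSUM is attained at $R_{\nu-1}=0$. After the change, the increments $\ell_{h,n}(U_m)$ are i.i.d. with mean $I_{h,n}^\Phi$. They are bounded by $2\max_i|\theta_n^\top x_i|=O(n^{-1/2})$, and their variance is $O(n^{-1})$. By Wald's identity, $\bar d\le(\eta+\text{overshoot})/I_{h,n}$, and the overshoot is $O(n^{-1/2})$. Hence $\bar d\le\log\mathcal A_k/I_{h,n_k}\{1+o(1)\}$. The negligibility of the overshoot relative to $\eta_k$ follows from $\eta_k\to\infty$.

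\textbf{Step 3 (local expansions).} Because the support is finite, $\psi_\Phi$ is analytic. Expanding at $0$ gives the following three relations:
\[
I_{h,n}=\tfrac{1}{2n}h^\top\Sigma_0^\Phi h+O(n^{-3/2}),\qquad \mu_{\theta_n}-\mu_0=n^{-1/2}\Sigma_0^\Phi h+O(n^{-1}),\qquad \Sigma_{\theta_n}=\Sigma_0+O(n^{-1/2}).
\]
Consequently $\gamma_k^2=n_k^{-1}h^\top G_\rho h\{1+o(1)\}$, as already stated in the paper. The ratio of KL to separation is then $I_{h,n}/\gamma^2\to\kappa_{\Phi,\rho}(h)/2$. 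Combining with Step 2 gives $\bar d\le 2\log\mathcal A_k/\{\kappa\gamma_k^2\}\{1+o(1)\}$. This is at most $2\log\mathcal A_k/\gamma_k^2$ provided $\kappa\ge1$. To check this, diagonalize $\Sigma_0$ with eigenvalues $\lambda_j$. Then $G_\rho$ has eigenvalues $\lambda_j^2/(\lambda_j+\rho)\le\lambda_j$, so $h^\top G_\rho h\le h^\top\Sigma_0h$. On the range of $\Sigma_0$, the ratio $\lambda_j/(\lambda_j^2/(\lambda_j+\rho))=1+\rho/\lambda_j\le C_{\Phi,\rho}$. This yields $1\le\kappa\le C_{\Phi,\rho}$, since $h^\top G_\rho h>0$ forces a nonzero projection onto the range. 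The final inequality in the theorem also follows from this.

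\textbf{Step 4 (lower bound).} I would invoke Lai's (1998) lower bound for i.i.d. pre- and post-change laws, applied uniformly along the triangular array: $\inf_{\tau\in\mathcal C_{\mathcal A}}\bar d\ge\log\mathcal A/I\{1+o(1)\}$. The standard proof takes a candidate window length $m=(1-\epsilon)\log\mathcal A/I$. It shows that the LLR sum over $m$ steps exceeds $(1-\epsilon/2)\log\mathcal A$ with vanishing probability, using Chebyshev on the bounded increments of variance $O(I)$. It then uses the change-of-measure bound $\mathbb P_\nu(\nu\le\tau<\nu+m)\le e^{(1-\epsilon/2)\log\mathcal A}\mathbb P_\infty(\ldots)+o(1)$. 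Finally, it uses the fact that $\mathbb E_\infty\tau\ge\mathcal A$ implies some $\nu$ with $\mathbb P_\infty(\tau<\nu+m\mid\tau\ge\nu)\le m/\mathcal A$. The condition $\log\mathcal A_k/I_{h,n_k}=o(\mathcal A_k)$ makes $m/\mathcal A\to0$ with room to spare, and $\log n_k=o(\log\mathcal A_k)$ ensures $\log\mathcal A_k\cdot I_{h,n_k}\to$ a regime where the Chebyshev deviation is $o(\log\mathcal A_k)$. Dividing by $I\sim\kappa\gamma^2/2$ gives the stated bound.

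I expect the main obstacle to be Step 4: making Lai's argument uniform in the triangular array, where $I_{h,n_k}\to0$. In particular, the blocking choice must be made so that both the concentration error and the factor $m/\mathcal A$ vanish under exactly the two rate conditions assumed.
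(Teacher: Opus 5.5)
Your route is the same as the paper's: the Shiryaev--Roberts martingale for the ARL (Lemma~\ref{lem:cusum_arl}), and Wald's identity with an $O(n^{-1/2})$ overshoot for the delay (Proposition~\ref{prop:pl_cusum_upper_lan_main}). You also use the same local expansion and eigenvalue comparison for $I/\gamma^2\to\kappa/2$ and $1\le\kappa\le C_{\Phi,\rho}$ (Corollary~\ref{cor:key_local_relations}), and a Lai-type blocking and change-of-measure argument for the lower bound (Lemma~\ref{lem:low_alarm_block_main}, Proposition~\ref{prop:local_minimax_lower_main}). Steps 1--3 are fine. The genuine problem is in Step~4. You assign the wrong roles to the two rate conditions, and your checklist (``the concentration error and the factor $m/\mathcal A$ vanish'') does not close the argument.

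Combining your change-of-measure bound with the blocking lemma gives
\[
\mathbb P_\nu(\tau<\nu+m\mid\tau\ge\nu)\le e^{(1-\epsilon/2)\log\mathcal A_k}\,\frac{m}{\mathcal A_k}+o(1)=m\,\mathcal A_k^{-\epsilon/2}+o(1).
\]
So the quantity that must vanish is $m\,\mathcal A_k^{-\epsilon/2}$, not $m/\mathcal A_k$. Since $m\asymp n_k\log\mathcal A_k$, this needs $\log m=\log n_k+\log\log\mathcal A_k+O(1)=o(\log\mathcal A_k)$. That is exactly where $\log n_k=o(\log\mathcal A_k)$ is used; it is the paper's ``$\ell_k=\mathcal A_k^{o(1)}$'' step. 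Knowing only $m/\mathcal A_k\to0$ is not enough. For instance, if $n_k\asymp\mathcal A_k^{1/2}$, then $m/\mathcal A_k\to0$ but $m\,\mathcal A_k^{-\epsilon/2}\to\infty$ for every $\epsilon<1$, and the argument gives nothing.

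Conversely, the Chebyshev step needs no rate condition. The block LLR has variance $m\cdot O(n_k^{-1})=O(\log\mathcal A_k)$, while the required deviation is $(\epsilon/2)\log\mathcal A_k$, so the error is $O(1/\log\mathcal A_k)$ regardless of how $n_k$ grows. Also, $m=\mathcal A_k^{o(1)}$ already implies $m=o(\mathcal A_k)$ because $I_{h,n_k}\asymp n_k^{-1}$. The first condition therefore serves only to make the blocking lemma applicable. With this accounting corrected, your Step~4 goes through as in the paper.
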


The first condition ensures that the time segment used in the lower-bound
proof is $o(\mathcal A_k)$. It is therefore negligible relative to the ARL
constraint. The second condition implies
$n_k=\mathcal A_k^{o(1)}$. Since
$I_{h,n_k}^{\Phi}\asymp n_k^{-1}$, the time segment used in the proof grows
only at rate $\mathcal A_k^{o(1)}$. Hence, the remainder in the
change-of-measure argument tends to zero. By
Corollary~\ref{cor:key_local_relations},
\[
    I_{h,n_k}^{\Phi}
    =
    \frac{\kappa_{\Phi,\rho}(h)}{2}
    \gamma_k^2\{1+o(1)\}.
\]
The theorem follows from
Propositions~\ref{prop:pl_cusum_upper_lan_main} and
\ref{prop:local_minimax_lower_main} in the Appendix, together with
Corollary~\ref{cor:key_local_relations}.

\begin{corollary}
\label{cor:detectability_boundary}
Under the conditions of Theorem~\ref{thm:ph_pl_local_minimax_interface},
suppose that $\gamma_k^2\mathcal A_k/\log\mathcal A_k\to\infty$.
Then
\[
    R_{n_k,\mathcal A_k}(h)
    \asymp
    \frac{\log\mathcal A_k}{\gamma_k^2}
    =
    o(\mathcal A_k).
\]
The oracle LLR-CUSUM stopping time $\tau_{\eta_k}$ attains the same delay
order.
\end{corollary}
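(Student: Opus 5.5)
The corollary follows by combining the two-sided bounds of Theorem~\ref{thm:ph_pl_local_minimax_interface} and then checking that the common rate is $o(\mathcal A_k)$. First I fix $\rho>0$ and $h$ with $h^\top G_\rho h>0$. This makes $C_{\Phi,\rho}=1+\rho/\lambda_{\min}^+(\Sigma_0^\Phi)$ a finite positive constant that does not depend on $k$. Similarly, $\kappa_{\Phi,\rho}(h)$ is a fixed positive number, since $h^\top\Sigma_0^\Phi h\ge h^\top G_\rho h>0$; this uses $G_\rho\preceq\Sigma_0^\Phi$, because $\Sigma_0(\Sigma_0+\rho I)^{-1}\preceq I$ on the range of $\Sigma_0$.

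For the upper order, the oracle stopping time $\tau_{\eta_k}$ belongs to $\mathcal C_{\mathcal A_k}$, because $\mathrm{ARL}_0(\tau_{\eta_k})\ge\mathcal A_k$. Hence
\[
R_{n_k,\mathcal A_k}(h)\le\bar d_{\theta_{n_k}}(\tau_{\eta_k})\le \frac{2\log\mathcal A_k}{\gamma_k^2}\{1+o(1)\}.
\]
For the lower order, the third display of the theorem gives $R_{n_k,\mathcal A_k}(h)\ge (2/C_{\Phi,\rho})(\log\mathcal A_k/\gamma_k^2)\{1+o(1)\}$. For all large $k$ the ratio $R_{n_k,\mathcal A_k}(h)\gamma_k^2/\log\mathcal A_k$ therefore lies in $[1/C_{\Phi,\rho},3]$, which is the claimed $\asymp$. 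The same sandwich shows that $\bar d_{\theta_{n_k}}(\tau_{\eta_k})$ has the same order, so the oracle LLR-CUSUM attains it.

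It remains to show $\log\mathcal A_k/\gamma_k^2=o(\mathcal A_k)$. This is the added hypothesis rewritten: $\gamma_k^2\mathcal A_k/\log\mathcal A_k\to\infty$ is equivalent to $(\log\mathcal A_k/\gamma_k^2)/\mathcal A_k\to0$. One could also obtain it from the theorem's condition $\log\mathcal A_k/I^\Phi_{h,n_k}=o(\mathcal A_k)$ together with $I^\Phi_{h,n_k}=\tfrac{\kappa_{\Phi,\rho}(h)}{2}\gamma_k^2\{1+o(1)\}$ from Corollary~\ref{cor:key_local_relations}. This shows that the extra hypothesis is consistent with the theorem and only makes the boundary explicit.

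The only step needing care is confirming that the constants in the two bounds are genuinely $k$-independent and that the $o(1)$ terms are uniform along the sequence. Here $\Sigma_0^\Phi$ is the baseline covariance of the fixed finite-support law $P_0$, so $\lambda_{\min}^+(\Sigma_0^\Phi)$ is a fixed positive number and no uniformity issue arises.
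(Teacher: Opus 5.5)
Your proposal is correct and matches the paper's proof. You sandwich $R_{n_k,\mathcal A_k}(h)$ between the theorem's lower bound $(2/C_{\Phi,\rho})\log\mathcal A_k/\gamma_k^2$ and the delay of the oracle $\tau_{\eta_k}$, which lies in $\mathcal C_{\mathcal A_k}$ because $\mathrm{ARL}_0(\tau_{\eta_k})\ge\mathcal A_k$, and you read off $\log\mathcal A_k/\gamma_k^2=o(\mathcal A_k)$ from the added hypothesis; your side remark that this rate already follows from the theorem's condition $\log\mathcal A_k/I^{\Phi}_{h,n_k}=o(\mathcal A_k)$ via Corollary~\ref{cor:key_local_relations} is also correct, and the paper does not make it.
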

The proof is given in Appendix~\ref{app:main_result_proofs}. If
$\gamma_k^2\mathcal A_k=O(\log\mathcal A_k)$,
then $\log\mathcal A_k/\gamma_k^2$ is at least of order $\mathcal A_k$.
Thus, the detectability condition above does not hold.
Proposition~\ref{prop:below_boundary_main} in the Appendix further shows that,
for every
$\zeta\in(0,1)$,
\[
    R_{n_k,\mathcal A_k}(h)
    \ge
    \mathcal A_k^{1-\zeta}\{1+o(1)\}.
\]
Equivalently, for every $c\in(0,1)$, the local minimax delay is no smaller
than order $\mathcal A_k^c$.

\subsection{Betti Invariance and Spectral Separation}
\label{sec:composite_least_favorable}

Let $K_\bullet$ and $L_\bullet$ be two finite filtrations, and let
$\phi_r:K_r\to L_r$ be a simplicial map at each level $r$. For $a\le b$, let
$\iota_{K,*}:H_q(K_a)\to H_q(K_b)$ and
$\iota_{L,*}:H_q(L_a)\to H_q(L_b)$ denote the homology maps induced by the
inclusions in $K_\bullet$ and $L_\bullet$, respectively.

\begin{lemma}
\label{lem:natural_homology_isomorphism_persistent_betti}
Fix a homological dimension $q\ge0$. Suppose that every induced map
$(\phi_r)_*:H_q(K_r)\to H_q(L_r)$ is an isomorphism. Suppose also that these isomorphisms commute with the
filtration inclusions. Thus, for every $a\le b$, the following diagram
commutes:
\[
\begin{tikzcd}
H_q(K_a) \arrow[r,"\iota_{K,*}"] \arrow[d,"(\phi_a)_*"']
&
H_q(K_b) \arrow[d,"(\phi_b)_*"]
\\
H_q(L_a) \arrow[r,"\iota_{L,*}"']
&
H_q(L_b).
\end{tikzcd}
\]
Then, for every $a\le b$,
\[
    \beta_q^{a,b}(K_\bullet)
    =
    \beta_q^{a,b}(L_\bullet).
\]
\end{lemma}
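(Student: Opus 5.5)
The argument is pure linear algebra applied to the commutative square in the hypothesis. Recall that the persistent Betti number is $\beta_q^{a,b}(K_\bullet)=\dim\operatorname{im}\iota_{K,*}$, where $\iota_{K,*}:H_q(K_a)\to H_q(K_b)$, and the analogous formula holds for $L_\bullet$. All homology groups are taken with real coefficients and are finite-dimensional, since the filtrations are finite. So it suffices to show that the two images $\operatorname{im}\iota_{K,*}\subseteq H_q(K_b)$ and $\operatorname{im}\iota_{L,*}\subseteq H_q(L_b)$ are isomorphic as vector spaces.

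The key step is to show that the isomorphism $(\phi_b)_*$ maps $\operatorname{im}\iota_{K,*}$ onto $\operatorname{im}\iota_{L,*}$. Commutativity of the diagram gives $(\phi_b)_*\circ\iota_{K,*}=\iota_{L,*}\circ(\phi_a)_*$. Taking images of both sides,
\[
    (\phi_b)_*\bigl(\operatorname{im}\iota_{K,*}\bigr)
    =\operatorname{im}\bigl((\phi_b)_*\circ\iota_{K,*}\bigr)
    =\operatorname{im}\bigl(\iota_{L,*}\circ(\phi_a)_*\bigr)
    =\iota_{L,*}\bigl((\phi_a)_*H_q(K_a)\bigr)
    =\iota_{L,*}\bigl(H_q(L_a)\bigr),
\]
where the last equality uses surjectivity of $(\phi_a)_*$.

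Injectivity of $(\phi_b)_*$ then implies that its restriction to $\operatorname{im}\iota_{K,*}$ is a linear isomorphism onto $\operatorname{im}\iota_{L,*}$. Hence the two images have equal dimension, which gives $\beta_q^{a,b}(K_\bullet)=\beta_q^{a,b}(L_\bullet)$ for every $a\le b$. The case $a=b$ is covered as well, since then $\iota$ is the identity and the claim reduces to $\dim H_q(K_a)=\dim H_q(L_a)$.

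There is no real obstacle here. The only point needing care is the hypothesis itself: the square must commute with the induced maps of the simplicial maps $\phi_r$, because the $\phi_r$ need not be compatible with the inclusions at the chain level. The argument uses exactly the stated commutativity on homology and nothing more, so no compatibility of the $\phi_r$ with the inclusions at the simplicial level is required.
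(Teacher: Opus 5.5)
Your proof is correct and matches the paper's argument essentially step for step. Commutativity together with surjectivity of $(\phi_a)_*$ gives $(\phi_b)_*(\operatorname{im}\iota_{K,*})=\operatorname{im}\iota_{L,*}$, and injectivity of $(\phi_b)_*$ then makes the restriction an isomorphism between the two images, so their dimensions agree.
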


The proof is given in Appendix~\ref{app:main_result_proofs}.

\begin{remark}
\label{rem:dominated_vertex_filtered_collapse}
Suppose that a map between two finite filtrations is obtained by a sequence of
levelwise dominated-vertex deletions. If these deletions are compatible with
the filtration inclusions, then the conditions of
Lemma~\ref{lem:natural_homology_isomorphism_persistent_betti} hold. For the
homotopy interpretation of dominated vertices and strong collapses, see
\citet{barmak2012}.
\end{remark}

We next use a family of trees to construct point clouds with identical
persistent Betti vectors but different positive PL spectra. For a tree $T$,
let $V(T)$ and $E(T)$ denote its vertex and edge sets. Let $\deg_T(v)$ be the
degree of $v\in V(T)$, let $d_T(u,v)$ be the edge-count distance between
$u,v\in V(T)$, and let $L_T$ be the graph Laplacian of $T$.

\begin{proposition}
\label{prop:tree_family}
For any integer $m\geq4$, let $\mathcal T_m$ be the set of all unweighted
trees on $m$ vertices, with unit edge lengths. The following statements hold.

\textup{(1)} Every $T\in\mathcal T_m$ admits a point-cloud representation
\[
    Y_T=\{y_v:v\in V(T)\}\subset\mathbb R^{m-1}
\]
whose squared Euclidean distances reproduce the tree metric:
\[
    \|y_u-y_v\|_2^2=d_T(u,v),
    \qquad u,v\in V(T).
\]

\textup{(2)} For any $T,T'\in\mathcal T_m$ and every scale pair $(a,b)$,
\[
    B^{a,b}(Y_T)=B^{a,b}(Y_{T'}).
\]
Moreover, suppose that $1\leq\varepsilon_a\leq\varepsilon_b<\sqrt{2}$ and
\[
    \sum_{v\in V(T)}\deg_T(v)^2
    \neq
    \sum_{v\in V(T')}\deg_{T'}(v)^2,
\]
then $\mathcal L_0^{a,b}(Y_T)$ and $\mathcal L_0^{a,b}(Y_{T'})$ have
different positive spectra.
\end{proposition}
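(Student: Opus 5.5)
For part~(1) I would use an explicit edge-indicator embedding. Fix a root $r\in V(T)$ and assign to each edge $e\in E(T)$ a distinct vector $e_e$ from the standard orthonormal basis of $\mathbb R^{m-1}$; this is possible because a tree on $m$ vertices has exactly $m-1$ edges. Set
\[
    y_v=\sum_{e\in P(r,v)}e_e ,
\]
where $P(r,v)$ is the unique path from $r$ to $v$. The edge sets of $P(r,u)$ and $P(r,v)$ have symmetric difference equal to the edge set of the path $P(u,v)$. Hence $y_u-y_v$ is a $\pm1$ combination of $d_T(u,v)$ distinct orthonormal vectors, and therefore $\|y_u-y_v\|_2^2=d_T(u,v)$.

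For the Betti part of~(2), the key consequence of~(1) is that pairwise distances in $Y_T$ are $\sqrt{d_T(u,v)}\in\{1,\sqrt2,\ldots\}$. So $\VR_\varepsilon(Y_T)$ has no edges for $\varepsilon<1$. For $\varepsilon\ge1$ it equals the Rips complex of the graph metric at integer threshold $k=\lfloor\varepsilon^2\rfloor\ge1$, meaning the clique complex of the graph joining vertices at tree distance at most $k$.

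I would show that this complex is contractible for every $k\ge1$ by iterated dominated-vertex deletion, in the spirit of Remark~\ref{rem:dominated_vertex_filtered_collapse}. Let $\ell$ be a leaf with neighbour $p$. Every path from $\ell$ passes through $p$, so $d_T(p,w)\le d_T(\ell,w)$ for every vertex $w$. Hence the closed neighbourhood of $\ell$ in the threshold-$k$ graph is contained in that of $p$, and $\ell$ is dominated by $p$. Deleting $\ell$ leaves the threshold-$k$ Rips complex of the tree $T-\ell$, because distances in a tree are unchanged by removing a leaf. Inducting down to a single vertex gives a strong collapse to a point, so the complex is contractible.

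It follows that for every tree:
\begin{itemize}
\item if $\varepsilon<1$, the complex consists of $m$ isolated points, so $H_0\cong\mathbb R^m$ and $H_q=0$ for $q\ge1$;
\item if $\varepsilon\ge1$, the complex is acyclic, so $H_0\cong\mathbb R$ and $H_q=0$ for $q\ge1$.
\end{itemize}
Every persistent Betti number of degree $q\ge1$ is then $0$. For $q=0$ we have $\beta_0^{a,b}=\beta_0(\VR_{\varepsilon_b})$, which equals $m$ if $\varepsilon_b<1$ and $1$ otherwise. Both values depend only on $m$ and the grid, so $B^{a,b}(Y_T)=B^{a,b}(Y_{T'})$.

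For the spectral claim, take $1\le\varepsilon_a\le\varepsilon_b<\sqrt2$. Only distance-$1$ pairs are connected at scale $\varepsilon_b$, so the 1-skeleton of $\VR_{\varepsilon_b}(Y_T)$ is exactly $T$. By the remark in Section~\ref{sec:pl_spectral_features}, $\mathcal L_0^{a,b}(Y_T)=L_T$. Since $T$ is connected, $L_T$ has a one-dimensional kernel. Its positive spectrum is therefore the full multiset of $m-1$ nonzero eigenvalues, and the sum of their squares is
\[
    \operatorname{tr}(L_T^2)=\sum_{v}\deg_T(v)^2+2|E(T)|=\sum_v\deg_T(v)^2+2(m-1).
\]
If the positive spectra of $T$ and $T'$ coincided as multisets, these traces would be equal. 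Because $|E(T)|=|E(T')|=m-1$, the degree-square sums would then be equal, contradicting the hypothesis.

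I read ``positive spectra'' as the full multisets. If the truncation $J<m-1$ were applied, this trace argument would not by itself distinguish the truncated vectors, and I would state that caveat. The main obstacle is the contractibility step for all thresholds $k\ge1$; the leaf-domination induction above is what I would carry out in detail.
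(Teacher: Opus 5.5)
Your proposal is correct and follows the paper's proof closely: the same edge-indicator embedding, the same leaf-domination collapse of the clique complex of $T^{\lfloor\varepsilon^2\rfloor}$, and the same $\operatorname{tr}(L_T^2)$ comparison. The one difference is that you obtain $\beta_q^{a,b}$ from levelwise acyclicity together with $\beta_0^{a,b}=\beta_0(\VR_{\varepsilon_b})$, whereas the paper applies Lemma~\ref{lem:natural_homology_isomorphism_persistent_betti} to filtration-compatible collapses; your route is slightly more direct, because it does not need the collapses to be compatible across levels.

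One small correction: the claim $d_T(p,w)\le d_T(\ell,w)$ for every vertex $w$ fails at $w=\ell$. The domination $N_{T^k}[\ell]\subseteq N_{T^k}[p]$ still holds, because $d_T(p,\ell)=1\le k$ puts $\ell$ in $N_{T^k}[p]$.
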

The proof is given in Appendix~\ref{app:main_result_proofs}.

The path tree $P_m$ and the star tree $S_m$ satisfy the condition in
Proposition~\ref{prop:tree_family}, since
\[
    \sum_{v\in V(P_m)}\deg_{P_m}(v)^2=4m-6,
    \qquad
    \sum_{v\in V(S_m)}\deg_{S_m}(v)^2=m(m-1),
\]
and these quantities differ for $m\geq4$. Their largest graph-Laplacian
eigenvalues also satisfy
\[
    \lambda_{\max}(L_{P_m})
    =2+2\cos(\pi/m)<4\leq m
    =\lambda_{\max}(L_{S_m}).
\]
A numerical study of the spectral separation between the path and star trees
as the tree order varies is given in Appendix~D.1.

To compare two feature representations of the same window point cloud, let
$F_1$ and $F_2$ be finite-range feature maps applied to $Z_n$. Suppose that a
map $g$ satisfies $F_1=g\circ F_2$. Then $F_1(Z_n)$ is fully determined by
$F_2(Z_n)$. For $j\in\{1,2\}$ and $k\in\{0,1\}$, let $P_k^{F_j}$ denote the
distribution of $F_j(Z_n)$ under $Q_k$.

\begin{lemma}
\label{lem:feature_compression_kl}
If $P_1^{F_2}\ll P_0^{F_2}$, then
\[
\begin{aligned}
    \KL(P_1^{F_2}\|P_0^{F_2})
    &=
    \KL(P_1^{F_1}\|P_0^{F_1}) \\
    &\quad+
    \sum_y P_1^{F_1}(y)
    \KL\!\left(
        P_1^{F_2\mid F_1=y}
        \middle\|
        P_0^{F_2\mid F_1=y}
    \right).
\end{aligned}
\]
Here, the sum is over the range of $F_1$. The distribution
$P_k^{F_2\mid F_1=y}$ is the conditional distribution of $F_2(Z_n)$ given
$F_1(Z_n)=y$ under $Q_k$. Hence,
\[
    \KL(P_1^{F_1}\|P_0^{F_1})
    \le
    \KL(P_1^{F_2}\|P_0^{F_2}).
\]
Equality holds if and only if
\[
    P_1^{F_2\mid F_1=y}
    =
    P_0^{F_2\mid F_1=y}
\]
for every $y$ such that $P_1^{F_1}(y)>0$.
\end{lemma}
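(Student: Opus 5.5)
The plan is to prove the decomposition by the chain rule for Kullback--Leibler divergence on finite sets. Two facts make this work: $F_1=g\circ F_2$, and both feature maps have finite range. Write $P_k(x)=P_k^{F_2}(x)$ for $x$ in the range of $F_2$. Since $F_1(Z_n)=g(F_2(Z_n))$, the joint law of $(F_1(Z_n),F_2(Z_n))$ under $Q_k$ is supported on the graph $\{(g(x),x)\}$. Consequently $P_k^{F_1}(y)=\sum_{x:g(x)=y}P_k(x)$. For $y$ with $P_k^{F_1}(y)>0$, the conditional law is
\[
P_k^{F_2\mid F_1=y}(x)=\frac{P_k(x)}{P_k^{F_1}(y)}\mathbf 1\{g(x)=y\}.
\]
The absolute continuity $P_1^{F_2}\ll P_0^{F_2}$ transfers to $P_1^{F_1}\ll P_0^{F_1}$ and to each conditional, so every term below is finite or treated by the usual $0\log 0=0$ convention.

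The first step is the factorization. For every $x$ with $P_1(x)>0$, set $y=g(x)$ and write
\[
\log\frac{P_1(x)}{P_0(x)}=\log\frac{P_1^{F_1}(y)}{P_0^{F_1}(y)}+\log\frac{P_1^{F_2\mid F_1=y}(x)}{P_0^{F_2\mid F_1=y}(x)}.
\]
Here $P_0(x)>0$, and therefore $P_0^{F_1}(y)>0$, by absolute continuity.

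Next, multiply by $P_1(x)$ and sum over $x$, grouping the terms by fibres $g^{-1}(y)$. The first term depends only on $y$ and has total fibre mass $P_1^{F_1}(y)$, so it sums to $\KL(P_1^{F_1}\|P_0^{F_1})$. For the second term, $P_1(x)=P_1^{F_1}(y)P_1^{F_2\mid F_1=y}(x)$, so it sums to the weighted conditional divergences. This yields the stated identity, with the sum restricted to the $y$ satisfying $P_1^{F_1}(y)>0$; the remaining terms vanish.

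The inequality follows because each conditional divergence is nonnegative by Gibbs' inequality. For the equality characterization, the weighted sum vanishes if and only if every term with positive weight vanishes. A KL divergence between distributions on a finite set is zero if and only if the distributions coincide, by strict convexity of $-\log$ in Jensen's inequality. This gives exactly the condition $P_1^{F_2\mid F_1=y}=P_0^{F_2\mid F_1=y}$ for all $y$ with $P_1^{F_1}(y)>0$.

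The only delicate point is bookkeeping on null sets. One must make sure that conditioning on $F_1=y$ is well defined under $Q_0$ whenever $P_1^{F_1}(y)>0$, and that no $\infty-\infty$ expression arises. Both are handled by the absolute-continuity hypothesis together with finiteness of the ranges, so no further obstacle is expected.
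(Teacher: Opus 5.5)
Your proof is correct and follows the paper's approach. Both arguments use $F_1=g\circ F_2$ to apply the chain rule for KL divergence to $(F_1,F_2)$, then use nonnegativity of the conditional divergences and the fact that a KL divergence vanishes exactly when the two distributions coincide. The difference is that the paper cites the chain rule, while you derive it through the fibrewise factorization of the likelihood ratio and check the null-set and absolute-continuity bookkeeping yourself.
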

The proof is given in Appendix~\ref{app:main_result_proofs}.

This lemma has two applications in PL-CUSUM.

\textup{(1)} Set $F_2=\Phi=(B,\Lambda)$, $F_1=\Lambda$, and
$g(B,\Lambda)=\Lambda$. Lemma~\ref{lem:feature_compression_kl} gives
\[
    \KL(P_1^\Lambda\|P_0^\Lambda)
    \le
    \KL(P_1^\Phi\|P_0^\Phi).
\]
Equality holds if and only if the conditional distribution of $B$ given
$\Lambda$ is the same in both states. In the tree-family model of
Proposition~\ref{prop:tree_family}, all support points have the same persistent
Betti vector. Denote this vector by $c_\star$. Then equality holds, and
\[
    \KL(P_1^B\|P_0^B)=0,
    \qquad
    \KL(P_1^\Phi\|P_0^\Phi)
    =
    \KL(P_1^\Lambda\|P_0^\Lambda).
\]
Applying Equation~\eqref{eq:ridge_whitened_separation} to $\Phi$ and $\Lambda$
defines $\gamma_{\Phi,\rho}$ and $\gamma_{\Lambda,\rho}$. With the same ridge
value $\rho$,
\[
    \gamma_{\Phi,\rho}=\gamma_{\Lambda,\rho}.
\]
Thus, in this finite-support model, the full PL feature and the
positive-spectrum vector have the same KL divergence between states and the
same ridge-whitened separation.

\textup{(2)} To compare spectral truncation levels, let $\Lambda_J$ denote the
spectral vector formed from the first $J$ positive eigenvalues. If
$J_1\le J_2$, set $F_2=\Lambda_{J_2}$ and $F_1=\Lambda_{J_1}$. Let $g$ remove
the coordinates of $\Lambda_{J_2}$ beyond the first $J_1$ positive
eigenvalues. Lemma~\ref{lem:feature_compression_kl} gives
\[
    \KL(P_1^{\Lambda_{J_1}}\|P_0^{\Lambda_{J_1}})
    \le
    \KL(P_1^{\Lambda_{J_2}}\|P_0^{\Lambda_{J_2}}).
\]
Equality holds if and only if the conditional distribution of the removed
spectral coordinates given $\Lambda_{J_1}$ is the same in both states. This
inequality compares the population KL divergence under the same pair of state
distributions. With a finite Phase~I sample, increasing $J$ also increases the
dimension used to estimate the projection direction and covariance matrix.
Thus, plug-in detection performance need not improve monotonically with $J$.
Figure~\ref{fig:tree_sensitivity_raw} and Appendix
Figure~\ref{fig:appendix_dimension_stress} examine calibrated parameter
sensitivity and high-dimensional finite-sample performance, respectively.

\subsection{Finite-Horizon Guarantee for the Plug-In Whitened Score}
\label{sec:plugin_guarantee}

Unlike the oracle analysis in Section~\ref{sec:local_bounds_main}, this section
studies parameter-estimation error when the Phase~I sample is finite. Given
$\widehat W$, $\widehat u$, and $\widehat c$ obtained from independent Phase~I
data, let $Y_n=\widehat s_n-\widehat c$ denote the one-step PL-CUSUM increment.
Under the statewise stationarity and weak-dependence conditions in
Section~\ref{sec:problem_setup}, $(Y_n)$ is strictly stationary and
geometrically $\beta$-mixing.

Under state 0,
\[
    \widehat c-\mathbb E_0\widehat s_n
    =
    \frac{1}{2}\widehat u^\top\widehat W
    (\widehat\mu_1-\widehat\mu_0)
    +
    \widehat u^\top\widehat W(\widehat\mu_0-\mu_0),
\]
whereas under state 1,
\[
    \mathbb E_1\widehat s_n-\widehat c
    =
    \frac{1}{2}\widehat u^\top\widehat W
    (\widehat\mu_1-\widehat\mu_0)
    +
    \widehat u^\top\widehat W(\mu_1-\widehat\mu_1).
\]
The first expression is the magnitude of the mean downward drift of the
plug-in increment under state 0. The second is the mean upward drift under
state 1. Their deviations from the oracle quantities
$c-\mathbb E_0s_n$ and $\mathbb E_1s_n-c$ arise from estimation of the means
and covariance matrix. The errors $\widehat\mu_r-\mu_r$, for
$r\in\{0,1\}$, affect the estimated mean difference.
The error $\widehat\Sigma-\Sigma$ affects the whitening matrix.
Both errors therefore affect $\widehat u$, $\widehat s_n$, and $\widehat c$.

These errors can be controlled through the stability of the whitening map and
direction normalisation. The ridge rule gives
$|\widehat\rho-\rho|\leq
\rho_{\rm mult}\|\widehat\Sigma-\Sigma\|_{\mathrm{op}}$.
When $r$ is bounded away from zero, the map
$(M,r)\mapsto(M+rI_p)^{-1/2}$ satisfies a Lipschitz bound on the positive
semidefinite cone. Since $\gamma_\rho>0$, the vector
$W_\rho(\mu_1-\mu_0)$ is nonzero. The normalisation map is locally Lipschitz at
this vector. Therefore, there is a constant $C$ that depends only on $\rho$,
$\rho_{\rm mult}$, $\gamma_\rho$, and a uniform bound on the feature vectors.

Let $e$ be an upper bound satisfying
\[
    C\left(
    \|\widehat\mu_1-\mu_1\|_2
    +
    \|\widehat\mu_0-\mu_0\|_2
    +
    \|\widehat\Sigma-\Sigma\|_{\mathrm{op}}
    \right)
    \leq e.
\]
It follows that
\[
\max\left\{
\left|(\widehat c-\mathbb E_0\widehat s_n)
      -(c-\mathbb E_0s_n)\right|,
\left|(\mathbb E_1\widehat s_n-\widehat c)
      -(\mathbb E_1s_n-c)\right|
\right\}
\leq e.
\]
Since
$c-\mathbb E_0s_n=\mathbb E_1s_n-c=\gamma_\rho/2$, this error bound gives
\begin{equation}
\label{eq:plugin_drift_margin}
    \min\left\{
    \widehat c-\mathbb E_0\widehat s_n,\,
    \mathbb E_1\widehat s_n-\widehat c
    \right\}
    \geq \frac{\gamma_\rho}{2}-e.
\end{equation}
Because the feature distribution has finite support, the window features are
bounded. Suppose that the training windows are mutually independent and that
each state provides $N_{\rm tr}$ windows. Appendix
Proposition~\ref{prop:training_error_event} shows that, for any
$\delta\in(0,1)$, if $N_{\rm tr}\geq C_0\{p+\log(1/\delta)\}$, then, with
probability at least $1-\delta$,
\[
    e=O\!\left\{
    \sqrt{\frac{p+\log(1/\delta)}{N_{\rm tr}}}
    \right\}.
\]
Thus, when $e<\gamma_\rho/2$, equation~\eqref{eq:plugin_drift_margin}
ensures a negative mean increment under state 0 and a positive mean increment
under state 1.

In what follows, $\mathbb E_r\widehat s_n$ denotes the conditional expectation
of the evaluation score under state $r$, given the Phase~I estimates. Let
$\mathcal X_F$ be the finite range of the feature map, and set
$R_F=\max_{x\in\mathcal X_F}\|x\|_2$ and
$\sigma^2=R_F^2/\widehat\rho$.

Assume that there is a constant $C_{\rm dep}\geq1$ such that, for every state
$r\in\{0,1\}$, starting index $a\geq1$, interval length $m\geq1$, and
$\lambda\in\mathbb R$,
\begin{equation}
\label{eq:dependent_subgaussian_sum}
\begin{aligned}
&\mathbb E_r\!\left[
\exp\!\left\{
\lambda\sum_{j=a}^{a+m-1}
\bigl(\widehat s_j-\mathbb E_r\widehat s_j\bigr)
\right\}
\,\middle|\,
\widehat W,\widehat u,\widehat c
\right] \\
&\qquad\leq
\exp\!\left\{
\frac{C_{\rm dep}m\sigma^2\lambda^2}{2}
\right\}.
\end{aligned}
\end{equation}
The constant $C_{\rm dep}$ measures the effect of serial dependence on
partial-sum fluctuations. One may take $C_{\rm dep}=1$ when the evaluation
scores are independent. If the raw observations are independent and the
monitoring windows overlap, Appendix
Proposition~\ref{prop:overlap_decimation_main} gives $(g-1)$-dependence, and
one may take $C_{\rm dep}=g$.

\begin{proposition}
\label{prop:main_plugin_order_main}
Conditional on the Phase~I estimates, let
$\gamma_{\rm eff}:=\gamma_\rho-2e>0$, and suppose that
equations~\eqref{eq:plugin_drift_margin} and
\eqref{eq:dependent_subgaussian_sum} hold. If the control limit satisfies
\[
    \eta
    \geq
    \frac{C_{\rm dep}\sigma^2}{\gamma_{\rm eff}}
    \log\frac{N(N+1)}{2\alpha},
\]
then
\[
    \mathbb P_\infty\!\left(
    T_\eta\leq N
    \,\middle|\,
    \widehat W,\widehat u,\widehat c
    \right)
    \leq\alpha,
\]
and
\[
    \mathbb E_\nu\!\left[
    (T_\eta-\nu+1)_+
    \,\middle|\,
    \widehat W,\widehat u,\widehat c
    \right]
    \leq
    \frac{4\eta}{\gamma_{\rm eff}}
    +1+
    \frac{32C_{\rm dep}\sigma^2}{\gamma_{\rm eff}^2}.
\]
In particular, if
$\eta\asymp
C_{\rm dep}\sigma^2\log(N/\alpha)/\gamma_{\rm eff}$ and
$\log(N/\alpha)\geq1$, then
\[
    \mathbb E_\nu\!\left[
    (T_\eta-\nu+1)_+
    \,\middle|\,
    \widehat W,\widehat u,\widehat c
    \right]
    =
    O\!\left\{
    \frac{C_{\rm dep}\sigma^2\log(N/\alpha)}
    {(\gamma_\rho-2e)^2}
    \right\}.
\]
\end{proposition}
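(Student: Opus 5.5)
Both claims rest on the identity $S_n=\max_{0\le k\le n}\sum_{j=k+1}^n Y_j$ for the CUSUM recursion, where $Y_j=\widehat s_j-\widehat c$. We condition throughout on the Phase~I estimates $(\widehat W,\widehat u,\widehat c)$ and use only \eqref{eq:plugin_drift_margin} and \eqref{eq:dependent_subgaussian_sum}. Write $D_j=\widehat s_j-\mathbb E_r\widehat s_j$ for the centred score. By \eqref{eq:plugin_drift_margin}, $\mathbb E_0Y_j\le-\gamma_{\rm eff}/2$ and $\mathbb E_1Y_j\ge\gamma_{\rm eff}/2$.

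\emph{False alarm.} Under $\mathbb P_\infty$, the event $\{T_\eta\le N\}$ is the event that some segment sum $\sum_{j=a}^{a+m-1}Y_j\ge\eta$ with $1\le a\le a+m-1\le N$. There are $N(N+1)/2$ such segments, and we take a union bound over them. For a segment of length $m$,
\[
\sum_{j=a}^{a+m-1} Y_j\le \sum_{j=a}^{a+m-1} D_j-\frac{m\gamma_{\rm eff}}{2}.
\]
Apply Chernoff with \eqref{eq:dependent_subgaussian_sum} and the fixed choice $\lambda=\gamma_{\rm eff}/(C_{\rm dep}\sigma^2)$. This gives
\[
\mathbb P\Bigl(\textstyle\sum Y_j\ge\eta\Bigr)\le\exp\Bigl\{-\lambda\eta-\frac{\lambda m\gamma_{\rm eff}}{2}+\frac{C_{\rm dep}m\sigma^2\lambda^2}{2}\Bigr\}=\exp\{-\lambda\eta\}.
\]
The $m$-dependent terms cancel exactly, which is why the length-uniform choice of $\lambda$ works. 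Summing over the segments and using the stated lower bound on $\eta$ gives total probability at most $\frac{N(N+1)}{2}\exp\{-\gamma_{\rm eff}\eta/(C_{\rm dep}\sigma^2)\}\le\alpha$.

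\emph{Delay.} Since $S_{\nu-1}\ge0$, we have $S_n\ge\sum_{j=\nu}^nY_j$ for $n\ge\nu$. Hence $(T_\eta-\nu+1)_+>t$ implies $\sum_{j=\nu}^{\nu+t-1}Y_j<\eta$. This reduction removes the dependence on the pre-change path. The one point to check is that \eqref{eq:dependent_subgaussian_sum} is applied to post-change indices only; in the worst case over $\mathscr F_{\nu-1}$ this needs the conditional version of the bound, which I will take as given by the assumption.

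Use $\mathbb E(\cdot)_+=\sum_{t\ge0}\mathbb P(\cdot>t)$ and bound $\mathbb P(\cdot>t)$ by $1$ for $t\le t_0:=\lceil 4\eta/\gamma_{\rm eff}\rceil$. For $t>t_0$ the mean $t\gamma_{\rm eff}/2$ exceeds $\eta$ by at least $t\gamma_{\rm eff}/4$, so the sub-Gaussian lower tail gives
\[
\mathbb P\Bigl(\textstyle\sum_{j} D_j<-\frac{t\gamma_{\rm eff}}{4}\Bigr)\le\exp\Bigl\{-\frac{t\gamma_{\rm eff}^2}{32C_{\rm dep}\sigma^2}\Bigr\}.
\]
This geometric series sums to at most $1/(1-e^{-x})\le 1+1/x$ with $x=\gamma_{\rm eff}^2/(32C_{\rm dep}\sigma^2)$. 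Adding the $t_0+1$ initial terms yields $\frac{4\eta}{\gamma_{\rm eff}}+1+\frac{32C_{\rm dep}\sigma^2}{\gamma_{\rm eff}^2}$. The ceiling and the extra $+1$ from $1+1/x$ must be absorbed so the constant matches; I expect this to need a slightly sharper sum or a nonstrict indexing convention.

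The "in particular" statement follows by substituting $\eta\asymp C_{\rm dep}\sigma^2\log(N/\alpha)/\gamma_{\rm eff}$. The term $32C_{\rm dep}\sigma^2/\gamma_{\rm eff}^2$ is then dominated because $\log(N/\alpha)\ge1$. The term $1$ is handled the same way, provided $C_{\rm dep}\sigma^2/\gamma_{\rm eff}^2$ is bounded below. That holds since $C_{\rm dep}\ge1$ and, by Cauchy--Schwarz, $\sigma^2=R_F^2/\widehat\rho\gtrsim\gamma_{\rm eff}^2$, as $|\widehat u^\top\widehat W x|\le R_F/\sqrt{\widehat\rho}$ bounds the scores and hence their mean gap. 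The main obstacle is the constant bookkeeping in the delay series together with the conditioning on $\mathscr F_{\nu-1}$; both union-bound tail steps are routine.
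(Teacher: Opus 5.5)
Your proposal follows the paper's proof. The false-alarm part is the same union bound over the $N(N+1)/2$ segment sums with a sub-Gaussian Chernoff bound. Your fixed $\lambda=\gamma_{\rm eff}/(C_{\rm dep}\sigma^2)$ gives $\exp\{-\eta\gamma_{\rm eff}/(C_{\rm dep}\sigma^2)\}$ directly, where the paper gets it from the optimized tail plus $(u+v)^2\ge4uv$. The delay part is the same domination of $(T_\eta-\nu+1)_+$ by the first-passage time of the post-change partial sums, followed by a tail sum.

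The constant you flagged closes without a new idea. Since $t_0=\lceil4\eta/\gamma_{\rm eff}\rceil\ge4\eta/\gamma_{\rm eff}$, the exponential tail bound already holds at $t=t_0$. So bound $\mathbb P(\cdot>t)$ by $1$ only for $0\le t\le t_0-1$. Because $\eta>0$, you have $t_0\ge1$, hence $\sum_{t\ge t_0}e^{-tx}\le 1/(e^x-1)\le1/x$. This gives exactly $t_0+1/x\le4\eta/\gamma_{\rm eff}+1+32C_{\rm dep}\sigma^2/\gamma_{\rm eff}^2$. It is the paper's computation with $n_0=\lceil2\eta/\mu_Y\rceil$ and $\mu_Y=\mathbb E_1Y_n\ge\gamma_{\rm eff}/2$ substituted at the end.

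Your worry about conditioning on $\mathscr F_{\nu-1}$ is moot here. The statement bounds $\mathbb E_\nu[(T_\eta-\nu+1)_+\mid\widehat W,\widehat u,\widehat c]$, not Lorden's essential supremum, so the unconditional state-1 bound on the post-change block suffices, as in the paper. Your mean-gap argument giving $\gamma_{\rm eff}\le2\sigma$ is a valid and slightly sharper substitute for the paper's $\gamma_{\rm eff}\le4\sigma$ when absorbing the constant term in the order statement.
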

The proof is given in Appendix
Proposition~\ref{prop:main_plugin_order}.

Proposition~\ref{prop:main_plugin_order_main} shows that Phase~I estimation
error changes the effective separation to $\gamma_\rho-2e$, while temporal
dependence enters through $C_{\rm dep}$ and changes the scale of partial-sum
fluctuations. Independent, non-overlapping windows correspond to
$C_{\rm dep}=1$. If the raw observations are independent and the monitoring
windows overlap, one may take $C_{\rm dep}=g$. At the control-limit order
required for finite-horizon false-alarm control, the delay remains of order
$\log(N/\alpha)/(\gamma_\rho-2e)^2$.

\begin{corollary}
\label{cor:plugin_unconditional_main}
Assume the remaining conditions of
Proposition~\ref{prop:main_plugin_order_main}, and let
$0<\delta<\alpha<1$. Suppose that
$e<\gamma_\rho/2$ and equation~\eqref{eq:plugin_drift_margin} hold with
probability at least $1-\delta$ over the Phase~I training sample. If the
threshold is chosen as in Proposition~\ref{prop:main_plugin_order_main},
with $\alpha-\delta$ in place of $\alpha$, then
\[
    \mathbb P_\infty(T_\eta\leq N)\leq\alpha.
\]
\end{corollary}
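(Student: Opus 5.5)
The argument conditions on the Phase~I training sample and then integrates the conditional guarantee of Proposition~\ref{prop:main_plugin_order_main} over it. Let $\mathcal G$ be the event on which $e<\gamma_\rho/2$ and equation~\eqref{eq:plugin_drift_margin} hold. By assumption,
\[
\mathbb P(\mathcal G^c)\le\delta .
\]
The Phase~I data are independent of the Phase~II evaluation stream. Hence, given $(\widehat W,\widehat u,\widehat c)$, the Phase~II increments keep their statewise stationary, weakly dependent law. The dependence assumption~\eqref{eq:dependent_subgaussian_sum} is assumed among the remaining conditions, so it is available conditionally.

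First, I will show that the conditional bound holds on $\mathcal G$. On $\mathcal G$ we have $\gamma_{\rm eff}=\gamma_\rho-2e>0$, and the drift margin~\eqref{eq:plugin_drift_margin} holds. The threshold $\eta$ is chosen by the rule of Proposition~\ref{prop:main_plugin_order_main} with $\alpha-\delta$ in place of $\alpha$, namely
\[
\eta\ge\frac{C_{\rm dep}\sigma^2}{\gamma_{\rm eff}}\log\frac{N(N+1)}{2(\alpha-\delta)} .
\]
This is legitimate because $0<\alpha-\delta<1$. Since $\eta$ is built from the Phase~I quantities $\gamma_{\rm eff}$ and $\sigma^2=R_F^2/\widehat\rho$, it is measurable with respect to the Phase~I sample. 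It is therefore fixed under the conditioning. Applying the proposition conditionally gives
\[
\mathbb P_\infty\bigl(T_\eta\le N\mid \widehat W,\widehat u,\widehat c\bigr)\le\alpha-\delta \qquad\text{on }\mathcal G .
\]

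Second, I will split on $\mathcal G$ and apply the tower property:
\[
\mathbb P_\infty(T_\eta\le N)
=\mathbb E\bigl[\mathbf 1_{\mathcal G}\,\mathbb P_\infty(T_\eta\le N\mid\cdot)\bigr]+\mathbb E\bigl[\mathbf 1_{\mathcal G^c}\,\mathbb P_\infty(T_\eta\le N\mid\cdot)\bigr].
\]
The first term is at most $(\alpha-\delta)\mathbb P(\mathcal G)\le\alpha-\delta$. For the second term, I bound the conditional probability by $1$, which gives at most $\mathbb P(\mathcal G^c)\le\delta$. Adding the two terms yields $\mathbb P_\infty(T_\eta\le N)\le\alpha$.

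The main subtlety is bookkeeping rather than any hard estimate. We must check that $\mathcal G$ and $\eta$ are measurable with respect to the Phase~I $\sigma$-algebra, and that conditioning on the estimates does not alter the law of the Phase~II stream; independence of the two phases guarantees the latter. Off $\mathcal G$, $\gamma_{\rm eff}$ may be nonpositive and $\eta$ may then be undefined or infinite. The trivial bound $1$ covers this case, so the rule's value off $\mathcal G$ does not matter.
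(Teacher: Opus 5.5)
Your proposal is correct and matches the paper's own argument: condition on the Phase~I sample, apply Proposition~\ref{prop:main_plugin_order_main} with budget $\alpha-\delta$ on the good event, bound the conditional false-alarm probability by $1$ off that event, and average to obtain $(\alpha-\delta)+\delta=\alpha$. Your explicit remarks make precise two points the paper's short proof leaves implicit: that $\eta$ is Phase~I-measurable (through $\gamma_{\rm eff}$ and $\widehat\rho$), and that independence of the two phases preserves the Phase~II law under conditioning.
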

The proof is given in Appendix
Corollary~\ref{cor:plugin_unconditional}.

\section{Experiments}
\label{sec:experiments}
\label{sec:experiment_protocol}

This section uses synthetic experiments to examine the theoretical results and
real data to compare PL-CUSUM with existing online methods. The synthetic
experiments study the separation provided by the positive PL spectrum, the
oracle delay scaling, and Phase~I estimation error. They also examine the
effects of spectral truncation $J$ and the ridge parameter. The real-data
experiments use SWaT and Electric Motor Vibrations. We compare the methods in
terms of $\mathrm{ARL}_0$, detection probability, and detection delay.

\subsection{Synthetic Experiments}
\label{sec:synthetic_experiments}

The random-tree experiments use $m=116$. The choice of tree order is reported
in Appendix Figure~\ref{fig:appendix_tree_order}. Each random support contains
64 trees. Six are the path tree, the star tree, the complete binary tree, and
three broom trees. The remaining 58 are labelled trees generated from
independently sampled uniform Pr{\"u}fer codes. By
Proposition~\ref{prop:tree_family}, their point-cloud representations have the
same persistent Betti values at the scale pair used in the experiment. The
spectral input consists of the ten largest positive eigenvalues.

On each support, $P_0$ assigns equal probability to the 64 trees. Let
$\mu_\Lambda=\mathbb E_0\Lambda(T)$, and let $\omega$ be a unit leading
eigenvector of $\operatorname{Cov}_0\{\Lambda(T)\}$. For a tilt magnitude
$\Delta>0$, set
$\tau=\Delta/\max_T|\omega^\top\{\Lambda(T)-\mu_\Lambda\}|$ and define
$P_1(T)\propto
P_0(T)\exp[\tau\omega^\top\{\Lambda(T)-\mu_\Lambda\}]$.
This construction is a direct instance of the finite-support
exponential-tilting model in Section~\ref{sec:finite_support_observation}.

For each experimental condition, $M_{\rm cal}=2000$ state-0 paths of length
500 are used to calibrate the control limit. Another 3000 state-0 paths are
used to estimate the pre-change alarm probability. For each support, 3000
change-point paths with post-change distribution $P_1$ are used to estimate
the detection probability and delay. We generate 50 independent tree
supports. The oracle scaling experiment considers 23 tilt magnitudes and
$\mathcal A\in\{100,200,400,800,1600,3200,6400\}$.
The plug-in experiment considers 13 Phase~I sample sizes, ranging from 16 to
1024 observations per state. Each sample size is repeated 100 times.

\begin{figure}[!htbp]
\centering
\includegraphics[width=\linewidth]{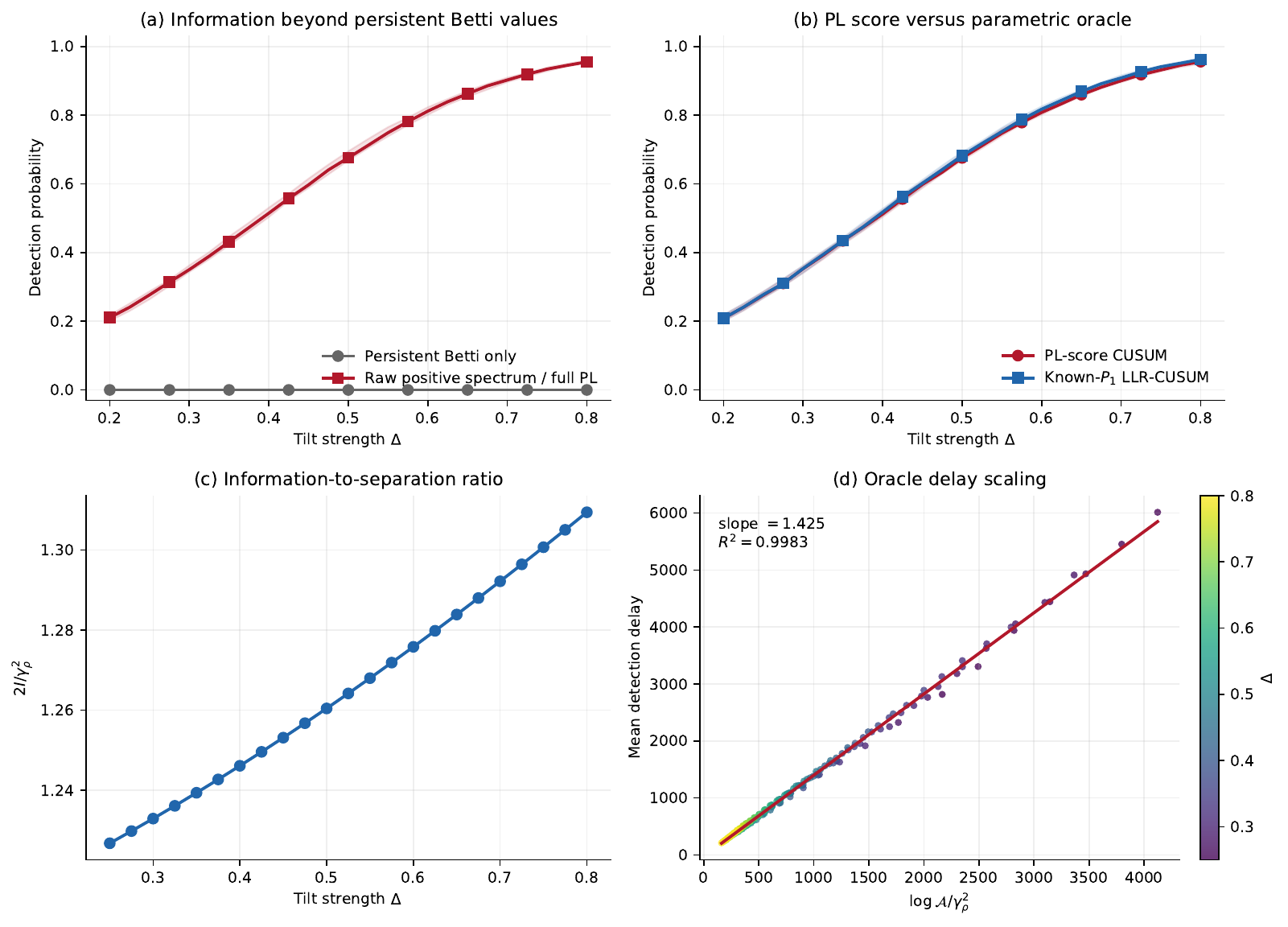}
\caption{Results from the tree-family experiments.
(a) Detection probabilities for persistent Betti features alone and for
positive PL spectral features.
(b) PL-score CUSUM and LLR-CUSUM when $P_1$ is known.
(c) The information-to-separation ratio $2I/\gamma_\rho^2$.
(d) Oracle mean delay versus $\log\mathcal A/\gamma_\rho^2$.
The shaded bands in panels (a)--(b) show interquartile ranges across
independent tree supports.}
\label{fig:tree_theory_raw}
\end{figure}

In Figure~\ref{fig:tree_theory_raw}(a), the detector based only on persistent
Betti features has detection probability zero at every tilt magnitude because
these features are constant by construction. The curves for the
positive-spectrum and full PL features coincide. The detection probability
increases from 0.211 at $\Delta=0.20$ to 0.955 at $\Delta=0.80$. The median
pre-change alarm probability ranges from 0.0480 to 0.0497. This contrast shows
that the positive PL spectrum provides the detection information in this
model.

Figure~\ref{fig:tree_theory_raw}(b) uses paired supports and paths to compare
the PL-score recursion with the parametric log-likelihood-ratio recursion. The
latter uses the known distribution $P_1$. The maximum difference between their
median detection probabilities is 0.0102. For each support and each $\Delta$,
we compute the correlation between $s(T)$ and
$\log\{P_1(T)/P_0(T)\}$ over the 64 trees in that support. At every $\Delta$,
the median correlation across the 50 supports exceeds 0.9999999998.

Figure~\ref{fig:tree_theory_raw}(c) shows that, across the 161
$(\Delta,\mathcal A)$ combinations formed from 23 tilt magnitudes,
$2I/\gamma_\rho^2$ ranges from 1.227 to 1.309 and increases slowly with
$\Delta$. This ratio remains within a stable constant range. Thus, the KL
information and ridge-whitened separation have the same local quadratic order.
Its departure from one reflects constant differences caused by finite tilt
magnitudes and ridge regularisation. For the same combinations,
Figure~\ref{fig:tree_theory_raw}(d) gives
\[
    \operatorname{Delay}
    =1.4255\frac{\log\mathcal A}{\gamma_\rho^2}-27.02,
    \qquad R^2=0.9983.
\]
This provides numerical evidence for the oracle order in
Theorem~\ref{thm:ph_pl_local_minimax_interface}. It also shows that the
proportionality constant need not equal one for finite signals.

Figure~\ref{fig:plugin_oracle_raw} next examines how the plug-in recursion
approaches the oracle recursion as the Phase~I sample size increases.

\begin{figure}[!htbp]
\centering
\includegraphics[width=\linewidth]{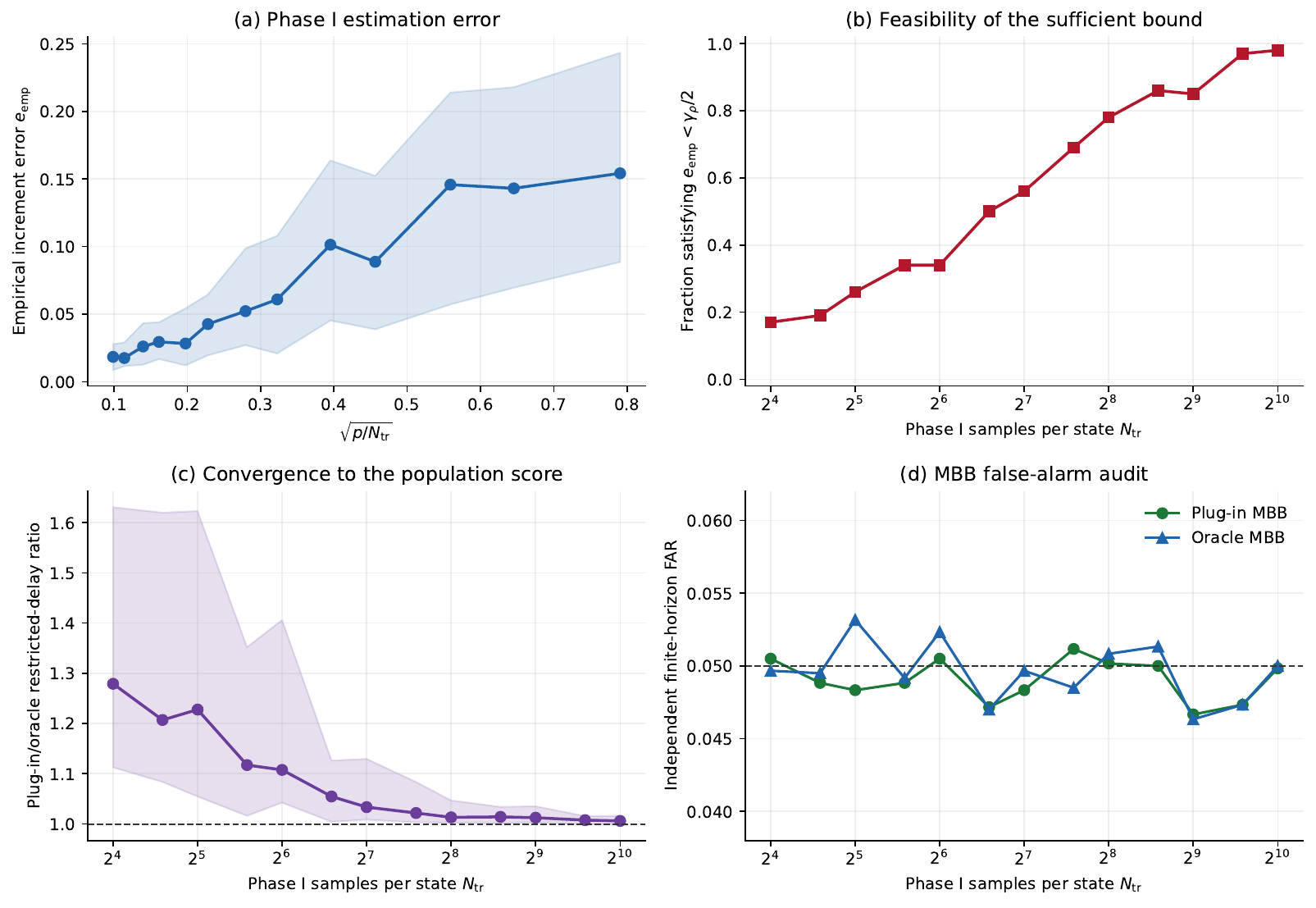}
\caption{Plug-in PL-CUSUM as the Phase~I sample size increases.
Panels (a)--(d) show the estimation error, the fraction satisfying the
finite-horizon sufficient condition, the paired plug-in-to-oracle
restricted-delay ratio, and the pre-change alarm probability under
MBB-calibrated control limits, respectively. The curves and shaded bands show
the medians and interquartile ranges across 100 Phase~I samples.}
\label{fig:plugin_oracle_raw}
\end{figure}

The median plug-in-to-oracle restricted-delay ratio decreases from 1.279 at
$N_{\rm tr}=16$ to 1.006 at $N_{\rm tr}=1024$. Its Spearman correlation with
the training sample size is approximately $-0.989$. Over the same range, the
fraction satisfying the sufficient condition $e<\gamma_\rho/2$ increases from
0.17 to 0.98. With MBB-calibrated control limits, the median pre-change alarm
probability ranges from 0.0467 to 0.0512 for the plug-in procedure and from
0.0463 to 0.0532 for the oracle procedure. The paired results show that the
delay gap narrows as the Phase~I estimation error decreases.

Finally, each sensitivity replication constructs $(P_0,P_1)$ from the full
positive spectrum. All parameter combinations within a replication are
evaluated on the same tree support and simulated paths. The spectral
truncation level is $J\in\{2,5,10,20,40,80,\text{full}\}$, and the ridge
multiplier is chosen from $\{0.125,0.5,2,8,32\}$.
The experiment covers two tilt magnitudes and 100 supports, giving 7000 paired
conditions. Figure~\ref{fig:tree_sensitivity_raw} summarizes the detection
probability and relative whitened separation.

\begin{figure}[!htbp]
\centering
\includegraphics[width=\linewidth]{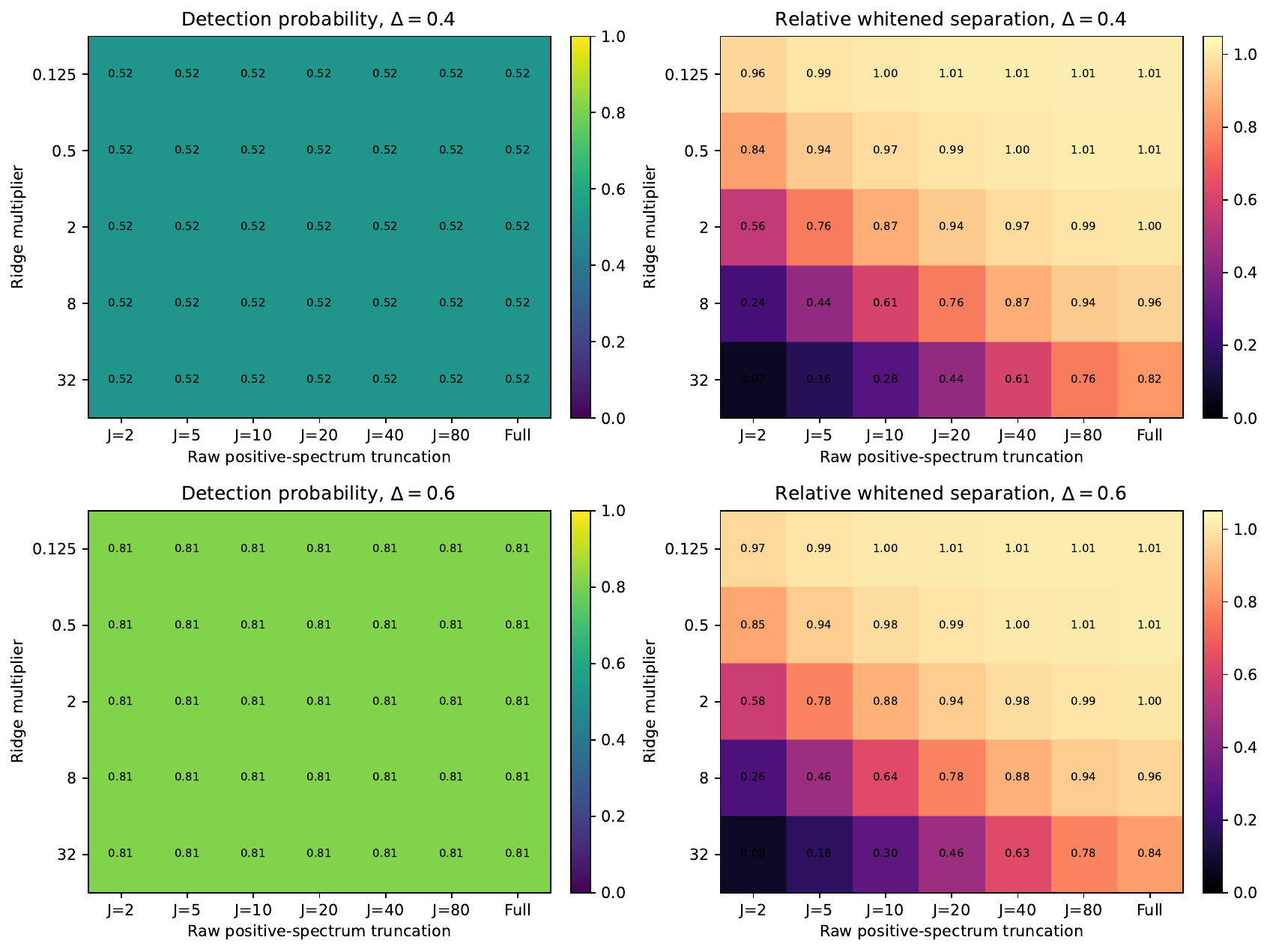}
\caption{Sensitivity to the positive-spectrum truncation level $J$ and the
relative ridge multiplier. Each row corresponds to one tilt magnitude. The
left column shows the calibrated detection probability. The right column
shows $\gamma_\rho^2/\gamma_{\rho,\mathrm{ref}}^2$, where the reference uses
the full positive spectrum and a ridge multiplier of 2. Each cell summarizes
100 paired supports.}
\label{fig:tree_sensitivity_raw}
\end{figure}

Figure~\ref{fig:tree_sensitivity_raw} shows that stronger spectral truncation
combined with a larger ridge multiplier substantially reduces the relative
whitened separation. The smallest values of
$\gamma_\rho^2/\gamma_{\rho,\mathrm{ref}}^2$ range from 0.071 to 0.079. The
corresponding control limits decrease from approximately 28--35 to 8--9.

Each parameter combination is calibrated separately. The lower feature scale
is therefore accompanied by a lower control limit at these two moderate tilt
magnitudes, and the detection probability changes little. The median detection
probability is approximately 0.52 for all combinations at $\Delta=0.4$ and
approximately 0.81 at $\Delta=0.6$. The corresponding median conditional
delays are 306 and 236--237, respectively. Across the full grid, the median
pre-change alarm probability ranges from 0.0490 to 0.0502. Thus, at the signal
strengths considered here, the detection results are stable across $J$ and
the ridge multiplier. Full calibration diagnostics and finite-sample
behaviour at higher spectral dimensions are reported in
Appendix~\ref{app:synthetic_diagnostics}.

\FloatBarrier
\subsection{Real-Data Experiments}
\label{sec:real_experiments}

The real-data experiments examine high-dimensional process monitoring and
nonlinear vibration dynamics. SWaT uses $32\times51$ windows with a stride of
16 and contains 51 synchronized process variables. Electric Motor uses
$64\times3$ windows with a stride of 64. We detect a 100~$\Omega$ electrical
fault under four operating conditions: no load, load, background vibration,
and load with background vibration. These conditions assess detection as the
load and background vibration change.

All methods are calibrated for the three target ARLs
$\mathcal A\in\{500,1000,2000\}$ using $M_{\rm cal}=2000$ state-0 paths.
After the control limits are
determined, a separate set of 2000 state-0 paths of length 200 is used to
estimate $\mathrm{ARL}_0$. Detection is evaluated using 1000 change-point
paths, with the change occurring at $\nu=100$. The data partitions and method
parameters are reported in Appendix
Tables~\ref{tab:appendix_data_protocol} and
\ref{tab:appendix_method_parameters}. If none of the 2000 state-0 paths raises
an alarm within 200 steps, the one-sided 95\% lower confidence bound is
\[
    \mathrm{ARL}_0
    >
    -\frac{2000\times200}{\log(0.05)}
    =
    133{,}523.
\]

For each change-point path, FA indicates an alarm before $\nu$. Success
indicates that the first alarm occurs within $[\nu,N]$, and Failure indicates
that no alarm occurs by the end of monitoring. EDD is the mean of $T-\nu$
conditional on Success. Thus, EDD equal to zero means that the alarm occurs at
the first post-change window.

The main comparison uses the target $\mathcal A=1000$. Complete
FA/Success/Failure results and the results for the other two ARL targets are
reported in Appendix
Tables~\ref{tab:appendix_real_arl_matrix}--%
\ref{tab:appendix_real_counts_matrix}.

\begin{table}[!htbp]
\centering
\caption{Results for SWaT and the four Electric Motor operating conditions at
$\mathcal A=1000$.
(a) Estimates of $\mathrm{ARL}_0$ from 2000 state-0 paths, where
$>133{,}523$ indicates that no alarm was observed.
(b) Success with conditional EDD in parentheses. EDD is not reported when
Success is below 0.5. Bold values indicate the highest Success within each
task.}
\label{tab:real_arl1000}

\textit{(a) Estimated $\mathrm{ARL}_0$}\par\smallskip
\resizebox{\linewidth}{!}{%
\begin{tabular}{lrrrrr}
\toprule
Method & SWaT & Motor: no load & Motor: load & Motor: background & Motor: load + background \\
\midrule
PL-CUSUM & $>133523$ & $>133523$ & $>133523$ & 18081.6 & $>133523$ \\
OK-CUSUM & 1076.8 & 1046.2 & 1017.2 & 1042.9 & 954.7 \\
Scan B & 957.5 & 963.2 & 992.5 & 977.7 & 1052.9 \\
KCUSUM & 1014.0 & 1116.5 & 848.5 & 1135.5 & 980.6 \\
NEWMA & 1135.5 & 935.4 & 1042.9 & 938.1 & 974.7 \\
PCA-CUSUM & 1073.4 & 1101.8 & 1083.9 & 1020.3 & 862.3 \\
Hotelling $T^2$ & 1039.7 & 968.9 & 1020.3 & 1039.7 & 935.4 \\
RAW-CUSUM & 1135.5 & 966.1 & 995.6 & 1020.3 & 946.3 \\
GraphScan-kNN & 1004.7 & 949.1 & 1151.0 & 998.6 & 1004.7 \\
PCA-BOCPD & 6679.2 & 1116.5 & 1020.3 & 980.6 & 995.6 \\
E-GaussianBet & 1026.7 & 989.5 & 1017.2 & 951.9 & 977.7 \\
\bottomrule
\end{tabular}
}

\medskip
\textit{(b) Success (EDD)}\par\smallskip
\resizebox{\linewidth}{!}{%
\begin{tabular}{lrrrrr}
\toprule
Method & SWaT & Motor: no load & Motor: load & Motor: background & Motor: load + background \\
\midrule
PL-CUSUM & \textbf{1.000 (0.00)} & \textbf{1.000 (0.01)} & \textbf{1.000 (0.00)} & \textbf{0.998 (0.09)} & \textbf{1.000 (0.00)} \\
OK-CUSUM & 0.914 (18.89) & 0.904 (17.33) & 0.915 (13.33) & 0.916 (21.52) & 0.909 (13.65) \\
Scan B & 0.910 (17.20) & 0.919 (23.25) & 0.933 (21.22) & 0.916 (28.24) & 0.934 (21.31) \\
KCUSUM & 0.768 (61.95) & 0.544 (50.53) & 0.634 (48.41) & 0.379 (--) & 0.644 (49.61) \\
NEWMA & 0.000 (--) & 0.000 (--) & 0.000 (--) & 0.000 (--) & 0.000 (--) \\
PCA-CUSUM & 0.000 (--) & 0.912 (0.91) & 0.904 (3.08) & 0.925 (0.01) & 0.047 (--) \\
Hotelling $T^2$ & 0.000 (--) & 0.927 (14.32) & 0.935 (5.51) & 0.934 (26.54) & 0.929 (5.94) \\
RAW-CUSUM & 0.000 (--) & 0.928 (0.44) & 0.975 (0.84) & 0.919 (0.58) & 0.926 (0.18) \\
GraphScan-kNN & 0.005 (--) & 0.233 (--) & 0.594 (13.77) & 0.357 (--) & 0.673 (12.18) \\
PCA-BOCPD & 0.000 (--) & 0.008 (--) & 0.020 (--) & 0.114 (--) & 0.031 (--) \\
E-GaussianBet & 0.000 (--) & 0.254 (--) & 0.414 (--) & 0.930 (48.11) & 0.410 (--) \\
\bottomrule
\end{tabular}
}
\end{table}

Table~\ref{tab:real_arl1000}(a) shows that, except for PCA-BOCPD on
SWaT, the estimated $\mathrm{ARL}_0$ values of the baseline methods range from
848 to 1151, close to the target of 1000. No state-0 alarms are observed for
PL-CUSUM in four of the five tasks. In the Motor background-vibration
condition, its estimated $\mathrm{ARL}_0$ is 18,082.

In Table~\ref{tab:real_arl1000}(b), PL-CUSUM achieves Success values between
0.998 and 1.000 across the five tasks, with EDD values between 0 and 0.09. On
SWaT, OK-CUSUM and Scan~B both achieve Success of approximately 0.91, with EDD
above 17. Across the Motor conditions, RAW-CUSUM achieves Success between
0.919 and 0.975, whereas PCA-CUSUM and other methods vary more substantially
across operating conditions. These results show that PL-CUSUM maintains stable
detection performance across the five tasks.

Results for the three target ARLs are shown in
Figure~\ref{fig:real_arl_tradeoff}.

\begin{figure}[!htbp]
\centering
\includegraphics[width=\linewidth]{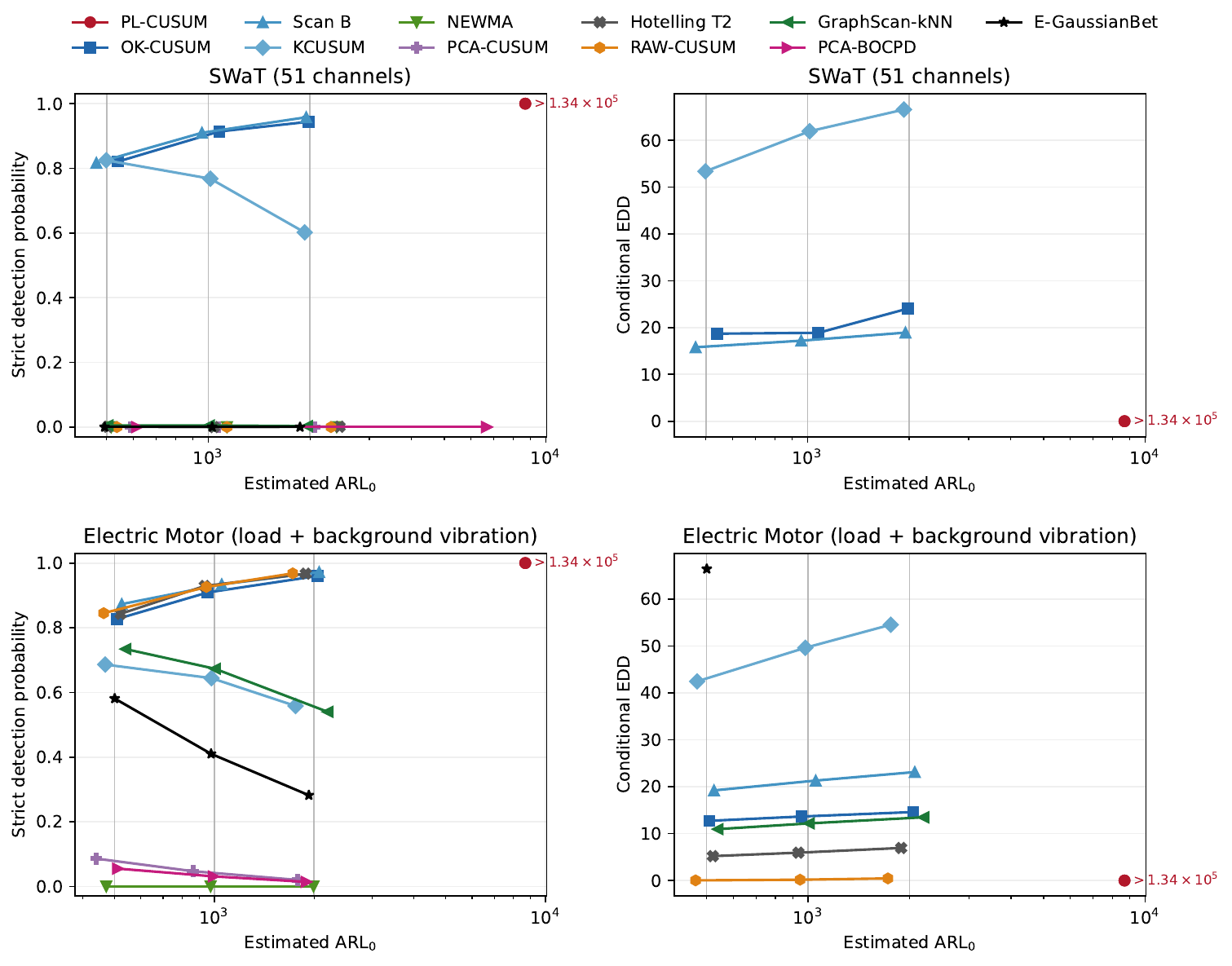}
\caption{Detection results for SWaT and Electric Motor under combined load and
background vibration at target ARLs of 500, 1000, and 2000. The horizontal
axis shows $\mathrm{ARL}_0$ estimated from separate state-0 paths. The left
column shows Success, and the right column shows conditional EDD when Success
is at least 0.5. The vertical lines indicate the target ARLs. Markers at the
right edge indicate the one-sided 95\% lower ARL bound when no state-0 alarm
is observed.}
\label{fig:real_arl_tradeoff}
\end{figure}

PL-CUSUM produces no state-0 alarms at any of the three targets for either
task. It also raises an alarm at the first post-change window. SWaT selects
$L=1$, so its point clouds describe high-dimensional cross-channel geometry.
The Motor condition selects $L=3$, which also represents delayed vibration
dynamics. Method trajectories and PL one-step increment distributions for
SWaT and all four Motor conditions are reported in
Appendix~\ref{app:real_diagnostics}.

\FloatBarrier
\section{Discussion}
\label{sec:conclusion}

PL-CUSUM converts persistent Betti features and positive persistent Laplacian
spectra into a recursively updated ridge-whitened score. This construction
allows topological information to be used for online monitoring under explicit
false-alarm and delay criteria. The oracle delay
upper bound and the local minimax lower bound are both of order
$\log\mathcal A/\gamma_\rho^2$. For finite samples, the effective separation
in the plug-in bound is $\gamma_\rho-2e$, and temporal dependence enters the
false-alarm and delay bounds through $C_{\rm dep}$. More generally, the
ridge-whitened projection and the corresponding sequential theory apply to any
Borel-measurable feature map $F$ with finite range. Persistent Betti vectors,
positive PL spectra, and their joint representation are the concrete instances
used here. These results connect the separation provided by a topological
representation with the information criteria used in classical sequential
detection \citep{lorden1971,lai1998}. The tree-family
result also shows that the positive spectrum can identify changes that
persistent Betti features cannot distinguish. The real-data experiments show
that the resulting representation can support online monitoring of
high-dimensional processes and nonlinear vibration signals.

Several questions remain open. First, the local minimax theory uses a
finite-support exponential-tilting model. For continuous weights or continuous
spectral feature vectors, the finite-support assumption could be replaced by a
local moment condition, such as finiteness of the moment-generating function
under $P_0$ in a neighbourhood of the origin. Under such conditions, upper
bounds may be developed through exponential-martingale and renewal arguments
\citep{lorden1971,tartakovsky2014}, while lower bounds may be compared with the
general information bound of \citet{lai1998}. Second, spectral truncation $J$
serves both as feature selection and as regularisation when the spectral
dimension is large relative to the Phase~I sample size. The corresponding
finite-sample results are reported in
Appendix~\ref{app:synthetic_diagnostics}.
Finally, the current PL features retain the order of observations within each
delay vector, but they are unchanged when the delay vectors within a window are
reordered. If two states have the same set of delay vectors and differ only in
their order, persistent Betti vectors and positive PL spectra cannot distinguish
them. The local model in Section~\ref{sec:finite_support_observation} describes
changes in the marginal distribution of $X_t$. It does not cover a change in
the transition law when the marginal distribution is unchanged. Ordinal
partition networks and attractor networks explicitly encode such transitions
\citep{myers2019persistent,tan2023attractor}. A future extension could construct
weighted or directed state graphs from delay vectors and monitor their spectra
after suitable PL operators have been defined. Zigzag persistence can track
topological evolution between adjacent graph snapshots
\citep{myers2023zigzag}. The first direction adds transition information within
a window. The second describes how topology persists across windows.

\section*{Data Availability}
\phantomsection
\addcontentsline{toc}{section}{Data Availability}

The SWaT and Electric Motor Vibrations data used in this study were obtained
from their public repositories. The experimental code, real-data task
configurations, tree-family simulation scripts, parameter settings, and
validation files are available in the GitHub repository \url{https://github.com/
Mousaee/pl-cusum-experiments}. Appendix~D provides extended diagnostics,
complete method parameters, and the real-data protocol. The repository result
files report the estimated ARL$_0$, FA, Success, Failure, EDD, SD, and run
status for every task, method, and target ARL. For data sets subject to
licensing or authorization restrictions, the repository lists the source and
access conditions.

\section*{Conflict of Interest}
\phantomsection
\addcontentsline{toc}{section}{Conflict of Interest}

None declared.

\section*{Acknowledgements}
\phantomsection
\addcontentsline{toc}{section}{Acknowledgements}

This work was supported by the 2024 Talent Development Fund, Tianchi Talent
Program (Second Batch), Young Doctoral Scholar Project (Bing Cai Xing [2024]
No.~97, CZ001314); the 2025 Science and Technology Development Natural Science
General Project, ``Mining and Classification of Important Nodes in Complex
Networks Based on Local Directed Homology of Directed Simplicial Sets'' (Bing
Cai Jiao [2025] No.~101, KC218501); and the 2025 Talent Development Fund,
Tianchi Talent Program (Second Batch), Young Doctoral Scholar Project (Bing Cai
Xing [2025] No.~106, CZ001327).

\bibliographystyle{abbrvnat}
\bibliography{references}

\begin{appendices}
\section{Auxiliary Theoretical Results}
\label{app:theory_proofs}

This appendix uses the finite-support exponential-tilting setup and notation
from Section~\ref{sec:finite_support_observation}. For a matrix-valued
remainder, the notation $R(\theta)=O(a_\theta)$ means that
$\|R(\theta)\|_{\mathrm{op}}=O(a_\theta)$.

\begin{lemma}
\label{lem:finite_support_exponential_tilt_main}
As $\norm{\theta}_2\to0$, the following expansions hold:
\begin{enumerate}
    \item[\textup{(a)}]
    \[
        \mu_\theta^x
        =
        \mu_0^x+\Sigma_0^x\theta+O(\norm{\theta}_2^2).
    \]
    \item[\textup{(b)}]
    \[
        \Sigma_\theta^x
        =
        \Sigma_0^x+O(\norm{\theta}_2).
    \]
    \item[\textup{(c)}]
    \[
        \KL(P_\theta\|P_0)
        =
        \frac12\,\theta^\top\Sigma_0^x\theta
        +
        O(\norm{\theta}_2^3).
    \]
    \item[\textup{(d)}]
    For each $\rho>0$,
    \[
        \gamma_\rho(\theta)
        =
        \norm{(\Sigma_0^x+\rho I_p)^{-1/2}\Sigma_0^x\theta}_2
        +
        O(\norm{\theta}_2^2).
    \]
\end{enumerate}
\end{lemma}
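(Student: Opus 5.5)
The plan is to treat $\psi_x$ as the cumulant generating function of $X$ under $P_0$ and Taylor expand it at the origin. Since the support $\{x_1,\ldots,x_M\}$ is finite and $P_0(i)>0$, $\psi_x$ is real-analytic on all of $\mathbb R^p$. Its derivatives therefore exist to every order and are bounded on any compact neighbourhood of $0$. Differentiating the definition gives $\nabla\psi_x(\theta)=\mu_\theta^x$ and $\nabla^2\psi_x(\theta)=\Sigma_\theta^x$. The third derivative is the third cumulant tensor under $P_\theta$, and it is bounded by $8\max_i\|x_i\|_2^3$ uniformly in $\theta$. Parts (a) and (b) then follow from Taylor's theorem with Lagrange remainder. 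For (a), expand $\nabla\psi_x$ to first order: $\mu_\theta^x=\mu_0^x+\Sigma_0^x\theta+O(\|\theta\|_2^2)$. For (b), expand $\nabla^2\psi_x$ to zeroth order, with the remainder controlled by the bounded third derivative, giving $\Sigma_\theta^x=\Sigma_0^x+O(\|\theta\|_2)$ in operator norm.

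For (c), write
\[
\KL(P_\theta\|P_0)=\sum_i P_\theta(i)\{\theta^\top x_i-\psi_x(\theta)\}=\theta^\top\mu_\theta^x-\psi_x(\theta).
\]
Since $\psi_x(0)=0$, a third-order Taylor expansion gives $\psi_x(\theta)=\theta^\top\mu_0^x+\tfrac12\theta^\top\Sigma_0^x\theta+O(\|\theta\|_2^3)$. Substituting part (a) gives $\theta^\top\mu_\theta^x=\theta^\top\mu_0^x+\theta^\top\Sigma_0^x\theta+O(\|\theta\|_2^3)$. Subtracting the two expansions leaves $\tfrac12\theta^\top\Sigma_0^x\theta+O(\|\theta\|_2^3)$. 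To keep the cubic order in the $\theta^\top\mu_\theta^x$ term, I will use the second-order remainder of $\nabla\psi_x$ from (a), which is $O(\|\theta\|_2^2)$, and multiply it by $\|\theta\|_2$.

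For (d), which is the main step, write $A_\theta=(\Sigma_0^x+\Sigma_\theta^x)/2+\rho I_p$ and $A_0=\Sigma_0^x+\rho I_p$. By (b), $\|A_\theta-A_0\|_{\mathrm{op}}=O(\|\theta\|_2)$, and both matrices satisfy $A\succeq\rho I_p$. The map $A\mapsto A^{-1/2}$ is Lipschitz on $\{A\succeq\rho I_p\}$ with constant depending only on $\rho$. One way to see this is the integral representation $A^{-1/2}=\pi^{-1}\int_0^\infty t^{-1/2}(A+tI)^{-1}\,dt$ combined with the resolvent identity; this gives constant $\tfrac12\rho^{-3/2}$. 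Hence $\|A_\theta^{-1/2}-A_0^{-1/2}\|_{\mathrm{op}}=O(\|\theta\|_2)$.

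Next decompose
\[
A_\theta^{-1/2}(\mu_\theta^x-\mu_0^x)=A_0^{-1/2}\Sigma_0^x\theta+A_0^{-1/2}r_\theta+(A_\theta^{-1/2}-A_0^{-1/2})(\mu_\theta^x-\mu_0^x),
\]
where $r_\theta=O(\|\theta\|_2^2)$ is the remainder from (a). The second term is $O(\|\theta\|_2^2)$ because $\|A_0^{-1/2}\|_{\mathrm{op}}\le\rho^{-1/2}$. The third term is $O(\|\theta\|_2)\cdot O(\|\theta\|_2)=O(\|\theta\|_2^2)$, since (a) gives $\|\mu_\theta^x-\mu_0^x\|_2=O(\|\theta\|_2)$. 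Finally, the reverse triangle inequality $\bigl|\|v+w\|_2-\|v\|_2\bigr|\le\|w\|_2$ passes from the vector expansion to the stated expansion of the norm $\gamma_\rho(\theta)$. The only nonroutine ingredient is the Lipschitz bound for the inverse square root. The uniform-in-$\rho$-floor constant keeps that step clean, and the ridge $\rho>0$ is exactly what makes it available even when $\Sigma_0^x$ is singular.
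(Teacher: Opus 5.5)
Your proposal is correct and follows essentially the same route as the paper. For (a)--(b) you Taylor-expand the cumulant generating function $\psi_x$, using $\nabla\psi_x=\mu_\theta^x$, $\nabla^2\psi_x=\Sigma_\theta^x$ and a bounded third derivative; for (c) you use the identity $\KL(P_\theta\|P_0)=\theta^\top\mu_\theta^x-\psi_x(\theta)$; for (d) you use a Lipschitz bound for $A\mapsto A^{-1/2}$ on $\{A\succeq\rho I_p\}$ together with the reverse triangle inequality. The only difference is that you derive the explicit constant $\tfrac12\rho^{-3/2}$ from the integral representation, whereas the paper simply invokes the local Lipschitz property of this matrix function.
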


\begin{proof}
Since $\sum_{i=1}^M P_0(i)=1$, we have $\psi_x(0)=0$. The finite-support
assumption ensures that $\psi_x$ is analytic in a neighbourhood of the origin
and that its third derivatives are bounded there. Differentiating with respect
to $\theta$ gives
\[
    \nabla\psi_x(\theta)=\mu_\theta^x,
    \qquad
    \nabla^2\psi_x(\theta)=\Sigma_\theta^x.
\]
These identities are standard properties of finite-dimensional exponential
families; see \citet{barndorff1978} and \citet{vandervaart1998}. At
$\theta=0$,
\[
    \nabla\psi_x(0)=\mu_0^x,
    \qquad
    \nabla^2\psi_x(0)=\Sigma_0^x.
\]
Taylor expansion therefore gives
\[
\begin{aligned}
    \psi_x(\theta)
    &=
    \theta^\top\mu_0^x
    +
    \frac12\,\theta^\top\Sigma_0^x\theta
    +O(\norm{\theta}_2^3),\\
    \nabla\psi_x(\theta)
    &=
    \mu_0^x+\Sigma_0^x\theta+O(\norm{\theta}_2^2),\\
    \nabla^2\psi_x(\theta)
    &=
    \Sigma_0^x+O(\norm{\theta}_2).
\end{aligned}
\]
The last two expansions establish parts \textup{(a)} and \textup{(b)}.

By the definition of the exponential-tilting family,
\[
\begin{aligned}
    \KL(P_\theta\|P_0)
    &=
    \sum_{i=1}^M P_\theta(i)
    \log\frac{P_\theta(i)}{P_0(i)}\\
    &=
    \theta^\top\mu_\theta^x-\psi_x(\theta).
\end{aligned}
\]
Substituting the expansions of $\psi_x(\theta)$ and $\mu_\theta^x$ gives
\[
    \KL(P_\theta\|P_0)
    =
    \frac12\,\theta^\top\Sigma_0^x\theta
    +O(\norm{\theta}_2^3),
\]
which proves part \textup{(c)}.

Finally, part \textup{(b)} and the local Lipschitz property of the matrix
function $M\mapsto(M+\rho I_p)^{-1/2}$ \citep{bhatia1997} imply that
\[
    \left(
    \frac{\Sigma_0^x+\Sigma_\theta^x}{2}+\rho I_p
    \right)^{-1/2}
    =
    (\Sigma_0^x+\rho I_p)^{-1/2}
    +O(\norm{\theta}_2).
\]
Combining this expansion with part \textup{(a)} yields
\[
\begin{aligned}
    &\left(
    \frac{\Sigma_0^x+\Sigma_\theta^x}{2}+\rho I_p
    \right)^{-1/2}
    (\mu_\theta^x-\mu_0^x)\\
    &\qquad=
    (\Sigma_0^x+\rho I_p)^{-1/2}\Sigma_0^x\theta
    +O(\norm{\theta}_2^2).
\end{aligned}
\]
Taking Euclidean norms on both sides and applying the reverse triangle
inequality proves part \textup{(d)}.
\end{proof}

\begin{corollary}
\label{cor:key_local_relations}
Suppose $h^\top G_\rho h>0$. Then, as $n\to\infty$,
\[
    I_{h,n}^x
    =
    \frac{1}{2n}h^\top\Sigma_0^x h+O(n^{-3/2}),
    \qquad
    \gamma_\rho^2(\theta_n)
    =
    \frac{1}{n}h^\top G_\rho h+O(n^{-3/2}).
\]
Consequently,
\[
    I_{h,n}^x
    =
    \frac{\kappa_\rho(h)}2
    \gamma_\rho^2(\theta_n)\{1+o(1)\},
    \qquad
    \kappa_\rho(h)
    =
    \frac{h^\top\Sigma_0^x h}{h^\top G_\rho h}.
\]
Moreover,
\[
    1
    \leq
    \kappa_\rho(h)
    \leq
    1+\frac{\rho}{\lambda_{\min}^+(\Sigma_0^x)},
\]
where $\lambda_{\min}^+(\Sigma_0^x)$ is the smallest positive eigenvalue of
$\Sigma_0^x$.
\end{corollary}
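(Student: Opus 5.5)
The plan is to read everything off Lemma~\ref{lem:finite_support_exponential_tilt_main} with $\theta=\theta_n=h/\sqrt n$, so that $\|\theta_n\|_2=\|h\|_2 n^{-1/2}$. The argument has three steps: expand $I_{h,n}^x$, expand $\gamma_\rho^2(\theta_n)$, then take their ratio.

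For the KL term, part (c) gives directly
\[
    I_{h,n}^x=\tfrac12\theta_n^\top\Sigma_0^x\theta_n+O(\|\theta_n\|_2^3)=\tfrac1{2n}h^\top\Sigma_0^xh+O(n^{-3/2}).
\]

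For the separation, part (d) gives $\gamma_\rho(\theta_n)=a_n+O(n^{-1})$, where
\[
    a_n=\|(\Sigma_0^x+\rho I_p)^{-1/2}\Sigma_0^x\theta_n\|_2 .
\]
Since $\Sigma_0^x$ and $(\Sigma_0^x+\rho I_p)^{-1}$ commute, $a_n^2=n^{-1}h^\top G_\rho h$, so $a_n=O(n^{-1/2})$. Squaring,
\[
    \gamma_\rho^2(\theta_n)=a_n^2+2a_n\,O(n^{-1})+O(n^{-2})=n^{-1}h^\top G_\rho h+O(n^{-3/2}).
\]
The only care needed is that the cross term is $O(n^{-3/2})$; this uses the bound on $a_n$ and nothing more.

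For the ratio, the hypothesis $h^\top G_\rho h>0$ makes the leading term of $\gamma_\rho^2(\theta_n)$ exactly of order $n^{-1}$. Dividing the two expansions then gives
\[
    \frac{I_{h,n}^x}{\gamma_\rho^2(\theta_n)}=\frac{\tfrac12 h^\top\Sigma_0^xh+O(n^{-1/2})}{h^\top G_\rho h+O(n^{-1/2})}\to\frac{\kappa_\rho(h)}2 .
\]
Here $h^\top\Sigma_0^xh\ge h^\top G_\rho h>0$ by the inequality proved next, so this limit is finite and positive, which yields the stated $\{1+o(1)\}$ form.

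For the bounds on $\kappa_\rho(h)$, I diagonalize $\Sigma_0^x=\sum_j\lambda_j v_jv_j^\top$ with $\lambda_j\ge0$. Then
\[
    h^\top\Sigma_0^xh=\sum_j\lambda_j c_j^2,\qquad h^\top G_\rho h=\sum_j\frac{\lambda_j^2}{\lambda_j+\rho}c_j^2,\qquad c_j=v_j^\top h .
\]
Only indices with $\lambda_j>0$ contribute to either sum. For each such index,
\[
    \frac{\lambda_j}{\lambda_j+\rho}\,\lambda_j\le\lambda_j,\qquad \lambda_j=\Bigl(1+\frac{\rho}{\lambda_j}\Bigr)\frac{\lambda_j^2}{\lambda_j+\rho}\le\Bigl(1+\frac{\rho}{\lambda_{\min}^+}\Bigr)\frac{\lambda_j^2}{\lambda_j+\rho}.
\]
Summing termwise gives $h^\top G_\rho h\le h^\top\Sigma_0^xh\le(1+\rho/\lambda_{\min}^+)\,h^\top G_\rho h$. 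Dividing by $h^\top G_\rho h>0$ yields $1\le\kappa_\rho(h)\le 1+\rho/\lambda_{\min}^+(\Sigma_0^x)$.

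I do not see a genuine obstacle. The points that need attention are the cross term in squaring part (d), and restricting the eigenvalue sums to positive eigenvalues so that $\lambda_{\min}^+$ is the right constant.
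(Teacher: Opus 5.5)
Your proposal is correct and follows the paper's proof essentially step for step. Both apply parts (c) and (d) of Lemma~\ref{lem:finite_support_exponential_tilt_main} at $\theta_n=h/\sqrt n$, square the separation expansion using $\|(\Sigma_0^x+\rho I_p)^{-1/2}\Sigma_0^x h\|_2^2=h^\top G_\rho h$, divide the two expansions, and compare $\lambda$ with $\lambda^2/(\lambda+\rho)$ eigenvalue by eigenvalue. (Your commutation remark is harmless but not needed: symmetry of $(\Sigma_0^x+\rho I_p)^{-1/2}$ alone gives $a_n^2=n^{-1}h^\top G_\rho h$.)
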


\begin{proof}
Apply Lemma~\ref{lem:finite_support_exponential_tilt_main} with
$\theta_n=h/\sqrt n$. The KL expansion gives
\[
    I_{h,n}^x
    =
    \KL(P_{\theta_n}\|P_0)
    =
    \frac{1}{2n}h^\top\Sigma_0^x h+O(n^{-3/2}).
\]
Set
\[
    A_\rho=(\Sigma_0^x+\rho I_p)^{-1/2}\Sigma_0^x .
\]
By Lemma~\ref{lem:finite_support_exponential_tilt_main}\textup{(d)},
\[
    \gamma_\rho(\theta_n)
    =
    \norm{A_\rho\theta_n}_2+O(\norm{\theta_n}_2^2)
    =
    n^{-1/2}\norm{A_\rho h}_2+O(n^{-1}).
\]
Since $h^\top G_\rho h>0$, we have $\norm{A_\rho h}_2>0$. Squaring both
sides yields
\[
    \gamma_\rho^2(\theta_n)
    =
    \frac1n\norm{A_\rho h}_2^2+O(n^{-3/2}).
\]
Moreover,
\[
    \norm{A_\rho h}_2^2
    =
    h^\top\Sigma_0^x(\Sigma_0^x+\rho I_p)^{-1}\Sigma_0^x h
    =
    h^\top G_\rho h .
\]
This proves the expansion for the ridge-whitened separation. Since
$\Sigma_0^x$ is positive semidefinite, $h^\top G_\rho h>0$ also implies
$h^\top\Sigma_0^x h>0$. Dividing the two expansions gives
\[
    \frac{I_{h,n}^x}{\gamma_\rho^2(\theta_n)}
    =
    \frac{h^\top\Sigma_0^x h}{2h^\top G_\rho h}
    \{1+O(n^{-1/2})\}
    =
    \frac{\kappa_\rho(h)}2\{1+o(1)\}.
\]

Finally, for every positive eigenvalue $\lambda$ of $\Sigma_0^x$,
\[
    \frac{\lambda^2}{\lambda+\rho}
    \leq
    \lambda
    \leq
    \left\{1+\frac{\rho}{\lambda_{\min}^+(\Sigma_0^x)}\right\}
    \frac{\lambda^2}{\lambda+\rho}.
\]
The spectral decomposition of $\Sigma_0^x$ therefore gives
\[
    h^\top G_\rho h
    \leq
    h^\top\Sigma_0^x h
    \leq
    \left\{1+\frac{\rho}{\lambda_{\min}^+(\Sigma_0^x)}\right\}
    h^\top G_\rho h.
\]
Dividing by $h^\top G_\rho h>0$ proves the stated bounds on
$\kappa_\rho(h)$.
\end{proof}

The following lemma gives a lower bound on the no-change average run length of
the oracle CUSUM.

\begin{lemma}
\label{lem:cusum_arl}
Suppose that $\{\ell_t\}_{t\geq1}$ are independent and identically distributed
under $\mathbb P_\infty$ and satisfy
$\mathbb E_\infty e^{\ell_t}=1$. For any $\eta>0$, let $\tau_\eta$ denote
Page's CUSUM stopping time with increments $\ell_t$ and threshold $\eta$. Then
\[
    \mathbb E_\infty\tau_\eta\geq e^\eta.
\]
\end{lemma}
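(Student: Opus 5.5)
The plan is to compare the CUSUM with the Shiryaev--Roberts-type statistic driven by the likelihood ratios $e^{\ell_t}$, and to use the fact that $\{R_n-n\}$ for the latter is a zero-mean martingale under $\mathbb P_\infty$.

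Write $W_n$ for Page's CUSUM statistic, so $W_0=0$ and $W_n=\max\{0,W_{n-1}+\ell_n\}$. Then
\[
    W_n=\max_{1\le k\le n+1}\sum_{t=k}^{n}\ell_t,
\]
where the empty sum for $k=n+1$ is zero. Define the Shiryaev--Roberts statistic
\[
    R_n=\sum_{k=1}^{n}\prod_{t=k}^{n}e^{\ell_t},
    \qquad
    R_n=(1+R_{n-1})e^{\ell_n},\quad R_0=0.
\]
Each term of $R_n$ is nonnegative, so $R_n\ge \max_{1\le k\le n}\exp\{\sum_{t=k}^n\ell_t\}$. On the event $W_n>0$, the maximum defining $W_n$ is attained at some $k\le n$, and hence $e^{W_n}\le R_n$. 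Because $\eta>0$, the event $\{W_n\ge\eta\}$ forces $W_n>0$, so it implies $R_n\ge e^{\eta}$. It follows that $\tau_\eta\ge\sigma:=\inf\{n:R_n\ge e^\eta\}$.

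Next, the independence of the $\ell_t$ and the condition $\mathbb E_\infty e^{\ell_t}=1$ give
\[
    \mathbb E_\infty\bigl[R_n\mid\mathscr F_{n-1}\bigr]=1+R_{n-1}.
\]
Hence $R_n-n$ is a zero-mean $\mathbb P_\infty$-martingale with respect to the natural filtration of $(\ell_t)$.

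If $\mathbb E_\infty\sigma=\infty$, the claim is immediate, since $\mathbb E_\infty\tau_\eta\ge\mathbb E_\infty\sigma$. Otherwise I apply optional stopping at $\sigma\wedge n$:
\[
    \mathbb E_\infty R_{\sigma\wedge n}=\mathbb E_\infty(\sigma\wedge n).
\]
Letting $n\to\infty$, monotone convergence handles the right side. On the left side, $R_{\sigma\wedge n}\ge0$ and $R_{\sigma\wedge n}\to R_\sigma$ almost surely, because $\sigma<\infty$ almost surely when it has finite mean. Fatou's lemma then gives
\[
    \mathbb E_\infty R_\sigma\le\mathbb E_\infty\sigma.
\]
Since $R_\sigma\ge e^\eta$ by the definition of $\sigma$, we obtain $\mathbb E_\infty\tau_\eta\ge\mathbb E_\infty\sigma\ge e^\eta$.

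The main obstacle is justifying the limit in the optional stopping step. The Fatou route only requires nonnegativity and $\sigma<\infty$, and the case $\sigma=\infty$ with positive probability is already covered because then $\mathbb E_\infty\sigma=\infty$. A second point to check is the comparison $e^{W_n}\le R_n$, which holds only on $W_n>0$. This is why the assumption $\eta>0$ is used.
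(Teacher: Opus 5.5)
Your proof is correct and follows essentially the same argument as the paper. Both compare the CUSUM with the Shiryaev--Roberts statistic, use that $R_n-n$ is a $\mathbb P_\infty$-martingale together with $e^{W_n}\le R_n$ when $W_n>0$, and then apply optional stopping. The only cosmetic difference is that you pass through the Shiryaev--Roberts stopping time $\sigma\le\tau_\eta$ and use Fatou, whereas the paper applies optional stopping directly at $\tau_\eta\wedge m$ with the bound $V_{\tau_\eta\wedge m}\ge e^{\eta}\mathbf 1\{\tau_\eta\le m\}$.
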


\begin{proof}
Let $\mathscr F_t^\ell=\sigma(\ell_1,\ldots,\ell_t)$ and define
\[
    V_0=0,\qquad
    V_t=(1+V_{t-1})e^{\ell_t}
    =
    \sum_{k=1}^t
    \exp\left\{\sum_{j=k}^t\ell_j\right\}.
\]
By independence of the increments and the identity
$\mathbb E_\infty e^{\ell_t}=1$,
\[
    \mathbb E_\infty(V_t\mid\mathscr F_{t-1}^\ell)
    =
    1+V_{t-1}.
\]
Hence $\{V_t-t\}_{t\geq0}$ is a martingale.

Page's CUSUM statistic can also be written as
\[
    R_t
    =
    \max\left\{
    0,\,
    \max_{1\leq k\leq t}\sum_{j=k}^t\ell_j
    \right\}.
\]
Therefore, on the event $\{\tau_\eta<\infty\}$,
\[
    V_{\tau_\eta}
    \geq e^{R_{\tau_\eta}}
    \geq e^\eta.
\]
For any positive integer $m$, applying the optional stopping theorem to the
bounded stopping time $\tau_\eta\wedge m$ gives
\[
    \mathbb E_\infty(\tau_\eta\wedge m)
    =
    \mathbb E_\infty V_{\tau_\eta\wedge m}
    \geq
    e^\eta\mathbb P_\infty(\tau_\eta\leq m).
\]
If $\mathbb P_\infty(\tau_\eta<\infty)=1$, letting $m\to\infty$ yields
$\mathbb E_\infty\tau_\eta\geq e^\eta$. If
$\mathbb P_\infty(\tau_\eta<\infty)<1$, then
$\mathbb E_\infty\tau_\eta=\infty$, and the conclusion again follows.
\end{proof}

When both $P_0$ and $P_{\theta_n}$ are known, the LLR-CUSUM serves as the
oracle benchmark used below. CUSUM was introduced by \citet{page1954}.
\citet{lorden1971} formulated the worst-case delay criterion, and
\citet{moustakides1986} proved the optimality of LLR-CUSUM under this
criterion.

\begin{proposition}
\label{prop:pl_cusum_upper_lan_main}
For any $\eta>0$, let $\tau_\eta$ be the oracle LLR-CUSUM stopping time for
known $P_0$ and $P_{\theta_n}$. Then
\[
    \operatorname{ARL}_0(\tau_\eta)\geq e^\eta,
    \qquad
    \bar d_{\theta_n}(\tau_\eta)
    \leq
    \frac{\eta+O(n^{-1/2})}{I_{h,n}^x}.
\]
In particular, if $\eta=\log\mathcal A_n$ and $\mathcal A_n\to\infty$, then
\[
    \bar d_{\theta_n}(\tau_{\log\mathcal A_n})
    \leq
    \frac{2\log\mathcal A_n}{\gamma_\rho^2(\theta_n)}
    \{1+o(1)\}.
\]
\end{proposition}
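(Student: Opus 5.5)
The ARL bound comes directly from Lemma~\ref{lem:cusum_arl}. Under $\mathbb P_\infty$ the indices $U_t$ are i.i.d.\ with law $P_0$, so the increments $\ell_t=\ell_{h,n}(U_t)$ are i.i.d. Because $P_{\theta_n}\ll P_0$,
\[
\mathbb E_\infty e^{\ell_t}=\sum_{i}P_0(i)\,\frac{P_{\theta_n}(i)}{P_0(i)}=1 .
\]
The lemma then gives $\mathrm{ARL}_0(\tau_\eta)\ge e^\eta$. With $\eta=\log\mathcal A_n$ this is at least $\mathcal A_n$.

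For the delay bound we reduce Lorden's criterion to the delay from a zero start.

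\emph{Reduction.} The recursion has the form $R_m=\max\{0,R_{m-1}+\ell\}$, which is monotone in its starting value. Hence, on any history $\mathscr F^X_{\nu-1}$, the CUSUM started at $R_{\nu-1}\ge0$ crosses $\eta$ no later than a fresh CUSUM started at $0$ at time $\nu-1$. The post-change increments are i.i.d.\ under $P_{\theta_n}$ and independent of the past, so
\[
\bar d_{\theta_n}(\tau_\eta)\le \mathbb E_{1,\theta_n}\tau_\eta .
\]

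\emph{Comparison with a random walk.} Since $R_m\ge\sum_{j\le m}\ell_j$, the CUSUM stops no later than the first passage $\sigma_\eta=\inf\{m:S_m\ge\eta\}$ of the random walk $S_m=\sum_{j\le m}\ell_j$. Under $P_{\theta_n}$ this walk has drift $\mathbb E\ell=\KL(P_{\theta_n}\|P_0)=I^x_{h,n}>0$.

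\emph{Wald's identity.} Applying Wald's identity to $\sigma_\eta\wedge m$ and letting $m\to\infty$ gives
\[
I^x_{h,n}\,\mathbb E\sigma_\eta=\mathbb E S_{\sigma_\eta}\le \eta+\max_i\ell_{h,n}(i).
\]
The overshoot is bounded because the support is finite. Indeed,
\[
\ell_{h,n}(i)=\theta_n^\top x_i-\psi_x(\theta_n)\le 2\|\theta_n\|_2\max_i\|x_i\|_2=O(n^{-1/2}),
\]
using $|\psi_x(\theta)|\le\|\theta\|\max_i\|x_i\|$. This yields
\[
\bar d_{\theta_n}(\tau_\eta)\le \frac{\eta+O(n^{-1/2})}{I^x_{h,n}}.
\]

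\emph{Rewriting in terms of $\gamma_\rho$.} Set $\eta=\log\mathcal A_n\to\infty$; the $O(n^{-1/2})$ term is then $o(\eta)$. Corollary~\ref{cor:key_local_relations} gives
\[
I^x_{h,n}=\frac{\kappa_\rho(h)}{2}\,\gamma_\rho^2(\theta_n)\{1+o(1)\},\qquad \kappa_\rho(h)\ge1 .
\]
Therefore
\[
\frac{1}{I^x_{h,n}}\le\frac{2}{\gamma_\rho^2(\theta_n)}\{1+o(1)\},
\]
which gives the stated bound.

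The main obstacle is the reduction from Lorden's ess-sup to the zero-start expected stopping time. This needs both the pathwise monotonicity of the CUSUM in its initial value and the independence of post-change observations from $\mathscr F^X_{\nu-1}$, which the i.i.d.\ local model provides. A minor check is that $\mathbb E\sigma_\eta<\infty$, so that Wald's identity applies. This follows because the walk has bounded increments and positive drift, for example via the truncation argument above combined with monotone convergence.
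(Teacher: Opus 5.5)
Your proposal is correct and follows essentially the same route as the paper. Both arguments use the ARL bound from Lemma~\ref{lem:cusum_arl}, dominate the Lorden delay by the first-passage time $\sigma_\eta$ of the post-change log-likelihood random walk, apply Wald's identity with an $O(n^{-1/2})$ overshoot, and finish with Corollary~\ref{cor:key_local_relations} and $\kappa_\rho(h)\ge1$. The differences are cosmetic: you pass through the zero-start CUSUM by monotonicity in the initial value, whereas the paper goes directly to $R_{\nu+r-1}\ge\sum_{j=0}^{r-1}\ell_{h,n}(U_{\nu+j})$ using $R_{\nu-1}\ge0$; and your truncation of $\sigma_\eta$ at $m$ justifies $\mathbb E\sigma_\eta<\infty$ slightly more explicitly than the paper does.
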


\begin{proof}
Write
\[
    \ell_{h,n}(i)=\log\{P_{\theta_n}(i)/P_0(i)\},
    \qquad R_0=0,
\]
and define
\[
    R_t=\max\{0,R_{t-1}+\ell_{h,n}(U_t)\},
    \qquad
    \tau_\eta=\inf\{t\ge1:R_t\ge\eta\}.
\]

\pfstep{ARL bound}
Under the no-change distribution, $U_s$ has distribution $P_0$, and hence
\[
    \mathbb E_\infty e^{\ell_{h,n}(U_s)}
    =
    \sum_i P_0(i)\frac{P_{\theta_n}(i)}{P_0(i)}
    =1.
\]
Lemma~\ref{lem:cusum_arl} therefore gives
$\mathbb E_\infty\tau_\eta\geq e^\eta$.

\pfstep{First-passage bound}
By the definition of the exponential tilting family,
\[
    \ell_{h,n}(i)
    =
    \theta_n^\top x_i-\psi_x(\theta_n).
\]
Since the support is finite and $\theta_n=h/\sqrt n$, there exists a constant
$C<\infty$, independent of $n$, such that
\[
    \max_i|\ell_{h,n}(i)|\leq Cn^{-1/2}.
\]
Under the post-change distribution, let $Y_t=\ell_{h,n}(U_t)$. The sequence
$\{Y_t\}$ is independent and identically distributed, with
\[
    \mathbb E_{\theta_n}Y_t
    =
    \sum_iP_{\theta_n}(i)
    \log\frac{P_{\theta_n}(i)}{P_0(i)}
    =
    I_{h,n}^x>0.
\]
Define the first-passage time
\[
    \sigma_\eta
    =
    \inf\left\{r\ge1:\sum_{j=1}^r Y_j\ge\eta\right\}.
\]
Because $Y_t$ is bounded and has positive mean,
$\mathbb E_{\theta_n}\sigma_\eta<\infty$, so Wald's identity applies
\citep{wald1947}. By the definition of $\sigma_\eta$,
\[
    0\leq
    \sum_{j=1}^{\sigma_\eta}Y_j-\eta
    \leq Cn^{-1/2}.
\]
It follows that
\[
    I_{h,n}^x\mathbb E_{\theta_n}\sigma_\eta
    =
    \mathbb E_{\theta_n}\sum_{j=1}^{\sigma_\eta}Y_j
    \leq \eta+Cn^{-1/2},
\]
and therefore
\[
    \mathbb E_{\theta_n}\sigma_\eta
    \leq
    \frac{\eta+O(n^{-1/2})}{I_{h,n}^x}.
\]

\pfstep{Lorden delay bound}
For any change point $\nu$, define
\[
    \sigma_\eta^{(\nu)}
    =
    \inf\left\{r\ge1:
    \sum_{j=0}^{r-1}\ell_{h,n}(U_{\nu+j})\ge\eta
    \right\}.
\]
The CUSUM recursion implies that, for every $r\geq1$,
\[
    R_{\nu+r-1}
    \geq
    \sum_{j=0}^{r-1}\ell_{h,n}(U_{\nu+j}).
\]
Consequently,
\[
    (\tau_\eta-\nu+1)_+\leq\sigma_\eta^{(\nu)}.
\]
Under $\mathbb P_{\nu,\theta_n}$, conditional on $\mathscr F_{\nu-1}^X$,
$\sigma_\eta^{(\nu)}$ has the same distribution as $\sigma_\eta$. Hence,
almost surely,
\[
    \mathbb E_{\nu,\theta_n}
    \left\{(\tau_\eta-\nu+1)_+\mid\mathscr F_{\nu-1}^X\right\}
    \leq
    \mathbb E_{\theta_n}\sigma_\eta.
\]
Taking the supremum over $\nu$ and the essential supremum over
$\mathscr F_{\nu-1}^X$ gives
\[
    \bar d_{\theta_n}(\tau_\eta)
    \leq
    \frac{\eta+O(n^{-1/2})}{I_{h,n}^x}.
\]

\pfstep{Separation form}
If $\eta=\log\mathcal A_n$ and $\mathcal A_n\to\infty$, then
$O(n^{-1/2})=o(\log\mathcal A_n)$. Corollary~\ref{cor:key_local_relations}
and $\kappa_\rho(h)\geq1$ therefore yield
\[
\begin{aligned}
    \bar d_{\theta_n}(\tau_{\log\mathcal A_n})
    &\leq
    \frac{2\log\mathcal A_n}
    {\kappa_\rho(h)\gamma_\rho^2(\theta_n)}
    \{1+o(1)\}\\
    &\leq
    \frac{2\log\mathcal A_n}{\gamma_\rho^2(\theta_n)}
    \{1+o(1)\}.
\end{aligned}
\]
\end{proof}

\begin{lemma}
\label{lem:low_alarm_block_main}
Let $\mathcal A>0$, let $(\mathscr F_t)_{t\geq0}$ be a filtration, and let
$\tau$ be an $(\mathscr F_t)$-stopping time satisfying
$\mathbb E_\infty\tau\geq\mathcal A$.

Given $L\in\mathbb N$ and $c_0>1$, suppose that
$c_0L/\mathcal A<1$. Then there exists $j\geq0$ such that
\[
\begin{gathered}
    H_j=\{\tau>jL\}\in\mathscr F_{jL},
    \qquad
    \mathbb P_\infty(H_j)>0,\\
    \mathbb P_\infty
    \bigl(\tau\leq(j+1)L\mid H_j\bigr)
    \leq
    c_0\frac{L}{\mathcal A}.
\end{gathered}
\]
\end{lemma}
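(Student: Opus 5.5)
We argue by contradiction, using the tail-sum formula for $\mathbb E_\infty\tau$ over blocks of length $L$. Set $p_j=\mathbb P_\infty(\tau>jL)$, so that $p_0=1$ because $\tau\ge1$. Each event $H_j=\{\tau>jL\}$ lies in $\mathscr F_{jL}$, since $\tau$ is a stopping time and $\{\tau\le jL\}\in\mathscr F_{jL}$. Suppose the claim fails. Then every $j$ with $p_j>0$ satisfies
\[
    \mathbb P_\infty\bigl(\tau\le(j+1)L\mid H_j\bigr)>q:=c_0L/\mathcal A,
\]
which means $p_{j+1}<(1-q)p_j$. If some $p_j=0$, then $p_k=0$ for all $k\ge j$, and the same bound still holds trivially in the weak form $p_{j+1}\le(1-q)p_j$. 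By induction, $p_j\le(1-q)^j$ for all $j\ge0$. Here $q\in(0,1)$ by the hypothesis $c_0L/\mathcal A<1$.

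Next we bound the expectation. For $jL<t\le(j+1)L$ we have $\mathbb P_\infty(\tau\ge t)\le\mathbb P_\infty(\tau>jL)=p_j$. Hence
\[
    \mathbb E_\infty\tau
    =\sum_{t\ge1}\mathbb P_\infty(\tau\ge t)
    \le L\sum_{j\ge0}p_j.
\]
We need strict inequality in the comparison below, so we track it. Since $p_0=1$ and $p_1<1-q$ (the strict inequality holds at $j=0$ because $p_0>0$), we get $p_j\le p_1(1-q)^{j-1}$ for $j\ge1$. Summing,
\[
    \sum_{j\ge0}p_j\le 1+\frac{p_1}{q}<1+\frac{1-q}{q}=\frac1q.
\]
Therefore $\mathbb E_\infty\tau<L/q=\mathcal A/c_0<\mathcal A$, which contradicts $\mathbb E_\infty\tau\ge\mathcal A$. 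In fact $c_0>1$ gives slack, so the strictness bookkeeping is not essential: the weak bound $\mathbb E_\infty\tau\le\mathcal A/c_0<\mathcal A$ already suffices.

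The first $j$ at which the conditional bound holds must have $p_j>0$, and the argument gives this automatically. We only ever assumed failure at indices with $p_j>0$. If every such index failed, the geometric decay would follow at all indices, because indices with $p_j=0$ contribute nothing further. So some index with $p_j>0$ satisfies the bound, which is exactly the required $j$.

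The main technical point is the step above: handling indices where $H_j$ is null, so that the conditional probability is undefined, and making sure the negated statement is applied only where it is meaningful. Also, if $\mathbb E_\infty\tau=\infty$, the contradiction argument still applies, since the geometric bound forces $\mathbb E_\infty\tau$ to be finite.
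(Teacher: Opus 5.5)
Your proof is correct and follows the paper's argument essentially step for step: you negate the conclusion on the non-null $H_j$, derive $\mathbb P_\infty(H_j)\le(1-c_0L/\mathcal A)^j$ by induction, and use the block tail-sum to get $\mathbb E_\infty\tau\le L\sum_{j\ge0}\mathbb P_\infty(H_j)\le\mathcal A/c_0<\mathcal A$. One small correction is that $p_0=1$ is not guaranteed here, because the filtration starts at $t=0$ and so $\tau=0$ is possible; as in the paper, $p_0\le1$ is all the induction needs, and the extra strictness bookkeeping is unnecessary since $c_0>1$ already gives the strict inequality.
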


\begin{proof}
Suppose that the conclusion fails. Then, for every $j$ satisfying
$\mathbb P_\infty(H_j)>0$,
\[
    \mathbb P_\infty
    \bigl(\tau\leq(j+1)L\mid H_j\bigr)
    >
    c_0\frac{L}{\mathcal A}.
\]
Since $H_{j+1}\subset H_j$, this is equivalent to
\[
    \mathbb P_\infty(H_{j+1}\mid H_j)
    <
    1-c_0\frac{L}{\mathcal A}.
\]
If $\mathbb P_\infty(H_j)=0$, then
$\mathbb P_\infty(H_{j+1})=0$. Since
$\mathbb P_\infty(H_0)\leq1$, induction gives
\[
    \mathbb P_\infty(H_j)
    \leq
    \left(1-c_0\frac{L}{\mathcal A}\right)^j,
    \qquad j\geq0.
\]
Using the tail-sum formula for nonnegative integer-valued random variables,
\[
\begin{aligned}
    \mathbb E_\infty\tau
    &=
    \sum_{t\geq0}\mathbb P_\infty(\tau>t)\\
    &\leq
    L\sum_{j\geq0}\mathbb P_\infty(\tau>jL)\\
    &=
    L\sum_{j\geq0}\mathbb P_\infty(H_j)\\
    &\leq
    L\sum_{j\geq0}
    \left(1-c_0\frac{L}{\mathcal A}\right)^j\\
    &=
    \frac{\mathcal A}{c_0}
    <
    \mathcal A.
\end{aligned}
\]
This contradicts $\mathbb E_\infty\tau\geq\mathcal A$.
\end{proof}

\begin{proposition}
\label{prop:local_minimax_lower_main}
Suppose that $n_k\to\infty$, $\mathcal A_k\to\infty$, and
\[
    \frac{\log\mathcal A_k}{I_{h,n_k}^x}=o(\mathcal A_k),
    \qquad
    \log n_k=o(\log\mathcal A_k).
\]
Then
\[
    R_{n_k,\mathcal A_k}(h)
    \geq
    \frac{\log\mathcal A_k}{I_{h,n_k}^x}\{1+o(1)\}
    =
    \frac{2}{\kappa_\rho(h)}
    \frac{\log\mathcal A_k}{\gamma_\rho^2(\theta_{n_k})}
    \{1+o(1)\}.
\]
\end{proposition}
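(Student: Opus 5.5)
The plan is the standard Lai-type change-of-measure argument, localized to a block in which the procedure rarely alarms before the change. Lemma~\ref{lem:low_alarm_block_main} supplies that block. Fix $\varepsilon\in(0,1)$ and write $I_k=I_{h,n_k}^x$. Set
\[
    L_k=\Bigl\lfloor (1-\varepsilon)\frac{\log\mathcal A_k}{I_k}\Bigr\rfloor .
\]
The assumption $\log\mathcal A_k/I_k=o(\mathcal A_k)$ gives $c_0L_k/\mathcal A_k\to0$ for any fixed $c_0>1$, say $c_0=2$. Take any $\tau\in\mathcal C_{\mathcal A_k}$. Lemma~\ref{lem:low_alarm_block_main} then yields $j$ with $H_j=\{\tau>jL_k\}$, $\mathbb P_\infty(H_j)>0$, and
\[
    \mathbb P_\infty(\tau\le (j+1)L_k\mid H_j)\le 2L_k/\mathcal A_k .
\]
Put $\nu=jL_k+1$. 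Since $H_j\in\mathscr F^X_{\nu-1}$, and the pre- and post-change laws agree on $\mathscr F^X_{\nu-1}$, Lorden's essential supremum satisfies
\[
    \bar d_{\theta_{n_k}}(\tau)\ge \mathbb E_{\nu,\theta_{n_k}}[(\tau-\nu+1)_+\mid H_j].
\]
It therefore suffices to show that
\[
    \mathbb P_{\nu,\theta_{n_k}}(\tau-\nu+1\le L_k\mid H_j)\to0
\]
uniformly in $\tau$. Once that holds, the conditional delay is at least $L_k\{1-o(1)\}$. Letting $\varepsilon\downarrow0$ and applying Corollary~\ref{cor:key_local_relations} gives the stated form $2\log\mathcal A_k/\{\kappa_\rho(h)\gamma_\rho^2\}\{1+o(1)\}$.

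For that probability, let
\[
    \Lambda_r=\sum_{t=\nu}^{\nu+r-1}\ell_{h,n_k}(U_t)
\]
and split $\{\tau<\nu+L_k\}\cap H_j$ according to whether $\max_{r\le L_k}\Lambda_r$ is below or above $(1-\varepsilon/2)\log\mathcal A_k$. On the first piece, change measure to $\mathbb P_\infty$ conditional on $H_j$. The likelihood ratio on $\mathscr F^X_{\tau}$ restricted to $\{\tau<\nu+L_k\}$ equals $e^{\Lambda_{\tau-\nu+1}}\le \mathcal A_k^{1-\varepsilon/2}$. This piece is therefore at most
\[
    \mathcal A_k^{1-\varepsilon/2}\cdot 2L_k/\mathcal A_k=2L_k\mathcal A_k^{-\varepsilon/2}.
\]
Here the assumption $\log n_k=o(\log\mathcal A_k)$ enters. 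It gives $I_k\asymp n_k^{-1}$ by Corollary~\ref{cor:key_local_relations}, so $L_k\lesssim n_k\log\mathcal A_k=\mathcal A_k^{o(1)}$, and the bound tends to zero. The second piece does not depend on $\tau$. Under $\mathbb P_{\theta_{n_k}}$ the increments $Y_t$ are i.i.d., bounded by $Cn_k^{-1/2}$, and have mean $I_k$. Hence $\max_{r\le L_k}\Lambda_r\ge(1-\varepsilon/2)\log\mathcal A_k$ requires the centred partial-sum maximum $\max_{r\le L_k}\sum_{t<r}(Y_t-I_k)$ to exceed $(\varepsilon/2)\log\mathcal A_k$. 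By Doob's inequality the probability is at most
\[
    \frac{L_k\operatorname{Var}(Y)}{(\varepsilon\log\mathcal A_k/2)^2}.
\]
The variance satisfies $\operatorname{Var}(Y)=\theta^\top\Sigma_0\theta\{1+o(1)\}\asymp 2I_k$, so the bound is of order $1/\log\mathcal A_k\to0$. Bernstein's inequality would also work.

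The main obstacle is the change-of-measure step. I have to verify that on $H_j$ the conditional law of $(X_\nu,X_{\nu+1},\ldots)$ under $\mathbb P_{\nu,\theta}$ is the product $P_\theta^{\otimes}$, independent of $\mathscr F^X_{\nu-1}$. The Radon--Nikodym identity is applied at the stopping time $\tau\wedge(\nu+L_k-1)$. I also have to make sure the bounds are uniform over $\tau\in\mathcal C_{\mathcal A_k}$, so that they pass to the infimum defining $R_{n_k,\mathcal A_k}(h)$. The remaining equality of the two displayed forms is immediate from Corollary~\ref{cor:key_local_relations}.
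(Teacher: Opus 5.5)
Your proposal is correct and matches the paper's proof essentially step for step. The shared steps are the block length $\lfloor(1-\epsilon)\log\mathcal A_k/I_k\rfloor$, the low-alarm block from Lemma~\ref{lem:low_alarm_block_main} with change point $s_k+1$, and a change-of-measure bound of the form $\mathcal A_k^{\beta+o(1)}\cdot c_0\ell_k/\mathcal A_k$ with $\beta<1$. In both, $\ell_k=\mathcal A_k^{o(1)}$ follows from $\log n_k=o(\log\mathcal A_k)$.

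The only difference is minor. You control $\max_{r\le L_k}\Lambda_r$ with Doob's inequality and change measure at $\tau\wedge(\nu+L_k-1)$. The paper uses only the terminal block sum, the event $\{\Lambda_k\le(1+\delta)\ell_kI_k\}$, Chebyshev, and the likelihood ratio on the fixed $\sigma$-field $\mathscr F^X_{s_k+\ell_k}$, which avoids both the maximal inequality and the stopping-time Radon--Nikodym step.
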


\begin{proof}
Write $I_k:=I_{h,n_k}^x$, $\theta_k:=h/\sqrt{n_k}$, and
$g_k(\cdot):=\ell_{h,n_k}(\cdot)$. Let
$\tau_k\in\mathcal C_{\mathcal A_k}$ be arbitrary, and write it as $\tau$
below.

\pfstep{Local asymptotic order}
Since $\Sigma_0^x$ is positive semidefinite, $h^\top G_\rho h>0$ implies
$h\notin\ker\Sigma_0^x$. Hence $h^\top\Sigma_0^x h>0$. By
Corollary~\ref{cor:key_local_relations},
\[
    I_k
    =
    \frac{1}{2n_k}h^\top\Sigma_0^x h+O(n_k^{-3/2}),
    \qquad
    I_k\asymp n_k^{-1}.
\]
Moreover, $g_k(i)=\theta_k^\top x_i-\psi_x(\theta_k)$ and
$\|\theta_k\|_2=\|h\|_2/\sqrt{n_k}$. Therefore,
\[
    \max_i|g_k(i)|=O(n_k^{-1/2}),
\]
and
\[
    \operatorname{Var}_{\theta_k}\{g_k(U)\}
    =
    \theta_k^\top\Sigma_{\theta_k}^x\theta_k
    =
    O(n_k^{-1}).
\]

\pfstep{Block length}
Let $\epsilon\in(0,1)$ and set
\[
    \ell_k
    =
    \left\lfloor
    (1-\epsilon)\frac{\log\mathcal A_k}{I_k}
    \right\rfloor .
\]
Then
\[
    \ell_k I_k
    =
    (1-\epsilon)\log\mathcal A_k+O(I_k)
    =
    (1-\epsilon)\log\mathcal A_k\{1+o(1)\},
\]
and
\[
    \ell_k\asymp n_k\log\mathcal A_k\to\infty,
    \qquad
    \ell_k=o(\mathcal A_k).
\]
The last relation follows from
$\log\mathcal A_k/I_k=o(\mathcal A_k)$.

\pfstep{Low-alarm-probability block}
Choose $c_0>1$. Since $\ell_k=o(\mathcal A_k)$,
$c_0\ell_k/\mathcal A_k<1$ for all sufficiently large $k$. By
Lemma~\ref{lem:low_alarm_block_main}, there exists $j_k\geq0$ such that
$s_k=j_k\ell_k$ and
\[
    H_k=\{\tau>s_k\}\in\mathscr F_{s_k}^X
\]
satisfy $\mathbb P_\infty(H_k)>0$ and
\[
    \mathbb P_\infty(\tau\leq s_k+\ell_k\mid H_k)
    \leq
    c_0\frac{\ell_k}{\mathcal A_k}.
\]
Set the change point to $\nu_k=s_k+1$. On $H_k$, no alarm occurs before the
change.

\pfstep{Likelihood-ratio concentration}
Let
\[
    \Lambda_k
    =
    \sum_{t=\nu_k}^{\nu_k+\ell_k-1}g_k(U_t).
\]
Since $H_k\in\mathscr F_{s_k}^X$, conditional on $H_k$, the block variables are
independent and identically distributed according to $P_{\theta_k}$ under
$\mathbb P_{\nu_k,\theta_k}$. Therefore,
\[
    \mathbb E_{\nu_k,\theta_k}(\Lambda_k\mid H_k)
    =
    \ell_k I_k,
\]
and
\[
    \operatorname{Var}_{\nu_k,\theta_k}(\Lambda_k\mid H_k)
    =
    \ell_k O(n_k^{-1})
    =
    O(\log\mathcal A_k).
\]
Together with
$\ell_kI_k=(1-\epsilon)\log\mathcal A_k\{1+o(1)\}$, this gives
\[
    \frac{\operatorname{Var}_{\nu_k,\theta_k}(\Lambda_k\mid H_k)}
    {(\ell_k I_k)^2}
    =
    O\!\left(\frac{1}{\log\mathcal A_k}\right)
    \to0.
\]
For each $\delta>0$, define
\[
    E_k:=\{\Lambda_k\leq(1+\delta)\ell_k I_k\}.
\]
Chebyshev's inequality gives
\[
    \mathbb P_{\nu_k,\theta_k}(E_k^c\mid H_k)\to0.
\]

\pfstep{Conditional change of measure}
Define the block alarm event
\[
    \mathcal D_k
    =
    \{s_k<\tau\leq s_k+\ell_k\}
    \in
    \mathscr F_{s_k+\ell_k}^X.
\]
Since $H_k=\{\tau>s_k\}$,
\[
    \mathbb P_\infty(\mathcal D_k\mid H_k)
    =
    \mathbb P_\infty(\tau\leq s_k+\ell_k\mid H_k)
    \leq
    c_0\frac{\ell_k}{\mathcal A_k}.
\]
Moreover, $H_k$ depends only on pre-change observations. Hence
\[
    \mathbb P_\infty(H_k)
    =
    \mathbb P_{\nu_k,\theta_k}(H_k).
\]
On $\mathscr F_{s_k+\ell_k}^X$, $e^{\Lambda_k}$ is the likelihood ratio of the
post-change measure with respect to the no-change measure. Therefore,
\[
\begin{aligned}
    \mathbb P_\infty(\mathcal D_k\mid H_k)
    &=
    \mathbb E_{\nu_k,\theta_k}
    \left[e^{-\Lambda_k}\mathbf 1_{\mathcal D_k}\mid H_k\right]\\
    &\geq
    e^{-(1+\delta)\ell_k I_k}
    \left\{
    \mathbb P_{\nu_k,\theta_k}(\mathcal D_k\mid H_k)
    -
    \mathbb P_{\nu_k,\theta_k}(E_k^c\mid H_k)
    \right\}.
\end{aligned}
\]
Consequently,
\[
\begin{aligned}
    \mathbb P_{\nu_k,\theta_k}(\mathcal D_k\mid H_k)
    &\leq
    \mathbb P_{\nu_k,\theta_k}(E_k^c\mid H_k)\\
    &\quad+
    c_0\frac{\ell_k}{\mathcal A_k}
    \exp\{(1+\delta)\ell_k I_k\}.
\end{aligned}
\]

\pfstep{No-alarm probability under the ARL constraint}
Choose $\delta>0$ such that
$\beta:=(1-\epsilon)(1+\delta)<1$. The block-length relation gives
\[
    \exp\{(1+\delta)\ell_k I_k\}
    =
    \mathcal A_k^{\beta+o(1)}.
\]
Since $\ell_k=O(n_k\log\mathcal A_k)$,
\[
    \log\ell_k
    =
    O\{\log n_k+\log\log\mathcal A_k\}
    =
    o(\log\mathcal A_k).
\]
Hence $\ell_k=\mathcal A_k^{o(1)}$, and
\[
    \frac{\ell_k}{\mathcal A_k}
    \mathcal A_k^{\beta+o(1)}
    =
    \mathcal A_k^{\beta-1+o(1)}
    \to0.
\]
Combining this result with the likelihood-ratio concentration bound yields
\[
    \mathbb P_{\nu_k,\theta_k}(\mathcal D_k\mid H_k)\to0.
\]
Equivalently,
\[
    \mathbb P_{\nu_k,\theta_k}
    (\tau>s_k+\ell_k\mid H_k)
    \to1.
\]

\pfstep{Delay lower bound}
Since $H_k\in\mathscr F_{\nu_k-1}^X$ and
$\mathbb P_{\nu_k,\theta_k}(H_k)>0$, Lorden's criterion gives
\[
\begin{aligned}
    \bar d_{\theta_k}(\tau)
    &\geq
    \mathbb E_{\nu_k,\theta_k}
    \bigl[(\tau-\nu_k+1)_+\mid H_k\bigr]\\
    &\geq
    \ell_k\,\mathbb P_{\nu_k,\theta_k}
    (\tau>s_k+\ell_k\mid H_k)\\
    &\geq
    \ell_k\{1+o(1)\}
    =
    (1-\epsilon)\frac{\log\mathcal A_k}{I_k}\{1+o(1)\}.
\end{aligned}
\]
The preceding remainder terms do not depend on the chosen stopping-time
sequence. Therefore, take the infimum over
$\tau_k\in\mathcal C_{\mathcal A_k}$. Letting $k\to\infty$ and then
$\epsilon\downarrow0$ gives
\[
    R_{n_k,\mathcal A_k}(h)
    \geq
    \frac{\log\mathcal A_k}{I_k}\{1+o(1)\}.
\]
Finally, Corollary~\ref{cor:key_local_relations} gives
\[
    I_k
    =
    \frac{\kappa_\rho(h)}{2}
    \gamma_\rho^2(\theta_{n_k})\{1+o(1)\}.
\]
Hence
\[
    R_{n_k,\mathcal A_k}(h)
    \geq
    \frac{2}{\kappa_\rho(h)}
    \frac{\log\mathcal A_k}
    {\gamma_\rho^2(\theta_{n_k})}
    \{1+o(1)\}.
\]
\end{proof}

\begin{proposition}
\label{prop:finite_image_pl_main}
Given the scale grid $0<\varepsilon_1<\cdots<\varepsilon_K$, homological
truncation level $q_{\max}$, finite scale-pair set $\mathcal S$, and spectral
truncation level $J$ from Section~\ref{sec:pl_spectral_features}, the maps
$B^{a,b}$, $\Lambda^{a,b}$, and $\Phi^{a,b}$ are Borel measurable and have
finite ranges for every $(a,b)\in\mathcal S$.
\end{proposition}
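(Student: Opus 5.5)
The argument factors every feature map through a finite combinatorial object. Fix $(a,b)\in\mathcal S$. For a point cloud $Z\in(\mathbb R^{dL})^m$, the whole filtration $\VR_{\varepsilon_1}(Z)\subseteq\cdots\subseteq\VR_{\varepsilon_K}(Z)$ is determined by which pairs $\{i,j\}$ satisfy $\|z_i-z_j\|_2\le\varepsilon_r$, for $r=1,\ldots,K$. Encode this as the \emph{threshold pattern}
\[
    \pi(Z)=\bigl(\mathbf 1\{\|z_i-z_j\|_2\le\varepsilon_r\}\bigr)_{1\le i<j\le m,\,1\le r\le K}\in\{0,1\}^{K\binom m2}.
\]
Every simplex of $\VR_{\varepsilon_r}(Z)$ is a clique in the corresponding threshold graph. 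Hence $\pi(Z)$ determines every complex $\VR_{\varepsilon_r}(Z)$, every chain group $C_q^r$ with its orthonormal simplex basis, and every boundary matrix $\partial_q^r$. Each of these matrices has entries in $\{-1,0,1\}$ once a fixed vertex ordering is used for orientations.

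Next, show that each feature is a deterministic function of $\pi(Z)$. The persistent Betti number $\beta_q^{a,b}(Z)$ is the rank of the inclusion-induced map on homology, computed from the boundary matrices, so it depends only on $\pi(Z)$. The space $\mathcal C_q^{a,b}$, the map $\delta_{q+1}^{a,b}$, its adjoint, and hence $\mathcal L_q^{a,b}(Z)$ are built from these same matrices by linear algebra. Therefore the eigenvalues, the sorted top-$J$ positive eigenvalues, and the zero-padded vector $\lambda_q^{a,b}(Z)$ are functions of $\pi(Z)$ as well. So $B^{a,b}=\beta\circ\pi$, $\Lambda^{a,b}=\lambda\circ\pi$ and $\Phi^{a,b}=(\beta,\lambda)\circ\pi$ for maps $\beta,\lambda$ defined on the finite set $\{0,1\}^{K\binom m2}$. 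Each range therefore has at most $2^{K\binom m2}$ elements.

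For measurability, each coordinate of $\pi$ is the indicator of the closed set $\{Z:\|z_i-z_j\|_2\le\varepsilon_r\}$, because $Z\mapsto\|z_i-z_j\|_2$ is continuous. Thus $\pi$ is Borel. Any map from a finite set with the discrete $\sigma$-algebra into $\mathbb R^{p}$ is measurable. The composition is then Borel, and equivalently each level set $\{Z:F(Z)=y\}$ is a finite union of Borel atoms $\pi^{-1}(s)$.

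The main point needing care is the well-definedness of the factorization for $\mathcal L_q^{a,b}$. One must check that $\mathcal C_q^{a,b}$ and the adjoint of $\delta_{q+1}^{a,b}$ depend only on the combinatorial structure, not on the coordinates of $Z$. This holds because the inner products are fixed by the simplex basis, so the operator is a specific matrix determined by $\pi(Z)$. One must also check that eigenvalue ordering with ties and zero-padding gives a single well-defined vector; the sorted spectrum is a function of the matrix alone, so it does.
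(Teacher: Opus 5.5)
Your proposal is correct and follows the paper's proof essentially step for step. Your threshold pattern $\pi(Z)\in\{0,1\}^{K\binom m2}$ is exactly the indicator vector $C(Z)$ the paper uses, and both arguments obtain finiteness of the ranges and Borel measurability in the same way: the features factor through this finite encoding, and each coordinate of the encoding is the indicator of a closed set.
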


\begin{proof}
For a point cloud $Z=(y_1,\ldots,y_m)$, define the finite indicator vector
\[
    C(Z)
    =
    \left(
    \mathbf 1\{\|y_r-y_s\|_2\leq\varepsilon_k\}
    \right)_{1\leq r<s\leq m,\,1\leq k\leq K}
    \in\{0,1\}^{K\binom m2}.
\]
For each $k$, $C(Z)$ specifies every edge of
$\VR_{\varepsilon_k}(Z)$. A vertex set is a Vietoris--Rips simplex if and only
if every pair of its vertices is joined by an edge. Thus, $C(Z)$ determines all
complexes on the scale grid. It also determines the corresponding boundary
matrices, persistent combinatorial Laplacians, persistent Betti vectors, and
positive-spectrum vectors. Since $C(Z)$ takes only finitely many values, the
ranges of $B^{a,b}$, $\Lambda^{a,b}$, and $\Phi^{a,b}$ are finite.

For every $r<s$ and $k$, the distance map
$Z\mapsto\|y_r-y_s\|_2$ is continuous. Hence
$\{Z:\|y_r-y_s\|_2\leq\varepsilon_k\}$ is closed, and $C$ is Borel
measurable. Each of the three feature maps is the composition of $C$ with a
map on a finite set and is therefore Borel measurable.
\end{proof}

Consequently, a feature map $F$ selected by Phase~I from a finite configuration
set has finite range. The distribution of $F(Z)$ therefore satisfies the
finite-support assumption in Section~\ref{sec:finite_support_observation}.

For state $k\in\{0,1\}$, suppose that the raw observation process
$(x_t)_{t\geq1}$ is strictly stationary under $\mathbb P_k$. Define its
$\beta$-mixing coefficients by
\[
    \beta_{x,k}(r)
    =
    \sup_{t\geq1}
    \beta\!\left(
        \sigma(x_1,\ldots,x_t),
        \sigma(x_{t+r},x_{t+r+1},\ldots)
    \right),
    \qquad r\geq1,
\]
where $\beta(\mathcal A,\mathcal B)$ is the $\beta$-mixing coefficient between
two $\sigma$-fields. Beta-mixing is also called absolute regularity
\citep{bradley2005}, and we set $\beta_{x,k}(0)=1$. Given the Phase~I
estimates, $Z_n$, $X_n=F(Z_n)$, and
$Y_n=\widehat u^\top\widehat W X_n-\widehat c$ are Borel-measurable functions
of the $n$th observation window of length $w$.

\begin{proposition}
\label{prop:overlap_decimation_main}
In state $k\in\{0,1\}$, suppose that $(x_t)_{t\geq1}$ is strictly stationary.
For any Borel-measurable function $H$, let
\[
    V_n
    =
    H\!\left(
        x_{(n-1)h_{\rm stride}+1},
        \ldots,
        x_{(n-1)h_{\rm stride}+w}
    \right),
\]
and let $\beta_{V,k}$ denote its $\beta$-mixing coefficients.

\textup{(1)} The sequence $(V_n)$ is strictly stationary and, for every
$r\geq1$,
\[
    \beta_{V,k}(r)
    \leq
    \beta_{x,k}\!\left((r h_{\rm stride}-w+1)_+\right).
\]
Hence, if the raw process is geometrically $\beta$-mixing, then $(V_n)$ is
also geometrically $\beta$-mixing.

\textup{(2)} If the raw process is $q$-dependent, then $(V_n)$ is
\[
    \left(
        \left\lceil\frac{w+q}{h_{\rm stride}}\right\rceil-1
    \right)\text{-dependent}.
\]
In particular, if the raw observations are independent, then $(V_n)$ is
$(g-1)$-dependent, where $g=\lceil w/h_{\rm stride}\rceil$.
\end{proposition}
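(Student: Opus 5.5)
The plan is to express the window sequence as a measurable function of the raw process and read off stationarity and mixing from that representation. For $n\ge1$, write $V_n=H(\mathbf x_n)$ with $\mathbf x_n=(x_{(n-1)h_{\rm stride}+1},\ldots,x_{(n-1)h_{\rm stride}+w})$.

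\emph{Stationarity.} Shifting the index $n$ by one shifts the raw process by $h_{\rm stride}$. Consequently, for any $n_1<\cdots<n_k$ and shift $s$, the vector $(\mathbf x_{n_1+s},\ldots,\mathbf x_{n_k+s})$ is a fixed Borel function of $(x_{t+sh_{\rm stride}})_{t\ge1}$, and it has the same law as $(\mathbf x_{n_1},\ldots,\mathbf x_{n_k})$ by strict stationarity of $(x_t)$. Applying $H$ coordinatewise preserves equality in law, so $(V_n)$ is strictly stationary.

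\emph{Mixing bound in part (1).} Fix $n$ and $r\ge1$. The past $\sigma$-field $\sigma(V_1,\ldots,V_n)$ is contained in $\sigma(x_1,\ldots,x_{(n-1)h_{\rm stride}+w})$. The future $\sigma(V_{n+r},V_{n+r+1},\ldots)$ is contained in $\sigma(x_{(n+r-1)h_{\rm stride}+1},\ldots)$. Set $t=(n-1)h_{\rm stride}+w$. The future block then starts at index $t+(rh_{\rm stride}-w+1)$.

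I will use the monotonicity of $\beta(\mathcal A,\mathcal B)$ under shrinking either $\sigma$-field. This is immediate from the definition as a supremum over finite partitions, or equivalently from the total-variation characterization. It gives
\[
\beta\bigl(\sigma(V_1,\ldots,V_n),\sigma(V_{n+r},\ldots)\bigr)\le\beta_{x,k}\bigl(rh_{\rm stride}-w+1\bigr)
\]
whenever $rh_{\rm stride}-w+1\ge1$. Taking the supremum over $n$ yields the stated bound. When the gap is $\le0$, the bound reads $\beta_{x,k}(0)=1$, which is trivially true because $\beta$-coefficients never exceed one.

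Geometric decay follows because $C_k\exp\{-c_k(rh_{\rm stride}-w+1)\}=C_k'\exp(-c_kh_{\rm stride}r)$ for $r$ large. For the finitely many small $r$, the bound $1$ can be absorbed by enlarging the constant.

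\emph{Dependence in part (2).} I will use the same index bookkeeping. $V_1,\ldots,V_n$ depend only on raw indices $\le t$, and $V_{n+r},\ldots$ depend only on indices $\ge t+rh_{\rm stride}-w+1$. Under $q$-dependence of the raw process, the two blocks are independent once the gap exceeds $q$, that is, once $rh_{\rm stride}-w+1>q+1$. Here I take the convention that $q$-dependence means independence at separation $>q$, i.e.\ index difference $\ge q+1$; this convention needs to be checked against the paper's.

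The condition is equivalent to $rh_{\rm stride}\ge w+q+1$, which holds when $r\ge\lceil (w+q)/h_{\rm stride}\rceil$. Hence $(V_n)$ is $(\lceil (w+q)/h_{\rm stride}\rceil-1)$-dependent. The independent case is $q=0$, which gives $g-1$.

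\emph{Expected obstacle.} The main difficulty is the off-by-one bookkeeping between the index conventions for $\beta_{x,k}(r)$ (future starting at $t+r$) and for $q$-dependence. I will verify the boundary case directly. For example, with $h_{\rm stride}=w$ and independent raw observations, consecutive windows are disjoint, so the window sequence must be $0$-dependent. This matches $g-1=0$, and also $\beta_{V}(1)\le\beta_x(1)$.
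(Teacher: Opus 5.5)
Your route is the same as the paper's. You write $V_n$ as a measurable function of the $n$th raw window. You then bound the past and future window $\sigma$-fields by raw $\sigma$-fields ending at $t=(n-1)h_{\rm stride}+w$ and starting at $t+rh_{\rm stride}-w+1$, and invoke monotonicity of the $\beta$-coefficient. The stationarity argument and part (1) are fine, including the $(\cdot)_+$ case and absorbing small $r$ into the constant.

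There is, however, an off-by-one error in part (2) as written. You state the independence condition as $rh_{\rm stride}-w+1>q+1$, equivalently $rh_{\rm stride}\ge w+q+1$. You then assert that this holds whenever $r\ge\lceil (w+q)/h_{\rm stride}\rceil$. That implication is false whenever $h_{\rm stride}$ divides $w+q$: with $q=0$, $h_{\rm stride}=w$ and $r=1$ one has $rh_{\rm stride}=w<w+1$. Taken literally, your condition only yields $(\lceil (w+q+1)/h_{\rm stride}\rceil-1)$-dependence. For disjoint windows that is $1$-dependence, which contradicts the sanity check at the end of your proposal.

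Your convention is that independence holds once the index difference is at least $q+1$, and it agrees with the paper's. The last past index is $t$ and the first future index is $t+rh_{\rm stride}-w+1$, so the correct requirement is $rh_{\rm stride}-w+1\ge q+1$, i.e.\ $rh_{\rm stride}\ge w+q$. For integer $r$ this is exactly equivalent to $r\ge\lceil (w+q)/h_{\rm stride}\rceil$, which is how the paper phrases it. With that correction your proof coincides with the paper's.
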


\begin{proof}
$V_n$ is a measurable function of the $n$th observation window of length $w$.
Thus, strict stationarity of the raw process implies strict stationarity of
$(V_n)$.

For any $t,r\geq1$, the first $t$ windows use raw observations only up to index
$(t-1)h_{\rm stride}+w$. The windows from $t+r$ onward start at index
$(t+r-1)h_{\rm stride}+1$. Monotonicity of the $\beta$-mixing coefficient
under restriction to smaller $\sigma$-fields gives
\[
    \beta_{V,k}(r)
    \leq
    \beta_{x,k}\!\left((r h_{\rm stride}-w+1)_+\right).
\]
The geometric $\beta$-mixing result follows directly.

If the raw process is $q$-dependent and
$r\geq\lceil(w+q)/h_{\rm stride}\rceil$, then
$r h_{\rm stride}-w+1\geq q+1$. The corresponding window $\sigma$-fields are
therefore independent. Hence, $(V_n)$ is
$\bigl(\lceil(w+q)/h_{\rm stride}\rceil-1\bigr)$-dependent. Setting $q=0$
gives the result for independent raw observations.
\end{proof}

Taking $H$ to be the point-cloud construction, its composition with $F$, or
its composition with $F$ and the whitened projection,
Proposition~\ref{prop:overlap_decimation_main} applies to $(Z_n)$, $(X_n)$,
and $(Y_n)$, respectively.

The next proposition retains the local-tilting model introduced before
Theorem~\ref{thm:ph_pl_local_minimax_interface}. It also retains the notation
$\theta_{n_k}$, $\gamma_k$, and $R_{n_k,\mathcal A_k}(h)$. It gives a delay
lower bound when $\gamma_k^2\mathcal A_k$ is of order at most
$\log\mathcal A_k$.
\begin{proposition}
\label{prop:below_boundary_main}
If $n_k\to\infty$, $\mathcal A_k\to\infty$, and
\[
    \gamma_k^2\mathcal A_k=O(\log\mathcal A_k),
\]
then, for every $\zeta\in(0,1)$,
\[
    R_{n_k,\mathcal A_k}(h)
    \geq
    \mathcal A_k^{1-\zeta}\{1+o(1)\}.
\]
\end{proposition}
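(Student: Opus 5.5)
We adapt the change-of-measure argument of Proposition~\ref{prop:local_minimax_lower_main}, replacing the logarithmic block length by one of polynomial order in $\mathcal A_k$. Fix $\zeta\in(0,1)$ and an arbitrary $\tau\in\mathcal C_{\mathcal A_k}$. Write $I_k=I^{\Phi}_{h,n_k}$.

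\textbf{Step 1: the KL information is small.} By Corollary~\ref{cor:key_local_relations}, $I_k=\tfrac{\kappa_{\Phi,\rho}(h)}{2}\gamma_k^2\{1+o(1)\}$ with $1\le\kappa_{\Phi,\rho}(h)\le C_{\Phi,\rho}$. The hypothesis then gives $I_k\mathcal A_k=O(\log\mathcal A_k)$. Moreover $\max_i|g_k(i)|=O(n_k^{-1/2})$ and $\operatorname{Var}_{\theta_k}g_k(U)=O(n_k^{-1})=O(I_k)$, since $I_k\asymp n_k^{-1}$.

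\textbf{Step 2: block choice.} Set $\ell_k=\lfloor\mathcal A_k^{1-\zeta'}\rfloor$ with $0<\zeta'<\zeta$. Then $\ell_k=o(\mathcal A_k)$, so Lemma~\ref{lem:low_alarm_block_main} with some $c_0>1$ yields $s_k=j_k\ell_k$ and $H_k=\{\tau>s_k\}$ satisfying $\mathbb P_\infty(\tau\le s_k+\ell_k\mid H_k)\le c_0\ell_k/\mathcal A_k\to0$. Put $\nu_k=s_k+1$. The block log-likelihood ratio $\Lambda_k$ has conditional mean $\ell_kI_k=O(\mathcal A_k^{-\zeta'}\log\mathcal A_k)\to0$ and conditional variance $O(\ell_kI_k)\to0$. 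Chebyshev's inequality therefore gives $\mathbb P_{\nu_k,\theta_k}(\Lambda_k>1\mid H_k)\to0$.

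\textbf{Step 3: change of measure and conclusion.} As in the earlier proof,
\[
\mathbb P_{\nu_k,\theta_k}(\mathcal D_k\mid H_k)\le \mathbb P_{\nu_k,\theta_k}(\Lambda_k>1\mid H_k)+e\,c_0\ell_k/\mathcal A_k\to0,
\]
so $\mathbb P_{\nu_k,\theta_k}(\tau>s_k+\ell_k\mid H_k)\to1$. Lorden's criterion then gives $\bar d_{\theta_k}(\tau)\ge\ell_k\{1+o(1)\}\ge\mathcal A_k^{1-\zeta'}\{1+o(1)\}\ge\mathcal A_k^{1-\zeta}$ for large $k$. The remainders are uniform in $\tau$, so taking the infimum over $\mathcal C_{\mathcal A_k}$ proves the claim. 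Alternatively, keep $\zeta'=\zeta$ and state the bound directly as $\mathcal A_k^{1-\zeta}\{1+o(1)\}$.

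\textbf{Main obstacle.} The step needing most care is Step 2. One must check that $\ell_kI_k\to0$, which requires $\mathcal A_k^{-\zeta}\log\mathcal A_k\to0$; this holds because $\mathcal A_k\to\infty$. One must also use $n_k\to\infty$ together with the finite support to guarantee that the Taylor relations of Corollary~\ref{cor:key_local_relations} apply and that $h^\top\Sigma_0^\Phi h>0$. No condition such as $\log n_k=o(\log\mathcal A_k)$ is needed here, because the block length no longer depends on $n_k$.
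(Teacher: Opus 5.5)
Your proposal is correct and follows essentially the same argument as the paper. Both proofs take a block of length polynomial in $\mathcal A_k$ from Lemma~\ref{lem:low_alarm_block_main}, use Chebyshev's inequality to show that the block log-likelihood ratio is negligible (because $\ell_k I_k=O(\mathcal A_k^{-\zeta'}\log\mathcal A_k)\to0$), apply the change-of-measure bound to the block alarm probability, and evaluate Lorden's criterion on $H_k$. The remaining differences are cosmetic: you use $\zeta'<\zeta$ and a fixed cutoff $\{\Lambda_k>1\}$, where the paper uses $m_k=\lceil\mathcal A_k^{1-\zeta}\rceil$ and a vanishing cutoff $\epsilon_k$.
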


\begin{proof}
Let $\tau_k\in\mathcal C_{\mathcal A_k}$ be arbitrary. For notational
simplicity, write $\tau=\tau_k$, and set
\[
    m_k=\left\lceil\mathcal A_k^{1-\zeta}\right\rceil .
\]

\pfstep{Local asymptotic order}
Write $I_k=I_{h,n_k}^{\Phi}$. By
Corollary~\ref{cor:key_local_relations},
\[
    I_k
    =
    \frac{\kappa_{\Phi,\rho}(h)}{2}
    \gamma_k^2\{1+o(1)\},
    \qquad
    I_k\asymp n_k^{-1}.
\]
The assumption $\gamma_k^2\mathcal A_k=O(\log\mathcal A_k)$ gives
\[
\begin{aligned}
    m_kI_k
    &=
    O\{\mathcal A_k^{1-\zeta}\gamma_k^2\}\\
    &=
    O(\mathcal A_k^{-\zeta}\log\mathcal A_k)
    \to0.
\end{aligned}
\]

\pfstep{Low-alarm-probability block}
Let $c_0>1$. Since $m_k=o(\mathcal A_k)$,
$c_0m_k/\mathcal A_k<1$ for all sufficiently large $k$. By
Lemma~\ref{lem:low_alarm_block_main}, there exists $j_k\geq0$ such that
$s_k=j_km_k$ and
\[
    H_k=\{\tau>s_k\}\in\mathscr F_{s_k}^X,
    \qquad
    \mathbb P_\infty(H_k)>0,
\]
with
\[
    \mathbb P_\infty(\tau\leq s_k+m_k\mid H_k)
    \leq
    c_0\frac{m_k}{\mathcal A_k}
    \to0.
\]

\pfstep{Block log-likelihood ratio}
Set the change point to $\nu_k=s_k+1$, and write
\[
    \Lambda_k
    =
    \sum_{t=\nu_k}^{\nu_k+m_k-1}
    \ell_{h,n_k}(U_t).
\]
Since $H_k\in\mathscr F_{\nu_k-1}^X$ and the pre-change law agrees with
$\mathbb P_\infty$,
\[
    \mathbb P_{\nu_k,\theta_{n_k}}(H_k)
    =
    \mathbb P_\infty(H_k)>0.
\]
Conditional on $H_k$, the support-point indices in the block are independent
and identically distributed according to $P_{\theta_{n_k}}$. Therefore,
\[
    \mathbb E_{\nu_k,\theta_{n_k}}(\Lambda_k\mid H_k)
    =
    m_kI_k
    \to0.
\]
Moreover,
\[
\begin{aligned}
    \operatorname{Var}_{\theta_{n_k}}
    \{\ell_{h,n_k}(U)\}
    &=
    \theta_{n_k}^{\top}
    \Sigma_{\theta_{n_k}}^\Phi
    \theta_{n_k}\\
    &=
    O(n_k^{-1})
    =
    O(I_k).
\end{aligned}
\]
Thus,
\[
    \operatorname{Var}_{\nu_k,\theta_{n_k}}(\Lambda_k\mid H_k)
    =
    O(m_kI_k)
    \to0.
\]
Chebyshev's inequality implies that $\Lambda_k\to0$ in probability under
$\mathbb P_{\nu_k,\theta_{n_k}}(\cdot\mid H_k)$.

\pfstep{Block alarm probability}
Let
\[
    \mathcal D_k
    =
    \{s_k<\tau\leq s_k+m_k\}.
\]
On $H_k$, the event $\mathcal D_k$ coincides with
$\{\tau\leq s_k+m_k\}$. Hence
\[
    \mathbb P_\infty(\mathcal D_k\mid H_k)
    \leq
    c_0\frac{m_k}{\mathcal A_k}
    \to0.
\]
Since $\Lambda_k\to0$ in probability, there exists a sequence
$\epsilon_k\to0$ such that
\[
    \mathbb P_{\nu_k,\theta_{n_k}}(\Lambda_k>\epsilon_k\mid H_k)\to0.
\]
The blockwise change-of-measure identity gives
\[
\begin{aligned}
    \mathbb P_{\nu_k,\theta_{n_k}}(\mathcal D_k\mid H_k)
    &=
    \mathbb E_\infty
    \left[
        e^{\Lambda_k}\mathbf 1_{\mathcal D_k}
        \mid H_k
    \right]\\
    &\leq
    \mathbb P_{\nu_k,\theta_{n_k}}(\Lambda_k>\epsilon_k\mid H_k)
    +
    e^{\epsilon_k}
    \mathbb P_\infty(\mathcal D_k\mid H_k)\\
    &\to0.
\end{aligned}
\]

\pfstep{Delay lower bound}
Since $H_k\in\mathscr F_{\nu_k-1}^X$, Lorden's criterion gives
\[
\begin{aligned}
    \bar d_{\theta_{n_k}}(\tau)
    &\ge
    \mathbb E_{\nu_k,\theta_{n_k}}
    \bigl[(\tau-\nu_k+1)_+\mid H_k\bigr] \\
    &\ge
    m_k
    \mathbb P_{\nu_k,\theta_{n_k}}
    (\tau>s_k+m_k\mid H_k) \\
    &=
    m_k\{1+o(1)\}\\
    &=
    \mathcal A_k^{1-\zeta}\{1+o(1)\}.
\end{aligned}
\]
The preceding probability bounds depend only on $m_kI_k\to0$ and
$m_k/\mathcal A_k\to0$. Their remainder terms do not depend on the chosen
stopping-time sequence. Taking the infimum over
$\tau_k\in\mathcal C_{\mathcal A_k}$ gives
\[
    R_{n_k,\mathcal A_k}(h)
    \geq
    \mathcal A_k^{1-\zeta}\{1+o(1)\}.
\]
\end{proof}

\section{Proofs of the Main Results}
\label{app:main_result_proofs}

\begin{proof}[Proof of Corollary~\ref{cor:detectability_boundary}]
By Theorem~\ref{thm:ph_pl_local_minimax_interface},
\[
    R_{n_k,\mathcal A_k}(h)
    \geq
    \frac{2}{C_{\Phi,\rho}}
    \frac{\log\mathcal A_k}{\gamma_k^2}
    \{1+o(1)\}.
\]
The oracle stopping time $\tau_{\eta_k}$ satisfies
$\operatorname{ARL}_0(\tau_{\eta_k})\geq\mathcal A_k$. Hence
$\tau_{\eta_k}\in\mathcal C_{\mathcal A_k}$. Combining this fact with the
delay upper bound in the theorem gives
\[
    R_{n_k,\mathcal A_k}(h)
    \leq
    \bar d_{\theta_{n_k}}(\tau_{\eta_k})
    \leq
    \frac{2\log\mathcal A_k}{\gamma_k^2}
    \{1+o(1)\}.
\]
The lower and upper bounds imply
\[
    R_{n_k,\mathcal A_k}(h)
    \asymp
    \frac{\log\mathcal A_k}{\gamma_k^2},
    \qquad
    \bar d_{\theta_{n_k}}(\tau_{\eta_k})
    \asymp
    \frac{\log\mathcal A_k}{\gamma_k^2}.
\]
If $\gamma_k^2\mathcal A_k/\log\mathcal A_k\to\infty$, then
\[
    \frac{\log\mathcal A_k}{\gamma_k^2}
    =
    o(\mathcal A_k).
\]
Thus, the local minimax delay is of smaller order than the ARL constraint,
and the oracle LLR-CUSUM attains the same delay order.
\end{proof}

\begin{proof}[Proof of Lemma~\ref{lem:natural_homology_isomorphism_persistent_betti}]
Commutativity of the diagram gives
\[
    (\phi_b)_*\circ\iota_{K,*}
    =
    \iota_{L,*}\circ(\phi_a)_*.
\]
Since $(\phi_a)_*$ is surjective,
\[
\begin{aligned}
    (\phi_b)_*\bigl(\operatorname{im}\iota_{K,*}\bigr)
    &=
    \operatorname{im}\bigl((\phi_b)_*\circ\iota_{K,*}\bigr) \\
    &=
    \operatorname{im}\bigl(\iota_{L,*}\circ(\phi_a)_*\bigr) \\
    &=
    \operatorname{im}\iota_{L,*}.
\end{aligned}
\]
Moreover, $(\phi_b)_*$ is an isomorphism, so its restriction defines an
isomorphism from $\operatorname{im}\iota_{K,*}$ onto
$\operatorname{im}\iota_{L,*}$. Hence
\[
    \dim\operatorname{im}\iota_{K,*}
    =
    \dim\operatorname{im}\iota_{L,*},
\]
which is precisely
\[
    \beta_q^{a,b}(K_\bullet)
    =
    \beta_q^{a,b}(L_\bullet).
\]
\end{proof}

\begin{proof}[Proof of Proposition~\ref{prop:tree_family}]
\pfstep{Metric realization}
Since $T$ has $m-1$ edges, choose a root $o\in V(T)$ and an orthonormal basis
$\{\xi_e:e\in E(T)\}$ of $\R^{m-1}$. For each $v\in V(T)$, define
\[
    y_v
    =
    \sum_{e\in\operatorname{path}_T(o,v)}\xi_e,
    \qquad
    Y_T=\{y_v:v\in V(T)\},
\]
where $\operatorname{path}_T(o,v)$ denotes the edge set of the unique path
from $o$ to $v$. For any $u,v\in V(T)$, cancellation of the common edges in
their two root paths
leaves exactly one signed basis vector for each edge on the path from $u$ to
$v$.
Consequently,
\[
    \norm{y_u-y_v}_2^2=d_T(u,v).
\]
Thus, $y_u$ and $y_v$ are joined by an edge in $\VR_\varepsilon(Y_T)$ if and
only if
\[
    d_T(u,v)\leq\varepsilon^2.
\]
Because $d_T(u,v)$ is integer valued, the $1$-skeleton of
$\VR_\varepsilon(Y_T)$ is $T^{\lfloor\varepsilon^2\rfloor}$, where $T^k$ is
obtained by joining distinct vertices whose tree distance is at most $k$.

\pfstep{Persistent Betti vectors}
If $\varepsilon<1$, then $\VR_\varepsilon(Y_T)$ consists of $m$ isolated
vertices. Now suppose that $\varepsilon\geq1$ and set
$k=\lfloor\varepsilon^2\rfloor\geq1$. If $v$ is a leaf of $T$ and $u$ is its
unique neighbour, then
\[
    N_{T^k}[v]
    \subseteq
    N_{T^k}[u].
\]
Indeed, if $x=v$, then $d_T(x,u)=1\leq k$. If $x\neq v$ and
$d_T(x,v)\leq k$, the path from $v$ to $x$ passes through $u$, and hence
\[
    d_T(x,u)=d_T(x,v)-1\leq k.
\]

Deleting $v$ leaves all distances among the remaining vertices unchanged.
The graph remaining at each filtration level is therefore the corresponding
tree power of the pruned tree. Repeating this argument gives a common
leaf-deletion order at every level with $\varepsilon\geq1$, and each deletion
is compatible with the filtration inclusions. By
Remark~\ref{rem:dominated_vertex_filtered_collapse} and
Lemma~\ref{lem:natural_homology_isomorphism_persistent_betti}, the subfiltration
for $\varepsilon\geq1$ strongly collapses, compatibly across levels, to the
one-point filtration.

It follows that, for every scale pair $a\leq b$,
\[
    \beta_q^{a,b}(Y_T)
    =
    \begin{cases}
        m, & q=0,\ \varepsilon_b<1,\\
        1, & q=0,\ \varepsilon_b\ge1,\\
        0, & q\ge1.
    \end{cases}
\]
This expression does not depend on the structure of the tree. Hence, for any
$T,T'\in\mathcal T_m$ and every scale pair $(a,b)$,
\[
    B^{a,b}(Y_T)=B^{a,b}(Y_{T'}).
\]

\pfstep{Positive PL spectrum}
When $1\le\varepsilon_a\le\varepsilon_b<\sqrt2$,
\[
    \lfloor\varepsilon_a^2\rfloor
    =
    \lfloor\varepsilon_b^2\rfloor
    =1.
\]
Therefore,
\[
    \VR_{\varepsilon_a}(Y_T)
    =
    \VR_{\varepsilon_b}(Y_T),
\]
and both complexes have $T$ as their $1$-skeleton. By the definition of the
persistent combinatorial Laplacian, $\mathcal L_0^{a,b}(Y_T)$ is the graph
Laplacian $L_T$ of $T$.

Write $L_T=D_T-A_T$, where $D_T$ and $A_T$ are the degree and adjacency
matrices, respectively. Since $A_T$ has zero diagonal and
$\operatorname{tr}(A_T^2)=2|E(T)|$,
\[
\begin{aligned}
    \operatorname{tr}(L_T^2)
    &=
    \operatorname{tr}(D_T^2)
    +
    \operatorname{tr}(A_T^2) \\
    &=
    \sum_{v\in V(T)}\deg_T(v)^2
    +
    2|E(T)|.
\end{aligned}
\]
Every tree on $m$ vertices satisfies $|E(T)|=m-1$. Hence, if
\[
    \sum_{v\in V(T)}\deg_T(v)^2
    \ne
    \sum_{v\in V(T')}\deg_{T'}(v)^2,
\]
then
\[
    \operatorname{tr}(L_T^2)
    \ne
    \operatorname{tr}(L_{T'}^2).
\]
Thus, $L_T$ and $L_{T'}$ have different spectra. Since both trees are
connected, each graph Laplacian has exactly one zero eigenvalue, and their
positive spectra must therefore differ.
\end{proof}

\begin{proof}[Proof of Lemma~\ref{lem:feature_compression_kl}]
Because $F_1=g\circ F_2$, the pair $(F_1,F_2)$ contains exactly the same
information as $F_2$. They therefore have the same KL divergence. Applying the
chain rule for KL divergence to the pair \citep{coverthomas2006} gives
\[
\begin{aligned}
    \KL(P_1^{F_2}\|P_0^{F_2})
    &=
    \KL(P_1^{F_1}\|P_0^{F_1}) \\
    &\quad+
    \sum_y P_1^{F_1}(y)
    \KL\!\left(
        P_1^{F_2\mid F_1=y}
        \middle\|
        P_0^{F_2\mid F_1=y}
    \right).
\end{aligned}
\]
Conditional KL divergences are nonnegative, which proves the inequality.
Equality holds if and only if the conditional KL divergence is zero for every
$y$ such that $P_1^{F_1}(y)>0$. This is equivalent to equality of the two
conditional distributions.
\end{proof}

\section{Finite-Horizon Bounds for the Plug-In Procedure}
\label{app:plugin_bound}

\begin{proposition}
\label{prop:whitened_projection_subgaussian}
For a selected feature map $F$, let $\mathcal X_F$ denote its finite range and
set
\[
    R_F
    =
    \max_{x\in\mathcal X_F}\|x\|_2 .
\]
For any state $k=0,1$, any $r>0$, and any deterministic vector
$a\in\mathbb R^p$ with $\|a\|_2\leq r^{-1/2}$, the following bound holds for
every $\lambda\in\mathbb R$:
\[
    \mathbb E_k
    \exp\!\left[\lambda a^\top(X_n-\mu_k)\right]
    \le
    \exp\left\{\frac{\lambda^2 R_F^2}{2r}\right\}.
\]
\end{proposition}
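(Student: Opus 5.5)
The plan is to reduce the statement to Hoeffding's lemma for a bounded, centred scalar random variable. The only inputs needed are the finite range of $F$ from Proposition~\ref{prop:finite_image_pl_main}, which makes $R_F$ finite, and the Cauchy--Schwarz inequality.

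First I will bound the scalar $Y:=a^\top X_n$. Under state $k$, $X_n$ takes values in $\mathcal X_F$, so $\|X_n\|_2\le R_F$ almost surely. Cauchy--Schwarz together with $\|a\|_2\le r^{-1/2}$ then gives
\[
    |Y|\le\|a\|_2\|X_n\|_2\le R_F r^{-1/2}.
\]
Hence $Y$ lies almost surely in the interval $[-R_F r^{-1/2},\,R_F r^{-1/2}]$, which has length $2R_F r^{-1/2}$. Its mean is $\mathbb E_kY=a^\top\mu_k$. This mean is finite and lies in the same interval, because $\mu_k$ is a convex combination of points of $\mathcal X_F$ and so $\|\mu_k\|_2\le R_F$.

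Next I will apply Hoeffding's lemma. If a random variable $Y$ satisfies $Y\in[\alpha,\beta]$ almost surely, then for every $\lambda\in\mathbb R$,
\[
    \mathbb E\exp\{\lambda(Y-\mathbb E Y)\}\le\exp\{\lambda^2(\beta-\alpha)^2/8\}.
\]
With $\beta-\alpha=2R_Fr^{-1/2}$, the exponent becomes $\lambda^2\cdot4R_F^2/(8r)=\lambda^2R_F^2/(2r)$. Since $Y-\mathbb E_kY=a^\top(X_n-\mu_k)$, this is exactly the stated bound.

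There is no substantive obstacle. The one point to check is that the centring is at the true state mean $\mu_k$ and that $a$ is deterministic, so that Hoeffding's lemma applies to a fixed bounded variable. In the plug-in application, $a=\widehat W^\top\widehat u$ is used conditionally on independent Phase~I data, so it is deterministic given those data. Its norm satisfies $\|a\|_2\le\|\widehat W\|_{\mathrm{op}}\le\widehat\rho^{-1/2}$, which gives the constant $\sigma^2=R_F^2/\widehat\rho$. If a self-contained argument is preferred, I would include the standard convexity proof of Hoeffding's lemma: bound $e^{\lambda y}$ by its chord on $[\alpha,\beta]$, then show that the logarithm of the resulting expression has second derivative at most $(\beta-\alpha)^2/4$.
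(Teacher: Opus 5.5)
Your proposal is correct and follows essentially the same route as the paper: bound the range of $a^\top X_n$ using Cauchy--Schwarz, $\|X_n\|_2\le R_F$ and $\|a\|_2\le r^{-1/2}$, then apply Hoeffding's lemma to this bounded scalar. The only cosmetic difference is that the paper bounds the diameter $|a^\top x-a^\top x'|\le 2R_F\|a\|_2$ and inserts $\|a\|_2\le r^{-1/2}$ at the last step, whereas you bound $|a^\top X_n|$ pointwise; both give an interval of length $2R_F r^{-1/2}$ and hence the same exponent $\lambda^2R_F^2/(2r)$.
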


\begin{proof}
For any two possible values $x$ and $x'$ of $X_n$,
\[
    |a^\top x-a^\top x'|
    \le
    \|a\|_2\|x-x'\|_2
    \le
    2R_F\|a\|_2.
\]
Thus, the range of $a^\top X_n$ has length at most
$2R_F\|a\|_2$. Hoeffding's lemma \citep{hoeffding1963} therefore gives
\[
\begin{aligned}
    \mathbb E_k
    \exp\!\left[\lambda a^\top(X_n-\mu_k)\right]
    &\le
    \exp\left\{
        \frac{\lambda^2R_F^2\|a\|_2^2}{2}
    \right\} \\
    &\le
    \exp\left\{
        \frac{\lambda^2R_F^2}{2r}
    \right\}.
\end{aligned}
\]
\end{proof}

For each $r=0,1$, suppose that the state-$r$ training segment contains
independent window features $X_{r1},\ldots,X_{rN_{\rm tr}}$, and that the
state-$r$ evaluation feature $X$ is independent of the training segment. Write
\[
    R=R_F,
    \qquad
    \|X\|_2\leq R.
\]
In what follows, $\mathbb E_r\widehat s_n$ denotes the conditional expectation
of the score evaluated on a state-$r$ feature, given the training sample.

\begin{proposition}
\label{prop:training_error_event}
If $\rho>0$ and $\gamma_\rho>0$, then there exist constants $c_0$ and $C_0$
depending only on $(R,\rho,\rho_{\rm mult},\gamma_\rho)$ such that, for any
$\delta\in(0,1)$, whenever
\[
    N_{\rm tr}
    \ge
    C_0\{p+\log(1/\delta)\},
\]
the following bounds hold with probability at least $1-\delta$:
\[
    |\widehat\rho-\rho|\leq\frac{\rho}{2},
    \qquad
    \left\|\widehat W(\widehat\mu_1-\widehat\mu_0)\right\|_2
    \ge
    \frac{\gamma_\rho}{2},
\]
and
\[
    \max_{r=0,1}
    \left|\mathbb E_r\widehat s_n-\mathbb E_rs_n\right|
    +
    \left|\widehat c-c\right|
    \le
    c_0\sqrt{\frac{p+\log(1/\delta)}{N_{\rm tr}}}.
\]
\end{proposition}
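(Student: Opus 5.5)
The proof has two parts. First, a high-probability event on which all sample moments are close to their population values. Second, a deterministic perturbation argument that converts closeness of moments into the three stated bounds.

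\emph{Concentration.} Work on the event
\[
\mathcal E=\Bigl\{\max_{r=0,1}\|\widehat\mu_r-\mu_r\|_2\le \epsilon,\ \max_{r=0,1}\|\widehat\Sigma_r-\Sigma_r\|_{\mathrm{op}}\le \epsilon\Bigr\},
\qquad
\epsilon=C\sqrt{\frac{p+\log(1/\delta)}{N_{\rm tr}}} .
\]
For the means, use a $1/2$-net of the unit sphere, of cardinality at most $5^p$. For each net direction $v$, Hoeffding's inequality applies to $v^\top X_{rj}$, whose range has length at most $2R$. A union bound over the net gives $\|\widehat\mu_r-\mu_r\|_2\le 4R\sqrt{(p\log5+\log(8/\delta))/N_{\rm tr}}$. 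For the covariances, decompose
\[
\widehat\Sigma_r-\Sigma_r=\text{(centred empirical second moment)}-(\widehat\mu_r-\mu_r)(\widehat\mu_r-\mu_r)^\top+O(1/N_{\rm tr})\ \text{(the } N_{\rm tr}-1 \text{ normalisation)}.
\]
The first term is controlled by the same net argument applied to the bounded variables $(v^\top(X-\mu_r))^2\le 4R^2$, or alternatively by matrix Bernstein. The condition $N_{\rm tr}\ge C_0\{p+\log(1/\delta)\}$ makes $\epsilon\le C/\sqrt{C_0}$ small; it also makes the quadratic term and the $1/N_{\rm tr}$ term dominated by $\epsilon$. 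Hence $\mathbb P(\mathcal E)\ge1-\delta$, and on $\mathcal E$ we have $\|\widehat\Sigma-\Sigma\|_{\mathrm{op}}\le\epsilon$.

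\emph{Deterministic perturbation on $\mathcal E$.} Since $\operatorname{tr}(\cdot)/p$ is $1$-Lipschitz in operator norm and $\max\{\cdot,\rho_{\min}\}$ is $1$-Lipschitz,
\[
|\widehat\rho-\rho|\le\rho_{\rm mult}\epsilon\le\rho/2
\]
once $C_0$ is large; here $\rho\ge\rho_{\rm mult}\rho_{\min}$ is bounded below. It follows that $\widehat\Sigma+\widehat\rho I$ and $\Sigma+\rho I$ both dominate $(\rho/2)I$.

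The key step is the Lipschitz bound for $A\mapsto A^{-1/2}$ on $\{A\succeq(\rho/2)I\}$. Write
\[
A^{-1/2}=\frac{1}{\pi}\int_0^\infty t^{-1/2}(A+t)^{-1}\,dt ,
\]
and apply the resolvent identity to get
\[
\|\widehat W-W_\rho\|_{\mathrm{op}}\le C_\rho\bigl(\|\widehat\Sigma-\Sigma\|_{\mathrm{op}}+|\widehat\rho-\rho|\bigr)\le C_\rho'\epsilon .
\]
Together with $\|W_\rho\|_{\mathrm{op}}\le\rho^{-1/2}$ and $\|\mu_1-\mu_0\|\le2R$, this yields $\|\widehat W(\widehat\mu_1-\widehat\mu_0)-W_\rho(\mu_1-\mu_0)\|_2\le C_1\epsilon$. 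Choosing $C_0$ so that $C_1\epsilon\le\gamma_\rho/2$ gives the second claimed inequality. Write $v=W_\rho(\mu_1-\mu_0)$ and $\hat v=\widehat W(\widehat\mu_1-\widehat\mu_0)$. Normalisation $v\mapsto v/\|v\|$ is $(2/\min\{\|v\|,\|\hat v\|\})$-Lipschitz, and both norms are at least $\gamma_\rho/2$. Hence $\|\widehat u-u\|_2\le 4C_1\epsilon/\gamma_\rho$.

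Finally, $\mathbb E_r\widehat s_n-\mathbb E_rs_n=\widehat u^\top\widehat W\mu_r-u^\top W_\rho\mu_r$, because the evaluation feature is independent of the training sample. Telescoping gives
\[
\widehat u^\top\widehat W\mu_r-u^\top W_\rho\mu_r=(\widehat u-u)^\top\widehat W\mu_r+u^\top(\widehat W-W_\rho)\mu_r ,
\]
which is bounded by $C\epsilon$ using $\|\widehat W\|\le(\rho/2)^{-1/2}$ and $\|\mu_r\|\le R$. The term $\widehat c-c$ is handled the same way, with the additional piece $\widehat u^\top\widehat W(\widehat\mu_0+\widehat\mu_1-\mu_0-\mu_1)/2$, also of order $\epsilon$. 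Collecting constants, which depend only on $(R,\rho,\rho_{\rm mult},\gamma_\rho)$, gives $c_0$.

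The main obstacle is the covariance concentration with only $p+\log(1/\delta)$ samples and without an explicit dimension factor. For general bounded vectors, matrix Bernstein introduces a $\log p$ factor and the scale $R^2$. I would first try the $\epsilon$-net argument, which works because $|v^\top X|\le R$ uniformly. If that fails to give the stated rate, I would absorb $\log p\le p$ into the constants.
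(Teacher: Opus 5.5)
Your proposal is correct and follows essentially the same route as the paper. Both arguments use a high-probability event controlling $\widehat\mu_r-\mu_r$ and $\widehat\Sigma-\Sigma$, then $|\widehat\rho-\rho|\le\rho_{\rm mult}\|\widehat\Sigma-\Sigma\|_{\mathrm{op}}$, Lipschitz continuity of $A\mapsto A^{-1/2}$ on matrices with eigenvalues at least $\rho/2$, the normalisation inequality once $\|\widehat W(\widehat\mu_1-\widehat\mu_0)\|_2\ge\gamma_\rho/2$, and a telescoping bound for the score and reference-value errors. The differences are minor: you prove the concentration step with $\epsilon$-nets and scalar Hoeffding where the paper cites vector Hoeffding and matrix Bernstein, and the Lipschitz step with the resolvent integral where the paper cites Bhatia.

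One detail to fix: the covariance quadratic forms need a finer net than the $1/2$-net you use for the means. The reduction $\|A\|_{\mathrm{op}}\le(1-2\varepsilon)^{-1}\max_{v\in\mathcal N}|v^\top Av|$ is vacuous at $\varepsilon=1/2$, so take, e.g., a $1/4$-net of size at most $9^p$.
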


\begin{proof}
Set
\[
    \Delta_{\rm tr}
    =
    \sqrt{\frac{p+\log(1/\delta)}{N_{\rm tr}}}.
\]

\pfstep{Concentration of training estimates}
By boundedness of the features, vector Hoeffding concentration
\citep{hoeffding1963}, and the matrix Bernstein inequality
\citep{tropp2012}, there exists a constant $C_R$ depending only on $R$ such
that, with probability at least $1-\delta$,
\[
    \max_{r=0,1}\|\widehat\mu_r-\mu_r\|_2
    +
    \|\widehat\Sigma-\Sigma\|_{\mathrm{op}}
    \le
    C_R\Delta_{\rm tr}.
\]
All remaining bounds are established on this event.

\pfstep{Whitened-direction error}
The ridge rule and the inequality
$|\operatorname{tr}(A)|\leq p\|A\|_{\mathrm{op}}$ give
\[
    |\widehat\rho-\rho|
    \le
    \rho_{\rm mult}\|\widehat\Sigma-\Sigma\|_{\mathrm{op}}
    \le
    C_{R,\rho_{\rm mult}}\Delta_{\rm tr}.
\]
Choose $C_0$ large enough that this bound is at most $\rho/2$. The matrix map
$A\mapsto A^{-1/2}$ is Lipschitz on matrices whose eigenvalues are bounded
below by $\rho/2$ \citep{bhatia1997}. Therefore,
\[
    \|\widehat W-W_\rho\|_{\mathrm{op}}
    \le
    C_\rho
    \left(
        \|\widehat\Sigma-\Sigma\|_{\mathrm{op}}
        +|\widehat\rho-\rho|
    \right)
    \le
    C_{R,\rho,\rho_{\rm mult}}\Delta_{\rm tr}.
\]
Moreover,
\[
    \|\widehat W\|_{\mathrm{op}}\leq(\widehat\rho)^{-1/2}
    \leq\sqrt{2/\rho},
    \qquad
    \|\mu_1-\mu_0\|_2\leq2R.
\]
Define
\[
    a=W_\rho(\mu_1-\mu_0),
    \qquad
    \widehat a=\widehat W(\widehat\mu_1-\widehat\mu_0).
\]
The identity
\[
\begin{aligned}
    \widehat a-a
    &=
    (\widehat W-W_\rho)(\mu_1-\mu_0) \\
    &\quad+
    \widehat W
    \{(\widehat\mu_1-\widehat\mu_0)-(\mu_1-\mu_0)\}
\end{aligned}
\]
gives
\[
    \|\widehat a-a\|_2
    \le
    C_{R,\rho,\rho_{\rm mult}}\Delta_{\rm tr}.
\]
Since $\|a\|_2=\gamma_\rho$, $C_0$ can be chosen so that
$N_{\rm tr}\geq C_0\{p+\log(1/\delta)\}$ implies
\[
    C_{R,\rho,\rho_{\rm mult}}\Delta_{\rm tr}
    \leq
    \frac{\gamma_\rho}{2}.
\]
It follows that
\[
    \|\widehat a\|_2
    \geq
    \frac{\gamma_\rho}{2},
\]
and the normalisation inequality yields
\[
    \|\widehat u-u\|_2
    \leq
    \frac{2}{\gamma_\rho}\|\widehat a-a\|_2
    \le
    C_{R,\rho,\gamma_\rho}\Delta_{\rm tr}.
\]

\pfstep{Score and reference-value errors}
For $r=0,1$,
\[
\begin{aligned}
    |\mathbb E_r\widehat s_n-\mathbb E_rs_n|
    &=
    |(\widehat u^\top\widehat W-u^\top W_\rho)\mu_r| \\
    &\leq
    \|\mu_r\|_2
    \left\{
        \|\widehat W\|_{\mathrm{op}}\|\widehat u-u\|_2
        +
        \|\widehat W-W_\rho\|_{\mathrm{op}}
    \right\} \\
    &\leq
    C_{R,\rho,\gamma_\rho}\Delta_{\rm tr}.
\end{aligned}
\]
From the definitions of $c$ and $\widehat c$,
\[
\begin{aligned}
    \widehat c-c
    &=
    \frac12
    (\widehat u^\top\widehat W-u^\top W_\rho)
    (\mu_0+\mu_1) \\
    &\quad+
    \frac12
    \widehat u^\top\widehat W
    \{(\widehat\mu_0-\mu_0)+(\widehat\mu_1-\mu_1)\}.
\end{aligned}
\]
The preceding bounds for the means, whitening matrices, and directions imply
\[
    |\widehat c-c|
    \leq
    C_{R,\rho,\gamma_\rho}\Delta_{\rm tr}.
\]
Absorbing the constants into $c_0$ proves the proposition.
\end{proof}

Define
\[
    \mathcal E_e
    =
    \left\{
    |\widehat\rho-\rho|\leq\frac{\rho}{2},
    \quad
    \left\|\widehat W(\widehat\mu_1-\widehat\mu_0)\right\|_2
    \geq
    \frac{\gamma_\rho}{2},
    \quad
    \max_{r=0,1}
    \left|\mathbb E_r\widehat s_n-\mathbb E_r s_n\right|
    +
    \left|\widehat c-c\right|
    \leq e
    \right\},
\]
where
\[
    e=c_0\sqrt{\frac{p+\log(1/\delta)}{N_{\rm tr}}}.
\]
Conditional on $\widehat W$, $\widehat u$, and $\widehat c$, the score
$\widehat s_n=\widehat u^\top\widehat W X_n$ satisfies
\[
    \widehat s_n=(\widehat W\widehat u)^\top X_n.
\]
Since $\|\widehat W\|_{\mathrm{op}}\leq\widehat\rho^{-1/2}$ and
$\|\widehat u\|_2=1$, we have
$\|\widehat W\widehat u\|_2\leq\widehat\rho^{-1/2}$.
Proposition~\ref{prop:whitened_projection_subgaussian} therefore shows that
the centred one-step score is $\sigma^2$-sub-Gaussian in both states, where
$\sigma^2=R_F^2/\widehat\rho$.

\begin{proposition}
\label{prop:mdependent_partial_sum}
Conditional on the Phase~I estimates, suppose that the evaluation-score
sequence $(\widehat s_n)$ is $q$-dependent in state $r\in\{0,1\}$. Then
equation~\eqref{eq:dependent_subgaussian_sum} holds with
$C_{\rm dep}=q+1$. In particular, one may take $C_{\rm dep}=1$ for independent
evaluation scores. If the raw observations are independent and the monitoring
windows overlap, one may take $q=g-1$ and $C_{\rm dep}=g$.
\end{proposition}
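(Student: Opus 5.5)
The plan is to split the partial sum into $q+1$ interleaved subsequences and apply Hölder's inequality. Each subsequence consists of independent terms, so its moment generating function factorizes into one-step bounds. Throughout, we condition on $\widehat W,\widehat u,\widehat c$. Since these are built from independent Phase~I data, the conditional law of the evaluation scores in state $r$ is the state-$r$ law, and the scores are $q$-dependent under it. Write $\xi_j=\widehat s_j-\mathbb E_r\widehat s_j$.

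By Proposition~\ref{prop:whitened_projection_subgaussian}, applied with $a=\widehat W\widehat u$ (treated as deterministic after conditioning), $\|a\|_2\le\widehat\rho^{-1/2}$ and $r=\widehat\rho$, each $\xi_j$ satisfies
\[
\mathbb E_r\bigl[e^{\lambda\xi_j}\mid\widehat W,\widehat u,\widehat c\bigr]\le\exp\{\lambda^2\sigma^2/2\}
\]
with $\sigma^2=R_F^2/\widehat\rho$. This is exactly the bound recorded before the present proposition.

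Fix $a\ge1$ and $m\ge1$. For $i=0,\ldots,q$, let $\mathcal I_i=\{j\in\{a,\ldots,a+m-1\}: j\equiv a+i \pmod{q+1}\}$, so that $m_i:=|\mathcal I_i|$ satisfies $\sum_i m_i=m$. Any two indices in $\mathcal I_i$ differ by at least $q+1$, so $q$-dependence makes $\{\xi_j\}_{j\in\mathcal I_i}$ mutually independent. (One applies the definition inductively: the block of earlier indices is independent of the block of later ones, which are separated by a gap larger than $q$.) Hence
\[
\mathbb E\exp\Bigl\{s\sum_{j\in\mathcal I_i}\xi_j\Bigr\}\le\exp\{m_i\sigma^2s^2/2\}\quad\text{for all real } s.
\]

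Next write $\sum_j\xi_j=\sum_{i=0}^q S_i$ with $S_i=\sum_{j\in\mathcal I_i}\xi_j$, and apply the generalized Hölder inequality with $q+1$ equal exponents:
\[
\mathbb E e^{\lambda\sum_i S_i}\le\prod_{i=0}^q\bigl(\mathbb E e^{(q+1)\lambda S_i}\bigr)^{1/(q+1)}\le\prod_i\exp\Bigl\{\frac{m_i\sigma^2(q+1)\lambda^2}{2}\Bigr\}=\exp\Bigl\{\frac{(q+1)m\sigma^2\lambda^2}{2}\Bigr\}.
\]
This is equation~\eqref{eq:dependent_subgaussian_sum} with $C_{\rm dep}=q+1$. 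When $q=0$ no interleaving is needed and the bound holds with $C_{\rm dep}=1$.

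For the overlapping-window case, Proposition~\ref{prop:overlap_decimation_main}(2) with independent raw observations shows that the score sequence $(Y_n)$, and hence $(\widehat s_n)$ (a shift of it by the constant $\widehat c$), is $(g-1)$-dependent given the Phase~I estimates. Taking $q=g-1$ then gives $C_{\rm dep}=g$.

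The main obstacle is bookkeeping rather than analysis. One must check that $q$-dependence of the whole sequence yields joint independence within each residue class, not just pairwise independence. One must also check that conditioning on the Phase~I quantities preserves both the $q$-dependence and the one-step sub-Gaussian bound; this is where the independence of the Phase~I and evaluation data is used.
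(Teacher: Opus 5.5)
Your proposal is correct and follows the paper's argument. You partition the index block into residue classes modulo $q+1$ and use $q$-dependence to get independence within each class. You then combine the generalized H\"older inequality with $q+1$ equal exponents and the one-step $\sigma^2$-sub-Gaussian bound from Proposition~\ref{prop:whitened_projection_subgaussian}. Finally, you invoke Proposition~\ref{prop:overlap_decimation_main} with $q=g-1$ for overlapping windows. Your remarks on mutual rather than pairwise independence within a residue class, and on conditioning on the independent Phase~I data, spell out two points the paper leaves implicit.
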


\begin{proof}
Fix state $r$, and let
$\xi_j=\widehat s_j-\mathbb E_r\widehat s_j$.
Partition $\{a,\ldots,a+m-1\}$ into $I_0,\ldots,I_q$ according to the index
modulo $q+1$. Any two indices in the same $I_\ell$ differ by at least $q+1$,
so the corresponding random variables are independent.

By the generalized H\"older inequality and the $\sigma^2$-sub-Gaussian
property of each score,
\[
\begin{aligned}
\mathbb E_r\exp\!\left\{
\lambda\sum_{j=a}^{a+m-1}\xi_j
\right\}
&\leq
\prod_{\ell=0}^{q}
\left[
\mathbb E_r\exp\!\left\{
(q+1)\lambda\sum_{j\in I_\ell}\xi_j
\right\}
\right]^{1/(q+1)} \\
&\leq
\exp\!\left\{
\frac{(q+1)m\sigma^2\lambda^2}{2}
\right\}.
\end{aligned}
\]
This is equation~\eqref{eq:dependent_subgaussian_sum} with
$C_{\rm dep}=q+1$. Finally,
Proposition~\ref{prop:overlap_decimation_main} shows that overlapping-window
scores are $(g-1)$-dependent when the raw observations are independent.
Hence, one may take $C_{\rm dep}=g$.
\end{proof}

\begin{proposition}
\label{prop:main_plugin_order}
On $\mathcal E_e$, let $\gamma_{\rm eff}:=\gamma_\rho-2e>0$, and suppose that
equation~\eqref{eq:dependent_subgaussian_sum} holds. If
\[
    \eta
    \geq
    \frac{C_{\rm dep}\sigma^2}{\gamma_{\rm eff}}
    \log\frac{N(N+1)}{2\alpha},
\]
then
\[
    \mathbb P_\infty\!\left(
    T_\eta\leq N
    \,\middle|\,
    \widehat W,\widehat u,\widehat c
    \right)
    \leq\alpha,
\]
and
\[
    \mathbb E_\nu\!\left[
    (T_\eta-\nu+1)_+
    \,\middle|\,
    \widehat W,\widehat u,\widehat c
    \right]
    \leq
    \frac{4\eta}{\gamma_{\rm eff}}
    +1+
    \frac{32C_{\rm dep}\sigma^2}{\gamma_{\rm eff}^2}.
\]
In particular, if
$\eta\asymp
C_{\rm dep}\sigma^2\log(N/\alpha)/\gamma_{\rm eff}$ and
$\log(N/\alpha)\ge1$, then the delay bound is
\[
    O\!\left\{
    \frac{C_{\rm dep}\sigma^2\log(N/\alpha)}
    {\gamma_{\rm eff}^2}
    \right\}.
\]
\end{proposition}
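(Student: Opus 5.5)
We prove the two bounds separately. Throughout we condition on the Phase~I estimates and work on $\mathcal E_e$. Write $Y_j=\widehat s_j-\widehat c$ and $\xi_j=\widehat s_j-\mathbb E_r\widehat s_j$. By \eqref{eq:plugin_drift_margin} (which holds on $\mathcal E_e$), the drifts satisfy $\mathbb E_0Y_j\le-\gamma_{\rm eff}/2$ and $\mathbb E_1Y_j\ge\gamma_{\rm eff}/2$, since $\gamma_\rho/2-e=\gamma_{\rm eff}/2$.

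\emph{False-alarm bound.} We use the representation $S_t=\max\{0,\max_{1\le k\le t}\sum_{j=k}^tY_j\}$, as in the proof of Lemma~\ref{lem:cusum_arl}. It gives
\[
\{T_\eta\le N\}=\bigcup_{1\le k\le t\le N}\Bigl\{\textstyle\sum_{j=k}^tY_j\ge\eta\Bigr\}.
\]
There are $N(N+1)/2$ pairs $(k,t)$. For a block of length $m=t-k+1$ under $\mathbb P_\infty$, the sum of the $Y_j$ is at most $\sum_j\xi_j-m\gamma_{\rm eff}/2$. The Chernoff bound with \eqref{eq:dependent_subgaussian_sum} then gives
\[
\mathbb P\Bigl(\textstyle\sum_j\xi_j\ge\eta+m\gamma_{\rm eff}/2\Bigr)\le\exp\Bigl\{-\frac{(\eta+m\gamma_{\rm eff}/2)^2}{2C_{\rm dep}m\sigma^2}\Bigr\}.
\]
The inequality $(\eta+x)^2\ge4\eta x$ applied with $x=m\gamma_{\rm eff}/2$ yields $(\eta+m\gamma_{\rm eff}/2)^2\ge2\eta m\gamma_{\rm eff}$. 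Each term is therefore at most $\exp\{-\eta\gamma_{\rm eff}/(C_{\rm dep}\sigma^2)\}$, uniformly in $m$. The threshold condition makes this at most $2\alpha/\{N(N+1)\}$, and the union bound closes the argument. No mixing beyond \eqref{eq:dependent_subgaussian_sum} is needed, because each block is handled by one moment-generating-function bound.

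\emph{Delay bound.} As in Proposition~\ref{prop:pl_cusum_upper_lan_main}, we have $S_{\nu+r-1}\ge\sum_{j=0}^{r-1}Y_{\nu+j}$. Hence $(T_\eta-\nu+1)_+\le\sigma^{(\nu)}:=\inf\{r:\sum_{j<r}Y_{\nu+j}\ge\eta\}$, and we bound the mean of $\sigma^{(\nu)}$ through its tail. Wald's identity is not available under dependence, so we use the tail sum instead:
\[
\mathbb E\sigma^{(\nu)}\le\sum_{r\ge0}\mathbb P(\sigma^{(\nu)}>r)\le r_0+1+\sum_{r>r_0}\mathbb P\Bigl(\textstyle\sum_{j<r}Y_{\nu+j}<\eta\Bigr).
\]
We take $r_0=\lceil4\eta/\gamma_{\rm eff}\rceil$. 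For $r>r_0$ the mean of the partial sum is $r\gamma_{\rm eff}/2\ge2\eta$. The event then forces $\sum\xi\le-r\gamma_{\rm eff}/4$, whose probability is at most $\exp\{-r\gamma_{\rm eff}^2/(32C_{\rm dep}\sigma^2)\}$ by \eqref{eq:dependent_subgaussian_sum} with negative $\lambda$. Summing the geometric series, with $\sum_{r\ge1}e^{-cr}\le1/c$, gives the remainder $32C_{\rm dep}\sigma^2/\gamma_{\rm eff}^2$. Counting terms carefully gives $4\eta/\gamma_{\rm eff}+1$ for the leading part.

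\emph{Main obstacle: the conditional expectation.} Lorden's criterion requires the expectation given $\mathscr F^Z_{\nu-1}$, whereas \eqref{eq:dependent_subgaussian_sum} is unconditional. The statement, however, asks only for the unconditional $\mathbb E_\nu$ given the estimates. We therefore need the post-change block $\{Y_{\nu+j}\}$ to satisfy \eqref{eq:dependent_subgaussian_sum} under state~1 with mean drift at least $\gamma_{\rm eff}/2$, which holds by stationarity within state~1. The first attempt is to apply the assumption directly to indices $a=\nu,\ldots$. Windows straddling the change are the delicate point. If they are problematic, we would define the state at the window level, as in Section~\ref{sec:problem_setup}, so that every $n\ge\nu$ has law $Q_1$. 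The final order statement then follows by substituting $\eta\asymp C_{\rm dep}\sigma^2\log(N/\alpha)/\gamma_{\rm eff}$ and using $\log(N/\alpha)\ge1$ to absorb the constant and remainder terms.
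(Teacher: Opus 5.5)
Your proposal is correct and follows essentially the same route as the paper: an interval-sum union bound with $(u+v)^2\ge 4uv$ for the false-alarm part, and for the delay a comparison of $(T_\eta-\nu+1)_+$ with the first-passage time of the post-change partial sums, followed by a tail-sum and a sub-Gaussian lower-deviation bound. The paper uses threshold $2\eta/\mu_Y$ with $\mu_Y=\mathbb E_1Y_n$ and substitutes $\mu_Y\ge\gamma_{\rm eff}/2$ only at the end, which is equivalent to your choice. Your off-by-one goes away if you apply the tail bound already from $r=r_0$ and use $\sum_{r\ge1}e^{-cr}\le 1/c$.

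The one step you gloss over is the final absorption. The condition $\log(N/\alpha)\ge1$ alone does not absorb the constant $1$ into $C_{\rm dep}\sigma^2\log(N/\alpha)/\gamma_{\rm eff}^2$. The paper also uses $\gamma_{\rm eff}/2\le\mathbb E_1Y_n\le 2R_F/\sqrt{\widehat\rho}=2\sigma$, so that $\gamma_{\rm eff}\le4\sigma$ and hence $1\le 16C_{\rm dep}\sigma^2\log(N/\alpha)/\gamma_{\rm eff}^2$.
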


\begin{proof}
By the definition of $\mathcal E_e$ and the identities
$c-\mathbb E_0s_n=\mathbb E_1s_n-c=\gamma_\rho/2$,
\[
    \widehat c-\mathbb E_0\widehat s_n
    \geq
    \frac{\gamma_{\rm eff}}{2},
    \qquad
    \mathbb E_1\widehat s_n-\widehat c
    \geq
    \frac{\gamma_{\rm eff}}{2}.
\]
Set $Y_n=\widehat s_n-\widehat c$.

\pfstep{False-alarm bound}
The interval-sum representation of Page's CUSUM gives
\[
    \{T_\eta\leq N\}
    \subseteq
    \bigcup_{1\leq a\leq b\leq N}
    \left\{
        \sum_{j=a}^{b}Y_j\geq\eta
    \right\}.
\]
Fix $1\leq a\leq b\leq N$, and let $m=b-a+1$. Under state $0$,
$\mathbb E_0Y_j\leq-\gamma_{\rm eff}/2$. Hence,
\[
\begin{aligned}
&\mathbb P_\infty\!\left(
    \sum_{j=a}^{b}Y_j\geq\eta
    \,\middle|\,
    \widehat W,\widehat u,\widehat c
\right) \\
&\quad\leq
\mathbb P_\infty\!\left(
    \sum_{j=a}^{b}(Y_j-\mathbb E_0Y_j)
    \geq
    \eta+\frac{m\gamma_{\rm eff}}{2}
    \,\middle|\,
    \widehat W,\widehat u,\widehat c
\right) \\
&\quad\leq
\exp\!\left\{
    -\frac{(\eta+m\gamma_{\rm eff}/2)^2}
    {2C_{\rm dep}m\sigma^2}
\right\}.
\end{aligned}
\]
Since $(u+v)^2\geq4uv$,
\[
    \frac{(\eta+m\gamma_{\rm eff}/2)^2}
    {2C_{\rm dep}m\sigma^2}
    \geq
    \frac{\eta\gamma_{\rm eff}}
    {C_{\rm dep}\sigma^2}.
\]
There are $N(N+1)/2$ intervals in the monitoring horizon. Therefore,
\[
\begin{aligned}
&\mathbb P_\infty\!\left(
    T_\eta\leq N
    \,\middle|\,
    \widehat W,\widehat u,\widehat c
\right) \\
&\quad\leq
\frac{N(N+1)}{2}
\exp\!\left\{
    -\frac{\eta\gamma_{\rm eff}}
    {C_{\rm dep}\sigma^2}
\right\}
\leq\alpha.
\end{aligned}
\]

\pfstep{Delay bound}
Let $\mu_Y=\mathbb E_1Y_n$. Then
$\mu_Y\geq\gamma_{\rm eff}/2$. Define
\[
    \tau_A
    =
    \inf\left\{
        m\geq1:
        \sum_{j=\nu}^{\nu+m-1}Y_j\geq\eta
    \right\}.
\]
The CUSUM statistic is nonnegative at the change point. Hence,
$(T_\eta-\nu+1)_+\leq\tau_A$.

For every integer $n\geq2\eta/\mu_Y$, the event $\{\tau_A>n\}$ implies
$\sum_{j=\nu}^{\nu+n-1}Y_j<\eta$. Therefore,
\[
\begin{aligned}
&\mathbb P_\nu\!\left(
    \tau_A>n
    \,\middle|\,
    \widehat W,\widehat u,\widehat c
\right) \\
&\quad\leq
\mathbb P_\nu\!\left(
    \sum_{j=\nu}^{\nu+n-1}(Y_j-\mu_Y)
    \leq-\frac{n\mu_Y}{2}
    \,\middle|\,
    \widehat W,\widehat u,\widehat c
\right) \\
&\quad\leq
\exp\!\left\{
    -\frac{n\mu_Y^2}{8C_{\rm dep}\sigma^2}
\right\}.
\end{aligned}
\]
Let $n_0=\lceil2\eta/\mu_Y\rceil$. The tail-sum formula gives
\[
\begin{aligned}
\mathbb E_\nu\!\left[
    \tau_A
    \,\middle|\,
    \widehat W,\widehat u,\widehat c
\right]
&\leq
n_0+
\sum_{n=n_0}^{\infty}
\exp\!\left\{
    -\frac{n\mu_Y^2}{8C_{\rm dep}\sigma^2}
\right\} \\
&\leq
\frac{2\eta}{\mu_Y}
+1+
\frac{8C_{\rm dep}\sigma^2}{\mu_Y^2}.
\end{aligned}
\]
Since $\mu_Y\geq\gamma_{\rm eff}/2$,
\[
\mathbb E_\nu\!\left[
    (T_\eta-\nu+1)_+
    \,\middle|\,
    \widehat W,\widehat u,\widehat c
\right]
\leq
\frac{4\eta}{\gamma_{\rm eff}}
+1+
\frac{32C_{\rm dep}\sigma^2}{\gamma_{\rm eff}^2}.
\]

Finally, $\mathbb E_1Y_n\leq2R_F/\sqrt{\widehat\rho}=2\sigma$ and
$\mathbb E_1Y_n\geq\gamma_{\rm eff}/2$, so
$\gamma_{\rm eff}\leq4\sigma$. When $\log(N/\alpha)\geq1$, the constant term
and the last term can both be absorbed into
$C_{\rm dep}\sigma^2\log(N/\alpha)/\gamma_{\rm eff}^2$, which proves the
final claim.
\end{proof}

\begin{corollary}
\label{cor:plugin_unconditional}
Under the conditions of Proposition~\ref{prop:main_plugin_order}, let
$0<\delta<\alpha<1$. If
$\mathbb P(\mathcal E_e)\geq1-\delta$ over the Phase~I training sample and
the control limit satisfies
\[
    \eta
    \geq
    \frac{C_{\rm dep}\sigma^2}{\gamma_{\rm eff}}
    \log\frac{N(N+1)}{2(\alpha-\delta)},
\]
then
\[
    \mathbb P_\infty(T_\eta\leq N)\leq\alpha.
\]
\end{corollary}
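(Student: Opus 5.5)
The plan is to split the unconditional false-alarm probability according to whether the Phase~I estimates fall in the good event $\mathcal E_e$. The evaluation stream is independent of the Phase~I training sample, so I will write
\[
    \mathbb P_\infty(T_\eta\leq N)
    =
    \mathbb E\bigl[
    \mathbb P_\infty(T_\eta\leq N\mid \widehat W,\widehat u,\widehat c)
    \bigr],
\]
where the outer expectation is over the training sample. The integrand will be bounded by $\alpha-\delta$ on $\mathcal E_e$ and by $1$ on $\mathcal E_e^c$.

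On $\mathcal E_e$, the hypotheses of Proposition~\ref{prop:main_plugin_order} hold: we have $\gamma_{\rm eff}>0$, the drift margins, and the sub-Gaussian partial-sum condition~\eqref{eq:dependent_subgaussian_sum}. I will apply that proposition with $\alpha$ replaced by $\alpha-\delta$, which lies in $(0,1)$ because $0<\delta<\alpha<1$. The assumed control limit
\[
    \eta\geq\frac{C_{\rm dep}\sigma^2}{\gamma_{\rm eff}}\log\frac{N(N+1)}{2(\alpha-\delta)}
\]
is exactly the threshold the proposition requires at level $\alpha-\delta$. Its union bound over the $N(N+1)/2$ intervals then gives conditional false-alarm probability at most $\alpha-\delta$. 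This step reuses the earlier proof verbatim, so it is not an obstacle.

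Combining the two cases gives
\[
    \mathbb P_\infty(T_\eta\leq N)
    \leq
    (\alpha-\delta)\,\mathbb P(\mathcal E_e)+\mathbb P(\mathcal E_e^c)
    \leq
    (\alpha-\delta)+\delta
    =\alpha.
\]

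The only delicate point is measurability and randomness of the threshold. Both $\sigma^2=R_F^2/\widehat\rho$ and $\gamma_{\rm eff}$ depend on the training data, so $\eta$ is itself a Phase~I–measurable random variable. I will state that the condition on $\eta$ is imposed pointwise on $\mathcal E_e$. Conditioning on the Phase~I $\sigma$-field then makes $\eta$ a constant in the conditional argument. No requirement on $\eta$ is needed off $\mathcal E_e$, since there the trivial bound $1$ is used.

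Independence between the training and evaluation data justifies treating the conditional law of the evaluation scores as the state-0 stationary law. It also justifies applying~\eqref{eq:dependent_subgaussian_sum} conditionally.
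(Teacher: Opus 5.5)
Your proposal is correct and matches the paper's proof: you condition on the Phase~I sample, apply Proposition~\ref{prop:main_plugin_order} with budget $\alpha-\delta$ on $\mathcal E_e$, use the trivial bound $1$ on $\mathcal E_e^c$, and combine to get $(\alpha-\delta)+\delta=\alpha$. Your remark that the threshold condition is a Phase~I-measurable, pointwise requirement on $\mathcal E_e$ usefully makes explicit what the paper leaves implicit, with one small correction: only $\sigma^2=R_F^2/\widehat\rho$ is random, while $\gamma_{\rm eff}=\gamma_\rho-2e$ is deterministic because $e$ is a fixed constant in the definition of $\mathcal E_e$.
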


\begin{proof}
Because the training and evaluation segments are independent, condition on the
Phase~I training sample. On $\mathcal E_e$, the threshold satisfies the
condition of Proposition~\ref{prop:main_plugin_order} with
$\alpha-\delta$ as the conditional false-alarm budget. Hence, the conditional
false-alarm probability is at most $\alpha-\delta$ on $\mathcal E_e$ and at
most $1$ on $\mathcal E_e^c$. Therefore,
\[
\begin{aligned}
    \mathbb P_\infty(T_\eta\le N)
    \le
    (\alpha-\delta)\mathbb P(\mathcal E_e)
    +\mathbb P(\mathcal E_e^c)
    \le
    \alpha-\delta+\delta
    =
    \alpha.
\end{aligned}
\]
\end{proof}

\clearpage
\section{Additional Experiments and Diagnostics}
\label{app:additional_experiments}

\setcounter{figure}{0}
\renewcommand{\thefigure}{D.\arabic{figure}}
\setcounter{table}{0}
\renewcommand{\thetable}{D.\arabic{table}}

\subsection{Synthetic Experiments and Dimensional Scaling}
\label{app:synthetic_diagnostics}

\noindent\textit{Tree-order selection.}
Proposition~\ref{prop:tree_family} shows that the point clouds corresponding
to the path and star trees have identical persistent Betti vectors but
different positive PL spectra. To choose the tree order for the random
tree-family experiments, define the relative separation based on the first
$J$ positive eigenvalues by
\[
    D_J(m)
    =
    \frac{\norm{\Lambda_J(P_m)-\Lambda_J(S_m)}_2}
    {\left\{
    \norm{\Lambda_J(P_m)}_2^2/2+
    \norm{\Lambda_J(S_m)}_2^2/2
    \right\}^{1/2}}.
\]
Figure~\ref{fig:appendix_tree_order} evaluates
$m=4,\ldots,9999$ and compares the full positive spectrum with the three
truncation levels $J=5,10,20$.

\begin{figure}[!htbp]
\centering
\includegraphics[width=0.88\linewidth]{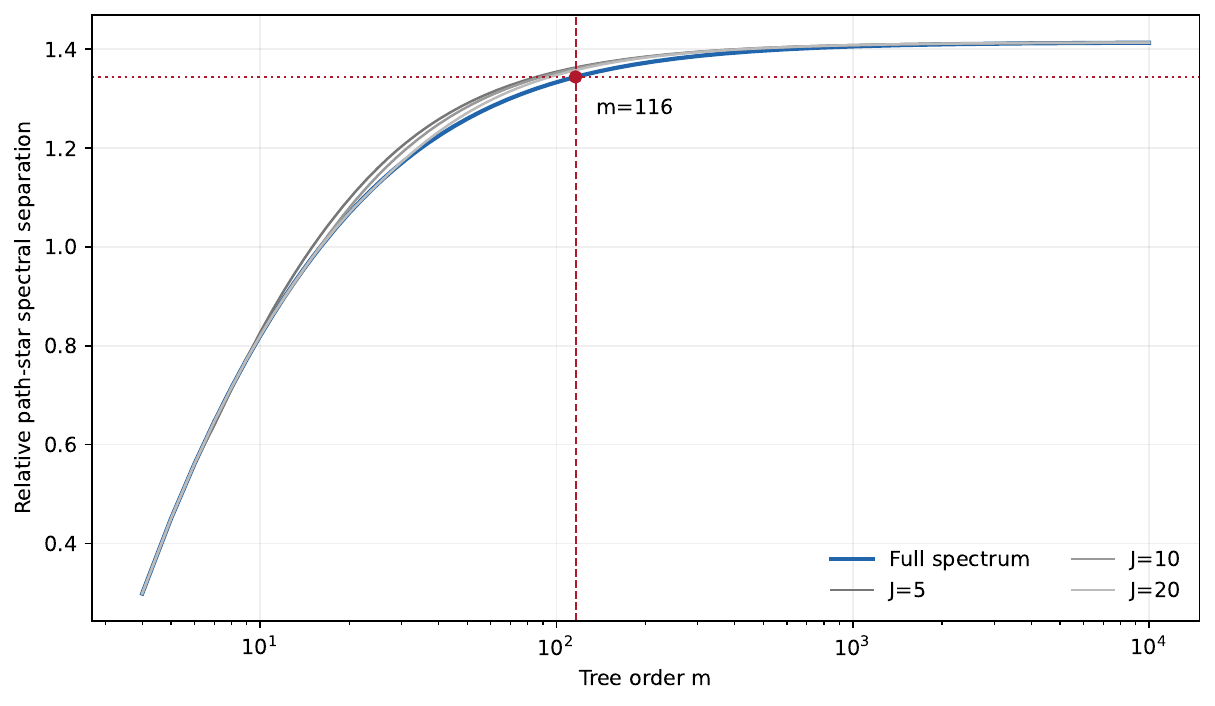}
\caption{Relative positive-spectrum separation between the path and star trees
as the tree order $m$ increases. The red dashed line marks $m=116$, the first
tree order at which the full-spectrum separation reaches 95\% of its
asymptotic value. The grey curves correspond to $J=5,10,20$. By
Proposition~\ref{prop:tree_family}, the persistent Betti vectors remain
identical across the tree structures.}
\label{fig:appendix_tree_order}
\end{figure}
\FloatBarrier

For the full-spectrum curve, $m=116$ is the smallest tree order at which
$D_{\mathrm{full}}(m)$ reaches $0.95\sqrt{2}$. At this point,
$D_{\mathrm{full}}=1.344$, whereas $D_5$, $D_{10}$, and $D_{20}$ are
1.363, 1.361, and 1.358, respectively.

\noindent\textit{Synthetic control-limit calibration.}
To compare the theoretical sufficient control limit with the numerically
calibrated control limit, we consider the nine combinations of
\[
    \Delta\in\{0.30,0.50,0.75\},
    \qquad
    \mathcal A\in\{200,800,3200\}.
\]
For each combination, the control limit is first calibrated by bisection using
2000 state-0 paths. Another 2000 state-0 paths are used to estimate the ARL,
and 1000 paths under the alternative are used to estimate the delay.

\begin{figure}[!htbp]
\centering
\includegraphics[width=\linewidth]{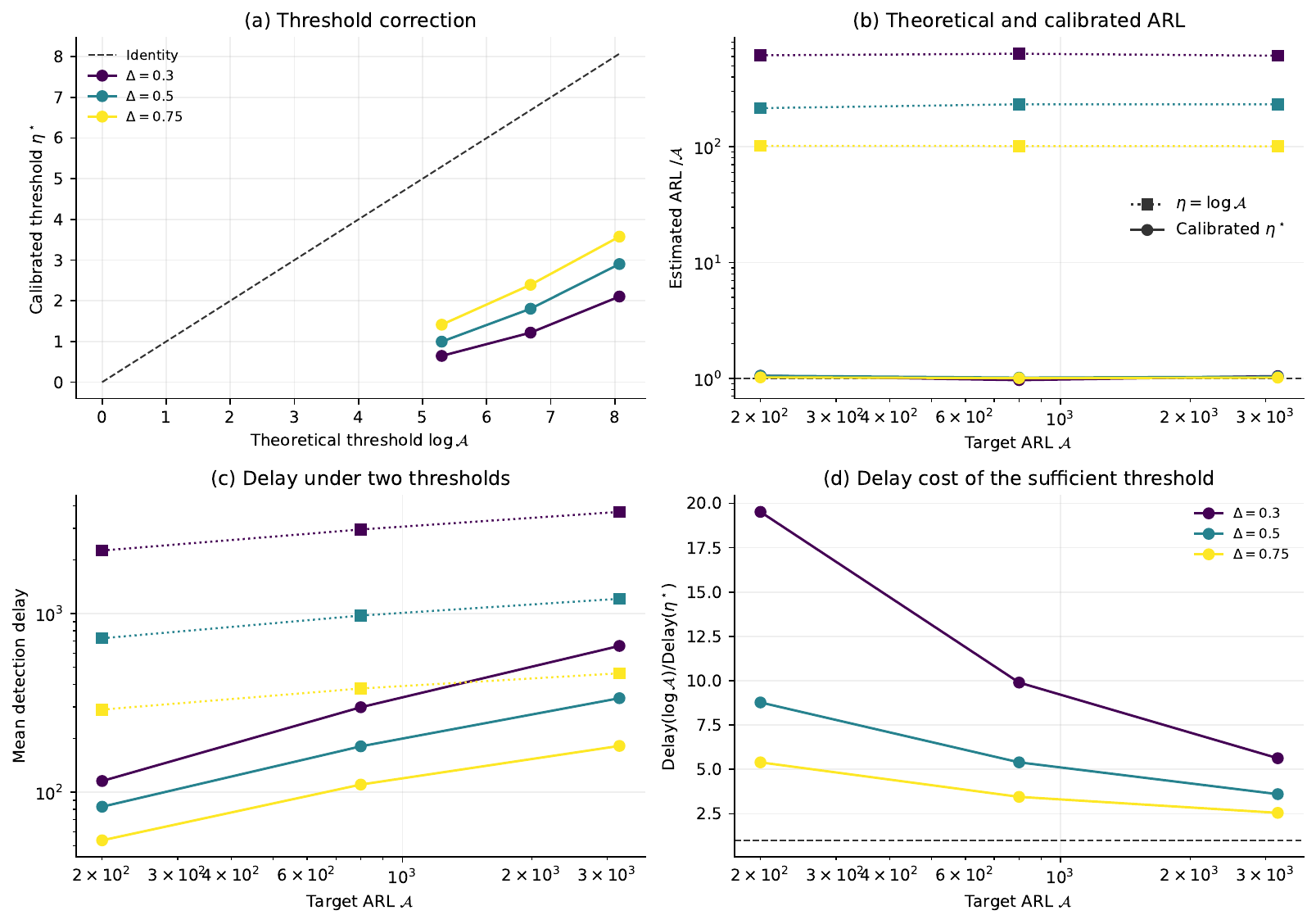}
\caption{Control-limit diagnostics for the synthetic tree family.
(a) The theoretical and numerically calibrated control limits.
(b) The ratio of estimated ARL to target ARL under the two control limits.
(c) Mean detection delay under the two control limits.
(d) The delay penalty from using the theoretical sufficient control limit
instead of the calibrated control limit.}
\label{fig:appendix_synthetic_calibration}
\end{figure}
\FloatBarrier

Figure~\ref{fig:appendix_synthetic_calibration} shows that the theoretical
control limit $\log\mathcal A$ is substantially larger than the calibrated
control limit. The ARL estimates obtained after numerical calibration closely
match their targets, whereas the theoretical control limit produces longer
detection delays. The control limits and ARL estimates for all nine
combinations are reported in
Table~\ref{tab:appendix_synthetic_calibration}.

\begin{table}[!htbp]
\centering
\caption{Control-limit calibration results for the synthetic tree family.
\textit{Ratio} denotes Estimated $\mathrm{ARL}_0/\mathcal A$. The arrows
indicate whether the estimate is above or below its target.}
\label{tab:appendix_synthetic_calibration}
\resizebox{0.88\linewidth}{!}{%
\begin{tabular}{rrrrrrc}
\toprule
$\Delta$ & $\mathcal A$ & $\log\mathcal A$ & $\eta^\star$ & Estimated ARL$_0$ & Ratio & Direction \\
\midrule
0.30 & 200 & 5.298 & 0.645 & 210.6 & 1.053 & $\uparrow$ \\
0.30 & 800 & 6.685 & 1.217 & 773.2 & 0.967 & $\downarrow$ \\
0.30 & 3200 & 8.071 & 2.104 & 3350.5 & 1.047 & $\uparrow$ \\
0.50 & 200 & 5.298 & 0.996 & 210.8 & 1.054 & $\uparrow$ \\
0.50 & 800 & 6.685 & 1.805 & 811.6 & 1.014 & $\uparrow$ \\
0.50 & 3200 & 8.071 & 2.902 & 3309.7 & 1.034 & $\uparrow$ \\
0.75 & 200 & 5.298 & 1.413 & 204.0 & 1.020 & $\uparrow$ \\
0.75 & 800 & 6.685 & 2.392 & 804.1 & 1.005 & $\uparrow$ \\
0.75 & 3200 & 8.071 & 3.576 & 3249.9 & 1.016 & $\uparrow$ \\
\bottomrule
\end{tabular}
}
\end{table}
\FloatBarrier

The ratio of estimated ARL to target ARL ranges from 0.967 to 1.054. The
deviations are small across all nine combinations.

\noindent\textit{Dimensional stress and spectral truncation.}
For a tree of order $m$, the full positive spectrum contains $p=m-1$ positive
eigenvalues. Phase~I uses 80 samples from each state. For the full spectrum,
the population ridge-whitened separation is set to $\gamma_\star=0.5$, and the
ridge multiplier is 2. Let $\widehat v$ be the Fisher direction estimated in
Phase~I. Define its relative population separation by
\[
    R_{\rm sep}
    =
    \frac{\widehat v^\top(\mu_1-\mu_0)}
    {\left\{
    \widehat v^\top(\Sigma+\rho I_p)\widehat v
    \right\}^{1/2}\gamma_\rho}.
\]
Here, $\gamma_\rho$ is the oracle ridge-whitened separation under the same
spectral representation. The value $R_{\rm sep}=1$ means that the estimated
direction attains the oracle separation.

The experiment uses paired replications. Within each replication, only the
truncation level changes. The same Phase~I noise, calibration paths, state-0
paths, and change-point paths are used across truncation levels. We compare
$J=5,10,20$ with the full positive spectrum and use 30 paired replications for
each condition. Figure~\ref{fig:appendix_dimension_stress} evaluates $p=40$
and $p=100,200,\ldots,10000$. The horizontal axis is
$p/(n_0+n_1)$. The dashed line marks $p=n_0+n_1$, where the spectral dimension
equals the total Phase~I sample size.

\begin{figure}[!htbp]
\centering
\includegraphics[width=\linewidth]{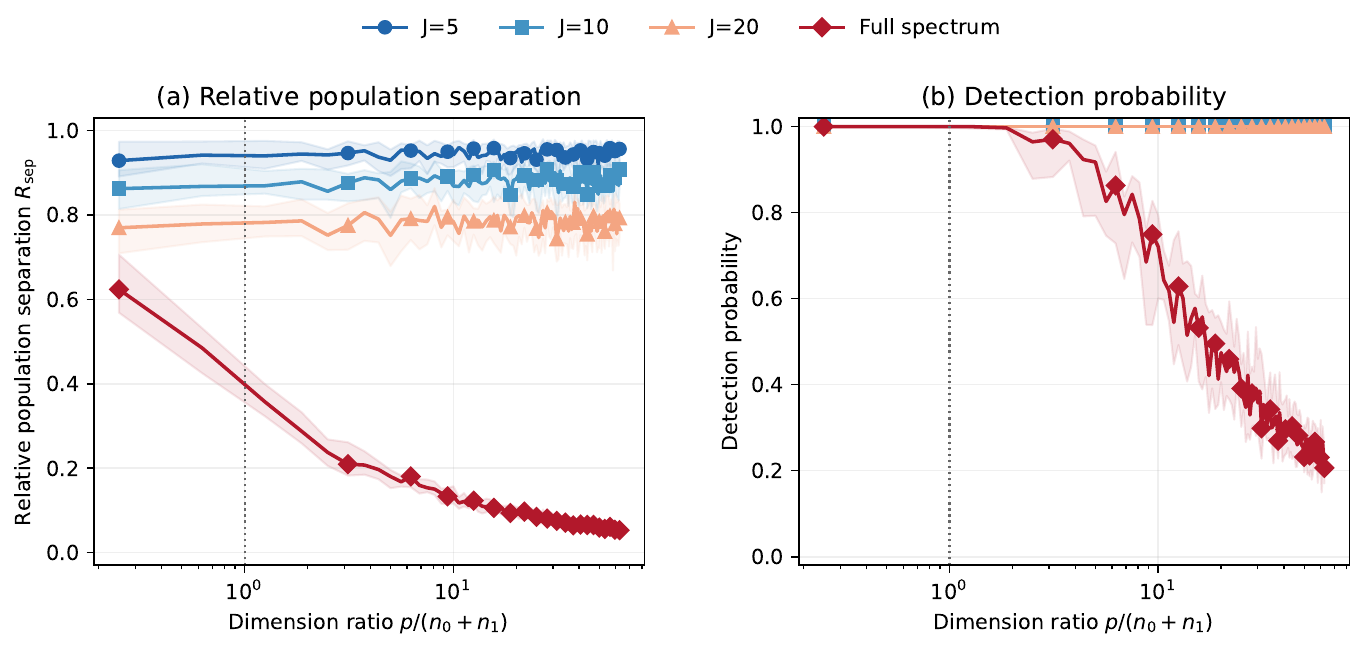}
\caption{Finite-sample performance as the spectral dimension increases
relative to the Phase~I sample size. The experiment uses isotropic Phase~I
noise, $n_0=n_1=80$, $\gamma_\star=0.5$, and a ridge multiplier of 2.
(a) Relative population separation $R_{\rm sep}$.
(b) Detection probability within a monitoring horizon of length 500.
The curves show medians across 30 paired replications, and the shaded bands
show interquartile ranges.}
\label{fig:appendix_dimension_stress}
\end{figure}
\FloatBarrier

As $p$ increases from 40 to 10,000, $R_{\rm sep}$ for the full positive
spectrum decreases from 0.624 to 0.053, and its detection probability
decreases from 1.000 to 0.207. For $J=5,10,20$, the corresponding ranges of
$R_{\rm sep}$ are 0.919--0.961, 0.847--0.909, and 0.743--0.830. Their
detection probabilities remain 1.000 at every dimension.

Table~\ref{tab:appendix_dimension_anchor} gives the full results for two noise
structures, two Phase~I sample sizes, and five spectral dimensions.

\begin{table}[!htbp]
\centering
\caption{Anchor results for the dimensional stress experiment. Each cell under
a truncation level reports $R_{\rm sep}$/detection probability.
Here, $n_0=n_1$ is the Phase~I sample size for each state, and $p$ is the
available full-spectrum dimension.}
\label{tab:appendix_dimension_anchor}
\resizebox{\linewidth}{!}{%
\begin{tabular}{lrrcccc}
\toprule
Noise & $n_0=n_1$ & $p$ & $J=5$ & $J=10$ & $J=20$ & Full \\
\midrule
Isotropic & 40 & 40 & 0.882/1.000 & 0.769/1.000 & 0.622/1.000 & 0.486/1.000 \\
Isotropic & 40 & 400 & 0.877/1.000 & 0.778/1.000 & 0.651/1.000 & 0.182/0.877 \\
Isotropic & 40 & 1000 & 0.861/1.000 & 0.770/1.000 & 0.628/1.000 & 0.116/0.543 \\
Isotropic & 40 & 4000 & 0.916/1.000 & 0.805/1.000 & 0.671/1.000 & 0.062/0.261 \\
Isotropic & 40 & 10000 & 0.923/1.000 & 0.764/1.000 & 0.637/1.000 & 0.040/0.139 \\
Isotropic & 80 & 40 & 0.929/1.000 & 0.863/1.000 & 0.770/1.000 & 0.624/1.000 \\
Isotropic & 80 & 400 & 0.956/1.000 & 0.868/1.000 & 0.750/1.000 & 0.249/0.967 \\
Isotropic & 80 & 1000 & 0.939/1.000 & 0.868/1.000 & 0.773/1.000 & 0.158/0.835 \\
Isotropic & 80 & 4000 & 0.956/1.000 & 0.898/1.000 & 0.790/1.000 & 0.084/0.390 \\
Isotropic & 80 & 10000 & 0.948/1.000 & 0.876/1.000 & 0.765/1.000 & 0.057/0.239 \\
Tail-inflated & 40 & 40 & 0.916/1.000 & 0.856/1.000 & 0.760/1.000 & 0.541/1.000 \\
Tail-inflated & 40 & 400 & 0.942/1.000 & 0.875/1.000 & 0.762/1.000 & 0.155/0.730 \\
Tail-inflated & 40 & 1000 & 0.945/1.000 & 0.879/1.000 & 0.780/1.000 & 0.104/0.434 \\
Tail-inflated & 40 & 4000 & 0.956/1.000 & 0.882/1.000 & 0.786/1.000 & 0.050/0.211 \\
Tail-inflated & 40 & 10000 & 0.950/1.000 & 0.887/1.000 & 0.799/1.000 & 0.033/0.121 \\
Tail-inflated & 80 & 40 & 0.962/1.000 & 0.923/1.000 & 0.834/1.000 & 0.670/1.000 \\
Tail-inflated & 80 & 400 & 0.969/1.000 & 0.925/1.000 & 0.848/1.000 & 0.225/0.967 \\
Tail-inflated & 80 & 1000 & 0.969/1.000 & 0.925/1.000 & 0.878/1.000 & 0.148/0.705 \\
Tail-inflated & 80 & 4000 & 0.977/1.000 & 0.934/1.000 & 0.867/1.000 & 0.072/0.298 \\
Tail-inflated & 80 & 10000 & 0.969/1.000 & 0.941/1.000 & 0.874/1.000 & 0.044/0.164 \\
\bottomrule
\end{tabular}
}
\end{table}
\FloatBarrier

At $p=10{,}000$, the detection probability of the full positive spectrum
ranges from 0.121 to 0.239. The detection probability is 1.000 for
$J=5,10,20$ under every anchor condition. On a separate set of state-0 paths,
the median pre-change alarm probabilities range from 0.047 to 0.054, close to
the target of 0.05. In this experiment, retaining the first 5--20 positive
eigenvalues reduces the dimension of the projection direction estimated in
Phase~I and maintains detection probability when $p\gg n_0+n_1$.

\subsection{Complete Real-Data Results and Detection Mechanisms}
\label{app:real_diagnostics}

\noindent\textit{Complete numerical results.}
Tables~\ref{tab:appendix_real_arl_matrix}--%
\ref{tab:appendix_real_counts_matrix} report the complete results for 11
methods on SWaT and the four Electric Motor conditions. Each task uses 2000
calibration paths, 2000 state-0 evaluation paths, and 1000 change-point paths.

\begin{table}[!htbp]
\centering
\caption{Estimated $\mathrm{ARL}_0$ for SWaT and the four Electric Motor
conditions. The column headings give the target ARLs. The notation
$>133{,}523$ indicates that none of the 2000 state-0 paths of length 200
raised an alarm.}
\label{tab:appendix_real_arl_matrix}
\setlength{\tabcolsep}{2.4pt}
\renewcommand{\arraystretch}{0.92}
\resizebox{\linewidth}{!}{%
\begin{tabular}{l*{15}{r}}
\toprule
Method & \multicolumn{3}{c}{SWaT} & \multicolumn{3}{c}{Motor: no load} & \multicolumn{3}{c}{Motor: load} & \multicolumn{3}{c}{Motor: background} & \multicolumn{3}{c}{Motor: load + background} \\
\cmidrule(lr){2-4}\cmidrule(lr){5-7}\cmidrule(lr){8-10}\cmidrule(lr){11-13}\cmidrule(lr){14-16}
Target ARL & 500 & 1000 & 2000 & 500 & 1000 & 2000 & 500 & 1000 & 2000 & 500 & 1000 & 2000 & 500 & 1000 & 2000 \\
\midrule
PL-CUSUM & $>133523$ & $>133523$ & $>133523$ & $>133523$ & $>133523$ & $>133523$ & $>133523$ & $>133523$ & $>133523$ & 18082 & 18082 & 18082 & $>133523$ & $>133523$ & $>133523$ \\
OK-CUSUM & 540 & 1077 & 1982 & 540 & 1046 & 2211 & 542 & 1017 & 2133 & 534 & 1043 & 2084 & 509 & 955 & 2049 \\
Scan B & 466 & 958 & 1950 & 491 & 963 & 1888 & 474 & 993 & 2171 & 512 & 978 & 1849 & 526 & 1053 & 2072 \\
KCUSUM & 498 & 1014 & 1929 & 507 & 1117 & 2004 & 476 & 848 & 1741 & 549 & 1135 & 2084 & 468 & 981 & 1759 \\
NEWMA & 516 & 1135 & 2399 & 448 & 935 & 1939 & 492 & 1043 & 2308 & 524 & 938 & 1637 & 472 & 975 & 1993 \\
PCA-CUSUM & 588 & 1073 & 2061 & 507 & 1102 & 2146 & 512 & 1084 & 1918 & 467 & 1020 & 2294 & 440 & 862 & 1785 \\
Hotelling $T^2$ & 504 & 1040 & 2446 & 482 & 969 & 2015 & 551 & 1020 & 1939 & 477 & 1040 & 1929 & 522 & 935 & 1888 \\
RAW-CUSUM & 536 & 1135 & 2308 & 503 & 966 & 2108 & 509 & 996 & 1918 & 511 & 1020 & 1960 & 463 & 946 & 1725 \\
GraphScan-kNN & 503 & 1005 & 1960 & 494 & 949 & 2004 & 503 & 1151 & 2197 & 511 & 999 & 1918 & 539 & 1005 & 2197 \\
PCA-BOCPD & 615 & 6679 & 6679 & 556 & 1117 & 2197 & 507 & 1020 & 2353 & 500 & 981 & 1950 & 512 & 996 & 1898 \\
E-GaussianBet & 493 & 1027 & 1869 & 470 & 990 & 2171 & 501 & 1017 & 1929 & 493 & 952 & 1812 & 500 & 978 & 1929 \\
\bottomrule
\end{tabular}
}
\end{table}
\FloatBarrier

\begin{table}[!htbp]
\centering
\caption{Conditional expected detection delay (EDD) for SWaT and the four
Electric Motor conditions. EDD is reported only when Success is at least 0.5.
Bold values indicate the smallest EDD within each task and target ARL.}
\label{tab:appendix_real_edd_matrix}
\setlength{\tabcolsep}{2.4pt}
\renewcommand{\arraystretch}{0.92}
\resizebox{\linewidth}{!}{%
\begin{tabular}{l*{15}{r}}
\toprule
Method & \multicolumn{3}{c}{SWaT} & \multicolumn{3}{c}{Motor: no load} & \multicolumn{3}{c}{Motor: load} & \multicolumn{3}{c}{Motor: background} & \multicolumn{3}{c}{Motor: load + background} \\
\cmidrule(lr){2-4}\cmidrule(lr){5-7}\cmidrule(lr){8-10}\cmidrule(lr){11-13}\cmidrule(lr){14-16}
Target ARL & 500 & 1000 & 2000 & 500 & 1000 & 2000 & 500 & 1000 & 2000 & 500 & 1000 & 2000 & 500 & 1000 & 2000 \\
\midrule
PL-CUSUM & \textbf{0.00} & \textbf{0.00} & \textbf{0.00} & \textbf{0.01} & \textbf{0.01} & \textbf{0.01} & \textbf{0.00} & \textbf{0.00} & \textbf{0.00} & 0.09 & 0.09 & 0.09 & \textbf{0.00} & \textbf{0.00} & \textbf{0.00} \\
OK-CUSUM & 18.70 & 18.89 & 24.06 & 16.28 & 17.33 & 18.34 & 12.38 & 13.33 & 14.27 & 20.25 & 21.52 & 22.69 & 12.73 & 13.65 & 14.60 \\
Scan B & 15.80 & 17.20 & 18.92 & 21.30 & 23.25 & 25.14 & 19.24 & 21.22 & 23.25 & 25.80 & 28.24 & 30.56 & 19.22 & 21.31 & 23.12 \\
KCUSUM & 53.38 & 61.95 & 66.58 & 44.81 & 50.53 & -- & 43.68 & 48.41 & 54.87 & -- & -- & -- & 42.46 & 49.61 & 54.54 \\
NEWMA & -- & -- & -- & -- & -- & -- & -- & -- & -- & -- & -- & -- & -- & -- & -- \\
PCA-CUSUM & -- & -- & -- & 0.73 & 0.91 & 1.07 & 2.57 & 3.08 & 3.53 & \textbf{0.01} & \textbf{0.01} & \textbf{0.02} & -- & -- & -- \\
Hotelling $T^2$ & -- & -- & -- & 12.23 & 14.32 & 16.03 & 4.75 & 5.51 & 6.17 & 23.12 & 26.54 & 29.02 & 5.21 & 5.94 & 6.95 \\
RAW-CUSUM & -- & -- & -- & 0.25 & 0.44 & 0.67 & 0.58 & 0.84 & 1.15 & 0.38 & 0.58 & 0.78 & 0.03 & 0.18 & 0.45 \\
GraphScan-kNN & -- & -- & -- & -- & -- & -- & 12.52 & 13.77 & -- & -- & -- & -- & 10.96 & 12.18 & 13.48 \\
PCA-BOCPD & -- & -- & -- & -- & -- & -- & -- & -- & -- & -- & -- & -- & -- & -- & -- \\
E-GaussianBet & -- & -- & -- & -- & -- & -- & 60.76 & -- & -- & 44.02 & 48.11 & 51.49 & 66.43 & -- & -- \\
\bottomrule
\end{tabular}
}
\end{table}
\FloatBarrier

\begin{table}[!htbp]
\centering
\caption{Success/Failure counts for SWaT and the four Electric Motor
conditions. Each cell is based on 1000 change-point paths. The number of
pre-change alarms is $1000-\mathrm{Success}-\mathrm{Failure}$. Bold values
indicate the largest Success count within each task and target ARL.}
\label{tab:appendix_real_counts_matrix}
\setlength{\tabcolsep}{2.4pt}
\renewcommand{\arraystretch}{0.92}
\resizebox{\linewidth}{!}{%
\begin{tabular}{l*{15}{r}}
\toprule
Method & \multicolumn{3}{c}{SWaT} & \multicolumn{3}{c}{Motor: no load} & \multicolumn{3}{c}{Motor: load} & \multicolumn{3}{c}{Motor: background} & \multicolumn{3}{c}{Motor: load + background} \\
\cmidrule(lr){2-4}\cmidrule(lr){5-7}\cmidrule(lr){8-10}\cmidrule(lr){11-13}\cmidrule(lr){14-16}
Target ARL & 500 & 1000 & 2000 & 500 & 1000 & 2000 & 500 & 1000 & 2000 & 500 & 1000 & 2000 & 500 & 1000 & 2000 \\
\midrule
PL-CUSUM & \textbf{1000/0} & \textbf{1000/0} & \textbf{1000/0} & \textbf{1000/0} & \textbf{1000/0} & \textbf{1000/0} & \textbf{1000/0} & \textbf{1000/0} & \textbf{1000/0} & \textbf{998/0} & \textbf{998/0} & \textbf{998/0} & \textbf{1000/0} & \textbf{1000/0} & \textbf{1000/0} \\
OK-CUSUM & 821/0 & 914/0 & 944/0 & 836/0 & 904/0 & 956/0 & 839/0 & 915/0 & 960/0 & 839/0 & 916/0 & 955/0 & 827/0 & 909/0 & 960/0 \\
Scan B & 817/0 & 910/0 & 958/0 & 839/0 & 919/0 & 954/0 & 860/0 & 933/0 & 967/0 & 850/0 & 916/0 & 961/0 & 873/0 & 934/0 & 972/0 \\
KCUSUM & 825/67 & 768/190 & 602/386 & 598/245 & 544/389 & 462/506 & 682/177 & 634/295 & 563/402 & 471/381 & 379/545 & 287/673 & 686/162 & 644/287 & 558/410 \\
NEWMA & 0/661 & 0/833 & 0/922 & 0/671 & 0/828 & 0/909 & 0/712 & 0/831 & 0/916 & 0/693 & 0/818 & 0/894 & 0/689 & 0/822 & 0/925 \\
PCA-CUSUM & 0/855 & 0/932 & 0/969 & 832/0 & 912/0 & 949/0 & 818/0 & 904/0 & 950/0 & 830/0 & 925/0 & 972/0 & 86/708 & 47/851 & 20/918 \\
Hotelling $T^2$ & 1/822 & 0/924 & 1/964 & 848/0 & 927/0 & 967/0 & 871/0 & 935/0 & 962/0 & 863/0 & 934/0 & 967/0 & 843/0 & 929/0 & 967/0 \\
RAW-CUSUM & 0/832 & 0/909 & 0/960 & 856/0 & 928/0 & 972/0 & 919/0 & 975/0 & 994/0 & 842/0 & 919/0 & 962/0 & 845/0 & 926/0 & 968/0 \\
GraphScan-kNN & 5/701 & 5/839 & 3/913 & 356/477 & 233/691 & 117/841 & 684/152 & 594/323 & 454/498 & 462/387 & 357/571 & 239/719 & 734/114 & 673/237 & 540/417 \\
PCA-BOCPD & 0/849 & 0/989 & 0/989 & 13/837 & 8/920 & 4/957 & 38/786 & 20/885 & 13/942 & 187/649 & 114/805 & 61/894 & 55/764 & 31/873 & 14/936 \\
E-GaussianBet & 0/1000 & 0/1000 & 0/1000 & 404/587 & 254/742 & 175/824 & 560/417 & 414/574 & 304/689 & 962/24 & 930/63 & 874/121 & 581/411 & 410/586 & 282/715 \\
\bottomrule
\end{tabular}
}
\end{table}
\FloatBarrier

PL-CUSUM raises no alarms among the 2000 state-0 paths for SWaT, Motor under
no load, Motor under load, and Motor under combined load and background
vibration. In the Motor background-vibration condition, 22 state-0 paths
raise an alarm, giving an estimated $\mathrm{ARL}_0$ of 18,082. Among the
1000 change-point paths, two raise a pre-change false alarm and the remaining
998 yield successful detections. The conditional EDD is 0.091.

\noindent\textit{Method trajectories and PL increment distributions.}
For each task, we select the change-point path whose PL-CUSUM detection delay
is closest to its median. The 11 methods in
Table~\ref{tab:real_arl1000} are then compared on this path. Let $S_{r,t}$ and
$\eta_r$ denote the monitoring statistic and control limit for method $r$.
Figure~\ref{fig:appendix_method_trajectories} plots
$\widetilde S_{r,t}=(S_{r,t}-\eta_r)/\max_u|S_{r,u}-\eta_r|$.

\begin{figure}[!htbp]
\centering
\includegraphics[width=\linewidth]{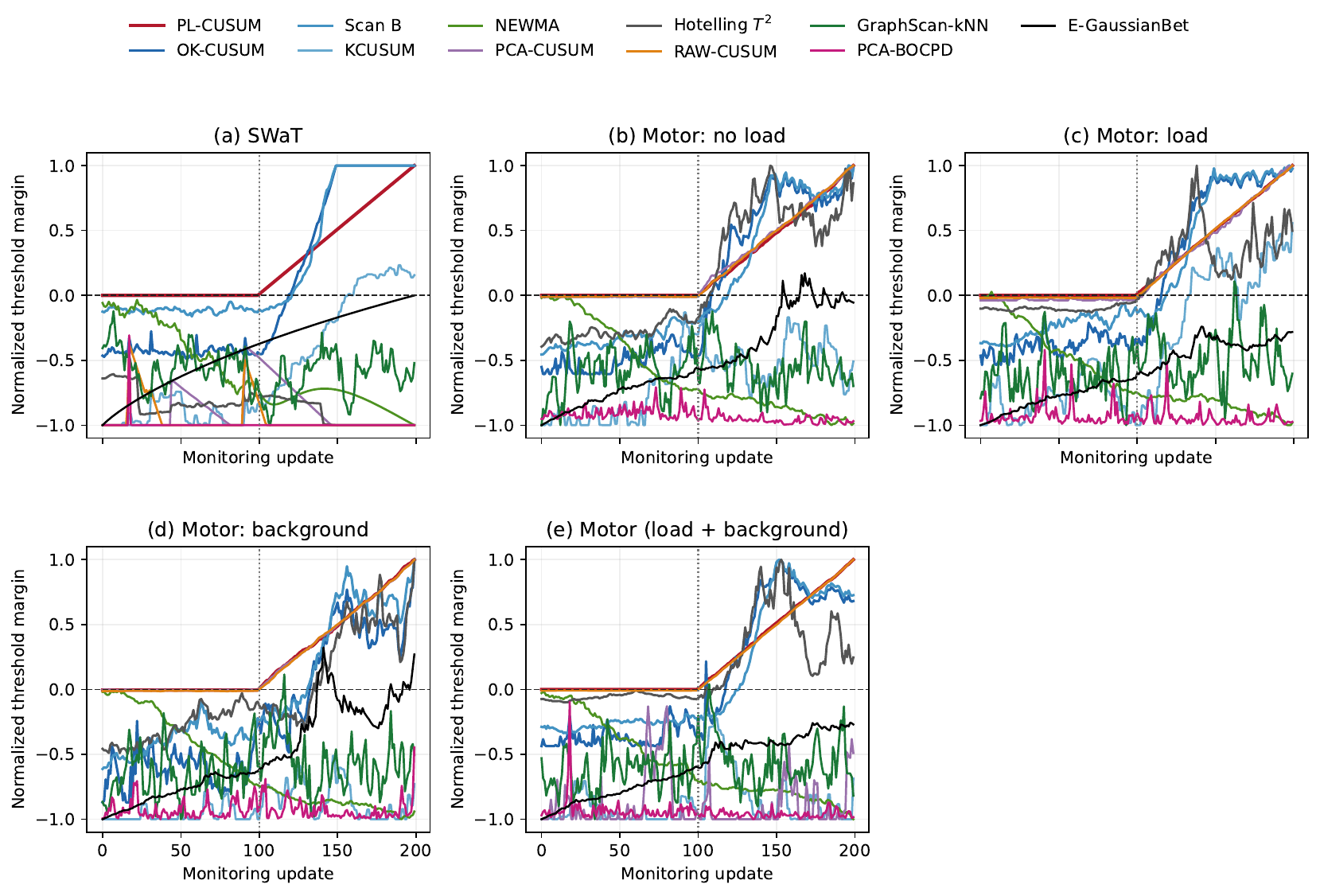}
\caption{Normalized monitoring statistics for 11 methods on SWaT and the four
Electric Motor conditions. The methods in each panel are evaluated on the
same selected change-point path. The vertical dashed line marks the change
point, and the horizontal dashed line marks the control limit.
A value $\widetilde S_{r,t}>0$ indicates that the statistic has crossed its
control limit.}
\label{fig:appendix_method_trajectories}
\end{figure}

PL-CUSUM crosses the control limit at the first post-change window on all five
selected paths. Among the remaining 50 method--task trajectories, two cross
before the change, 28 cross within 0--62 updates after the change, and 20 do
not cross within the monitoring horizon. The overall results for all methods
are reported in Tables~\ref{tab:appendix_real_arl_matrix}--%
\ref{tab:appendix_real_counts_matrix}.
\FloatBarrier

Figure~\ref{fig:appendix_pl_increments} compares the distributions of the
one-step PL increments under states 0 and 1.

\begin{figure}[!htbp]
\centering
\includegraphics[width=\linewidth]{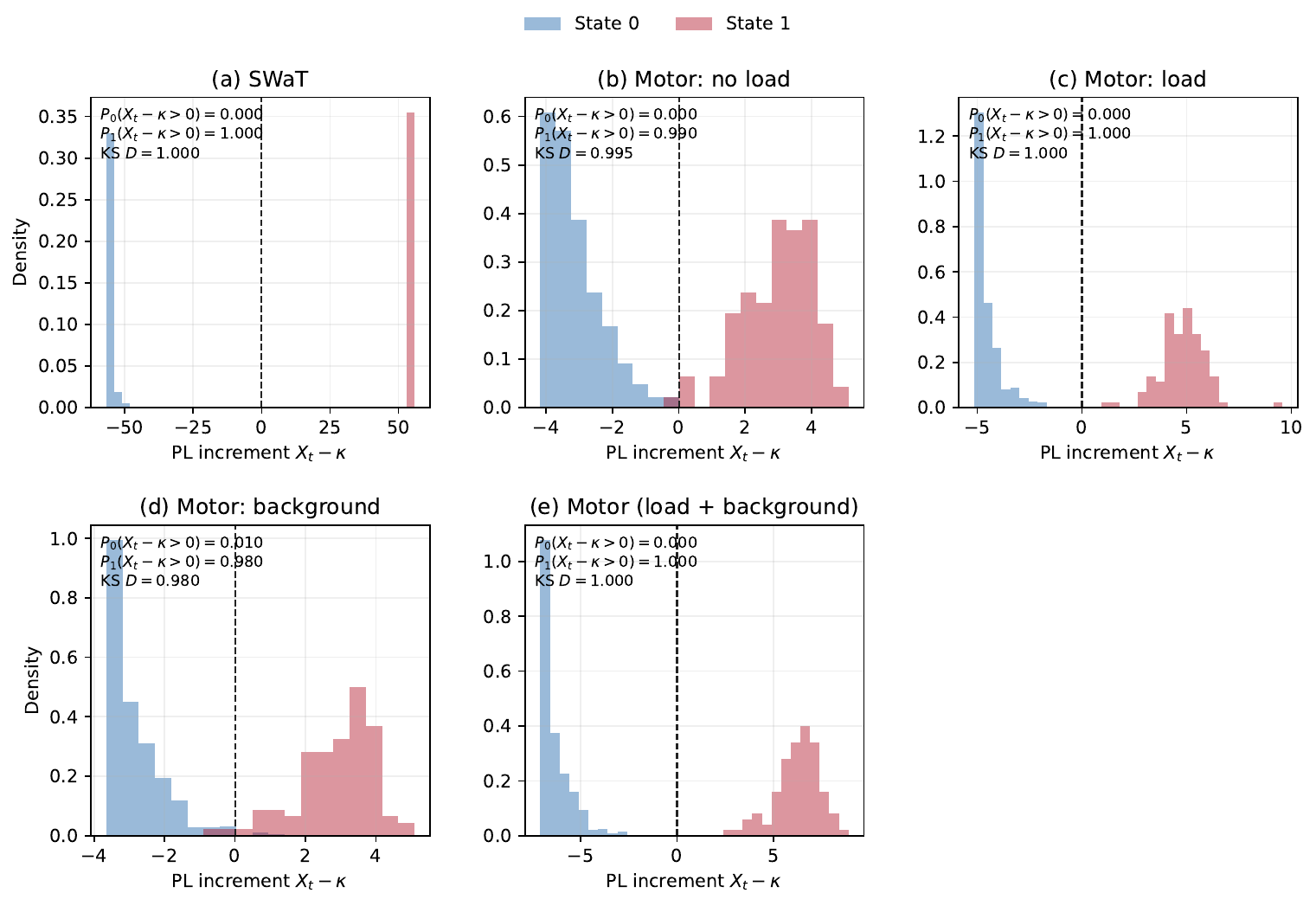}
\caption{Distributions of the one-step PL increment $X_t-\kappa$ under states
0 and 1 for SWaT and the four Electric Motor conditions. SWaT uses 500 state-0
windows, and each Motor condition uses 400 state-0 windows. Each task uses 100
state-1 windows. The dashed line marks zero increment. Each panel reports the
proportions of positive increments and the two-sample KS statistic.}
\label{fig:appendix_pl_increments}
\end{figure}
\FloatBarrier

The KS statistic is 1.000 for SWaT, Motor under load, and Motor under combined
load and background vibration. For Motor under no load, the proportions of
positive increments under states 0 and 1 are 0 and 0.99, and the KS statistic
is 0.995. For Motor under background vibration, the corresponding values are
0.01, 0.98, and 0.980. The state-0 increments lie mainly below zero, whereas
the state-1 increments lie mainly above zero. This separation explains the
low false-alarm rates and short detection delays of PL-CUSUM on these tasks.

\subsection{Experimental Protocol and Method Parameters}
\label{app:protocol_details}

Table~\ref{tab:appendix_data_protocol} summarizes the window dimensions, pool
sizes, reference windows, bootstrap block lengths, and numbers of calibration,
state-0 evaluation, and change-point paths. The Electric Motor row applies to
all four operating conditions. The data partitions are formed at the raw
sequence level before the windows are generated.

\begin{table}[!htbp]
\centering
\caption{Protocol for the real-data experiments. Phase~I 0/1 gives the numbers
of training windows under states 0 and 1. Cal.\ pool and Alt.\ pool denote the
state-0 calibration pool and state-1 evaluation pool, respectively. Ref.\
gives the number of state-0 reference windows used by the comparison methods.}
\label{tab:appendix_data_protocol}
\resizebox{\linewidth}{!}{%
\begin{tabular}{lrrrrrrrrrr}
\toprule
Data & $w$ & $d$ & Stride & Phase I 0/1 & Cal. pool & Alt. pool & Ref. & Block & $M_{\rm cal}$ & State-0/Change \\
\midrule
SWaT & 32 & 51 & 16 & 100/100 & 500 & 100 & 750 & 2 & 2000 & 2000/1000 \\
Motor (four settings) & 64 & 3 & 64 & 100/100 & 400 & 100 & 750 & 1 & 2000 & 2000/1000 \\
\bottomrule
\end{tabular}
}
\end{table}
\FloatBarrier

Table~\ref{tab:appendix_method_parameters} lists the parameters used by
PL-CUSUM and the comparison methods.

\begingroup
\scriptsize
\setlength{\LTleft}{0pt}
\setlength{\LTright}{0pt}
\setlength{\tabcolsep}{4pt}
\renewcommand{\arraystretch}{0.96}
\begin{longtable}{L{0.22\linewidth}L{0.72\linewidth}}
\caption{Method parameters for the real-data experiments.}
\label{tab:appendix_method_parameters}\\
\toprule
Method or setting & Parameters \\
\midrule
\endfirsthead
\toprule
Method or setting & Parameters \\
\midrule
\endhead
\bottomrule
\endfoot
\multicolumn{2}{l}{\textit{Task-specific PL-CUSUM configurations}} \\
\addlinespace[0.25em]
SWaT & $L=1$; full PL input; $J=31$; 8 scale pairs; selected pair $(14.051,2906.380)$; ridge multiplier $2$ \\
Motor: no load & $L=6$; full PL input; $J=58$; 4 scale pairs; selected pair $(6.405,7.209)$; ridge multiplier $0.125$ \\
Motor: load & $L=4$; full PL input; $J=60$; 8 scale pairs; selected pair $(6.353,6.990)$; ridge multiplier $0.125$ \\
Motor: background & $L=4$; full PL input; $J=60$; 4 scale pairs; selected pair $(5.866,6.924)$; ridge multiplier $0.125$ \\
Motor: load + background & $L=3$; full PL input; $J=61$; 4 scale pairs; selected pair $(5.112,6.178)$; ridge multiplier $0.125$ \\
\midrule
\multicolumn{2}{l}{\textit{Baseline parameter rules}} \\
\addlinespace[0.25em]
OK-CUSUM / Scan B & $B_{\min}=2$; $B_{\max}=w=50$; $N=15$; block step 2; task-specific median-distance Gaussian RBF bandwidth \\
KCUSUM & drift $0.02$; task-specific median-distance Gaussian RBF bandwidth \\
RBF bandwidths & In the order SWaT, Motor no load, load, background, and load + background: OK-CUSUM/Scan B use $(100317.596,17.480,17.142,18.194,17.459)$; KCUSUM uses $(101702.437,17.498,17.180,18.192,17.466)$ \\
NEWMA & $B=50$; fast and slow forgetting factors $0.03363$ and $0.01010$; 130 random features; bandwidth parameter determined by the upstream formula \\
PCA-CUSUM & 20 principal components; ridge $0.01$ \\
Hotelling $T^2$ & window length 50; ridge multiplier $0.01$ \\
RAW-CUSUM & standardized window-vector energy; ridge $10^{-6}$ \\
GraphScan-kNN & $k=4$; history length $L=16$; recent length $w=8$; standardized coordinates \\
PCA-BOCPD & four principal components; hazard mean 50; maximum run length 256; alarm radius 5 \\
E-GaussianBet & automatic projection; two-sided Gaussian bet; SR recursion; $\eta\in\{0.05,0.1,0.2,0.4,0.8\}$ \\
\end{longtable}

\endgroup

\end{appendices}

\end{document}